\definecolor{codegreen}{rgb}{0,0.6,0}
\definecolor{codegray}{rgb}{0.5,0.5,0.5}
\definecolor{codepurple}{rgb}{0.58,0,0.82}
\definecolor{backcolour}{rgb}{0.95,0.95,0.92}
\lstdefinestyle{mystyle}{
    backgroundcolor=\color{backcolour},   
    commentstyle=\color{codegreen},
    keywordstyle=\color{magenta},
    numberstyle=\tiny\color{codegray},
    stringstyle=\color{codepurple},
    basicstyle=\ttfamily\footnotesize,
    breakatwhitespace=false,         
    breaklines=true,                 
    captionpos=b,                    
    keepspaces=true,                 
    numbers=left,                    
    numbersep=5pt,                  
    showspaces=false,                
    showstringspaces=false,
    showtabs=false,                  
    tabsize=2
}
\newcommand{\symdiff}{{\bigtriangleup}}
\definecolor{maroon}{cmyk}{0, 0.87, 0.68, 0.32}
\definecolor{halfgray}{gray}{0.55}
\definecolor{ipython_frame}{RGB}{207, 207, 207}
\definecolor{ipython_bg}{RGB}{247, 247, 247}
\definecolor{ipython_red}{RGB}{186, 33, 33}
\definecolor{ipython_green}{RGB}{0, 128, 0}
\definecolor{ipython_cyan}{RGB}{64, 128, 128}
\definecolor{ipython_purple}{RGB}{170, 34, 255}
\definecolor[named]{lipicsYellow}{rgb}{0.99,0.78,0.07}
\lstdefinelanguage{iPython}{
    morekeywords={access,and,break,class,continue,def,del,elif,else,except,exec,finally,for,from,global,if,import,in,is,lambda,not,or,pass,print,raise,return,try,while},%
    %
    morekeywords=[2]{abs,all,any,basestring,bin,bool,bytearray,callable,chr,classmethod,cmp,compile,complex,delattr,dict,dir,divmod,enumerate,eval,execfile,file,filter,float,format,frozenset,getattr,globals,hasattr,hash,help,hex,id,input,int,isinstance,issubclass,iter,len,list,locals,long,map,max,memoryview,min,next,object,oct,open,ord,pow,property,range,raw_input,reduce,reload,repr,reversed,round,set,setattr,slice,sorted,staticmethod,str,sum,super,tuple,type,unichr,unicode,vars,xrange,zip,apply,buffer,coerce,intern},%
    sensitive=true,%
    morecomment=[l]\#,%
    morestring=[b]',%
    morestring=[b]",%
    morestring=[s]{'''}{'''},
    morestring=[s]{"""}{"""},
    morestring=[s]{r'}{'},
    morestring=[s]{r"}{"},%
    morestring=[s]{r'''}{'''},%
    morestring=[s]{r"""}{"""},%
    morestring=[s]{u'}{'},
    morestring=[s]{u"}{"},%
    morestring=[s]{u'''}{'''},%
    morestring=[s]{u"""}{"""},%
    %
    literate=
    {á}{{\'a}}1 {é}{{\'e}}1 {í}{{\'i}}1 {ó}{{\'o}}1 {ú}{{\'u}}1
    {Á}{{\'A}}1 {É}{{\'E}}1 {Í}{{\'I}}1 {Ó}{{\'O}}1 {Ú}{{\'U}}1
    {à}{{\`a}}1 {è}{{\`e}}1 {ì}{{\`i}}1 {ò}{{\`o}}1 {ù}{{\`u}}1
    {À}{{\`A}}1 {È}{{\'E}}1 {Ì}{{\`I}}1 {Ò}{{\`O}}1 {Ù}{{\`U}}1
    {ä}{{\"a}}1 {ë}{{\"e}}1 {ï}{{\"i}}1 {ö}{{\"o}}1 {ü}{{\"u}}1
    {Ä}{{\"A}}1 {Ë}{{\"E}}1 {Ï}{{\"I}}1 {Ö}{{\"O}}1 {Ü}{{\"U}}1
    {â}{{\^a}}1 {ê}{{\^e}}1 {î}{{\^i}}1 {ô}{{\^o}}1 {û}{{\^u}}1
    {Â}{{\^A}}1 {Ê}{{\^E}}1 {Î}{{\^I}}1 {Ô}{{\^O}}1 {Û}{{\^U}}1
    {œ}{{\oe}}1 {Œ}{{\OE}}1 {æ}{{\ae}}1 {Æ}{{\AE}}1 {ß}{{\ss}}1
    {ç}{{\c c}}1 {Ç}{{\c C}}1 {ø}{{\o}}1 {å}{{\r a}}1 {Å}{{\r A}}1
    {€}{{\EUR}}1 {£}{{\pounds}}1
    {^}{{{\color{ipython_purple}\^{}}}}1
    {=}{{{\color{ipython_purple}=}}}1
    {+}{{{\color{ipython_purple}+}}}1
    {*}{{{\color{ipython_purple}$^\ast$}}}1
    {/}{{{\color{ipython_purple}/}}}1
    {+=}{{{+=}}}1
    {-=}{{{-=}}}1
    {*=}{{{$^\ast$=}}}1
    {/=}{{{/=}}}1,
    literate=
    *{-}{{{\color{ipython_purple}-}}}1
     {?}{{{\color{ipython_purple}?}}}1,
    identifierstyle=\color{black}\ttfamily,
    commentstyle=\color{ipython_cyan}\ttfamily,
    stringstyle=\color{ipython_red}\ttfamily,
    keepspaces=true,
    showspaces=false,
    showstringspaces=false,
    %
    rulecolor=\color{ipython_frame},
    frame=single,
    frameround={t}{t}{t}{t},
    xleftmargin=.1\textwidth, xrightmargin=.1\textwidth,
    numbers=left,
    numberstyle=\tiny\color{halfgray},
    backgroundcolor=\color{ipython_bg},
    basicstyle=\scriptsize,
    keywordstyle=\color{ipython_green}\ttfamily,
}
\newcommand{\Cay}{\textsf{Cay}}
\DeclareMathOperator{\dist}{\textsf{dist}}
\DeclareMathOperator{\diam}{\textsf{diam}}
\renewcommand{\leq}{\leqslant}
\renewcommand{\geq}{\geqslant}
 \pgfplotsset{compat=1.18}
\title{Price of Locality in Permutation Mastermind:\texorpdfstring{\\}{ } Are TikTok influencers Chaotic Enough?} 
\titlerunning{Price of Locality in Permutation Mastermind}
\author{Bernardo Subercaseaux}{Carnegie Mellon University}{bersub@cmu.edu}{https://orcid.org/0000-0003-2295-1299}{}
\authorrunning{B. Subercaseaux} 
\keywords{Permutation Mastermind, Locality, NP-hard} 
\begin{document}

\maketitle 
\begin{abstract}
In the permutation Mastermind game, the goal is to uncover a secret permutation $\sigma^\star \colon [n] \to [n]$ by making a series of guesses $\pi_1, \ldots, \pi_T$ which must also be permutations of $[n]$, and receiving as feedback after guess $\pi_t$ the number of positions $i$ for which $\sigma^\star(i) = \pi_t(i)$.
While the existing literature on permutation Mastermind suggests strategies in which $\pi_t$ and $\pi_{t+1}$ might be widely different permutations, a resurgence in popularity of this game as a TikTok trend shows that humans (or at least TikTok influencers) use strategies in which consecutive guesses are very similar. For example, it is common to see players attempt one transposition at a time and slowly see their score increase.
Motivated by these observations, we study the theoretical impact of two forms of \emph{locality} in permutation Mastermind strategies: $\ell_k$-local strategies, in which any two consecutive guesses differ in at most $k$ positions, and the even more restrictive class of $w_k$-local strategies, in which consecutive guesses differ in a window of length at most $k$.
We show that, in broad terms, the optimal number of guesses for local strategies is quadratic, and thus much worse than the $O(n \lg n)$ guesses that suffice for non-local strategies. 
We also show NP-hardness of the satisfiability version for $\ell_3$-local strategies, whereas in the $\ell_2$-local variant the problem admits a randomized polynomial-time algorithm.
\end{abstract}

\section{Introduction}

The classical game of Mastermind, and its many variations, have received significant attention from computer scientists and mathematicians since the 1970s~\cite{Knuth1977TheCA,doerr2016playing,martinsson2024mastermind,jager2009number, el2020exact,afshani2019query,goodrich2009algorithmic, larcher2022solvingstaticpermutationmastermind}. However, it is only in the last couple of years that the \emph{permutation black-peg Mastermind} variant has reached internet virality through a TikTok challenge often named ``bottle match challenge'' or ``color match challenge''. The goal of the challenge is to guess a secret permutation of colored bottles, based on guesses that also consist of permutations of an equivalent set of colored bottles. To each guess, a different player that we will call \emph{codemaker}, responds with the number of positions for which the guess matches the secret permutation. In the TikTok trend, the secret permutation is hidden from the player under a table or a box, but visible to the viewer throughout the game (see~\Cref{fig:tiktok}). Perhaps the success of the trend can be explained by paraphrasing filmmaker Alfred Hitchcock: suspense is not about a bomb exploding on scene, but rather about the audience watching people talk about trivial matters while a bomb ticks underneath their table.  
 
\begin{figure}[ht]
    \centering
    \begin{tikzpicture}
    \def\fwidth{1.9cm}
    \def\hsp{0.3}
            
        \node[draw, line width=2pt, inner sep=0pt] (img1) at (0, 0) {
        \includegraphics[height=\fwidth]{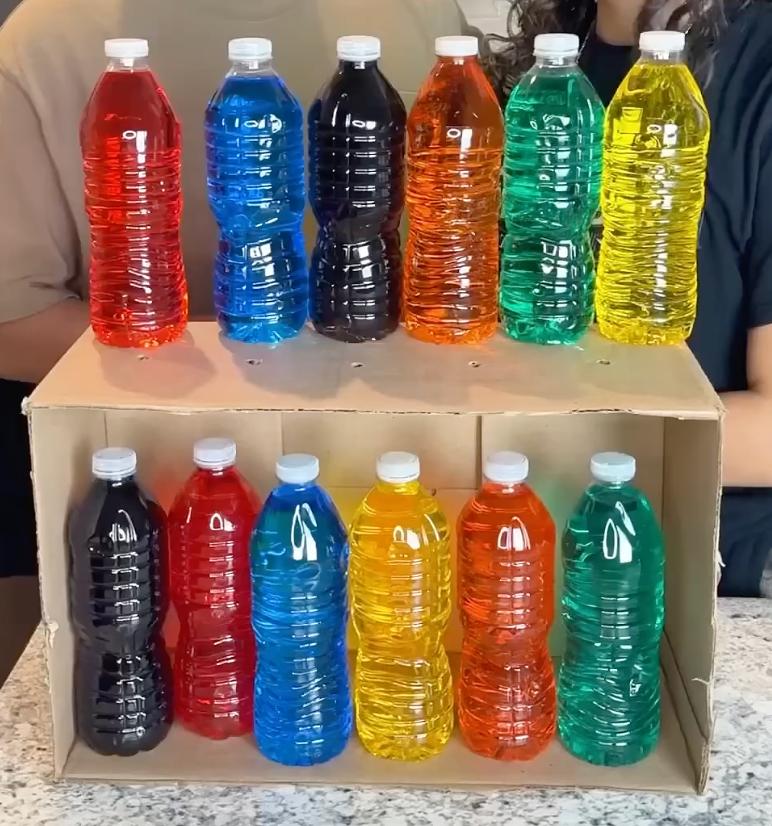}
        };
    \node at (img1.north)[above=0.2cm] {$(2 \, 3 \, 1 \, 5 \, 6\, 4)$};
     \node at (img1.south)[below=0.1cm] {$b_1 = 0$};
        
         \node[draw, line width=2pt, inner sep=0pt] (img2) at (2.5+\hsp, 0) {
        \includegraphics[height=\fwidth]{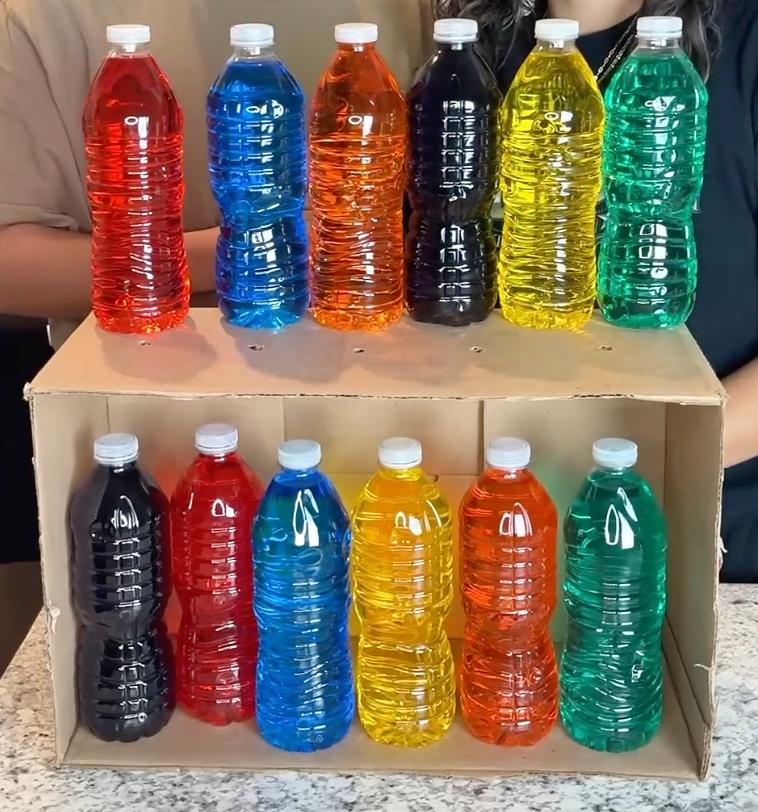}
        };

     \node at (img2.north)[above=0.2cm] {$(2 \, 3 \, 5 \, 1 \, 6\, 4)$};

     \node at (img2.south)[below=0.1cm] {$b_2 = 1$};
        
         \node[draw, line width=2pt, inner sep=0pt] (img3) at (5+2*\hsp, 0) {
        \includegraphics[height=\fwidth]{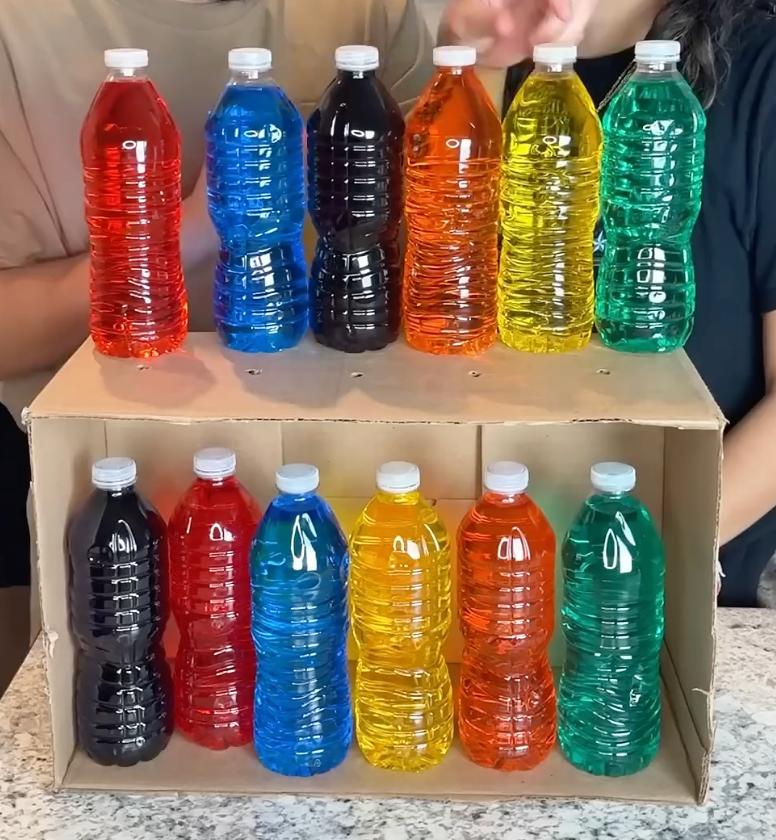}
        };

     \node at (img3.north)[above=0.2cm] {$(2 \, 3 \, 1 \, 5 \, 4\, 6)$};

      \node at (img3.south)[below=0.1cm] {$b_3 = 1$};

        \node (el) at (7.25+\hsp, 0) {\huge $\cdots$};

        \node[draw, line width=2pt, inner sep=0pt] (img4) at (9 +\hsp, 0) {
        \includegraphics[height=\fwidth]{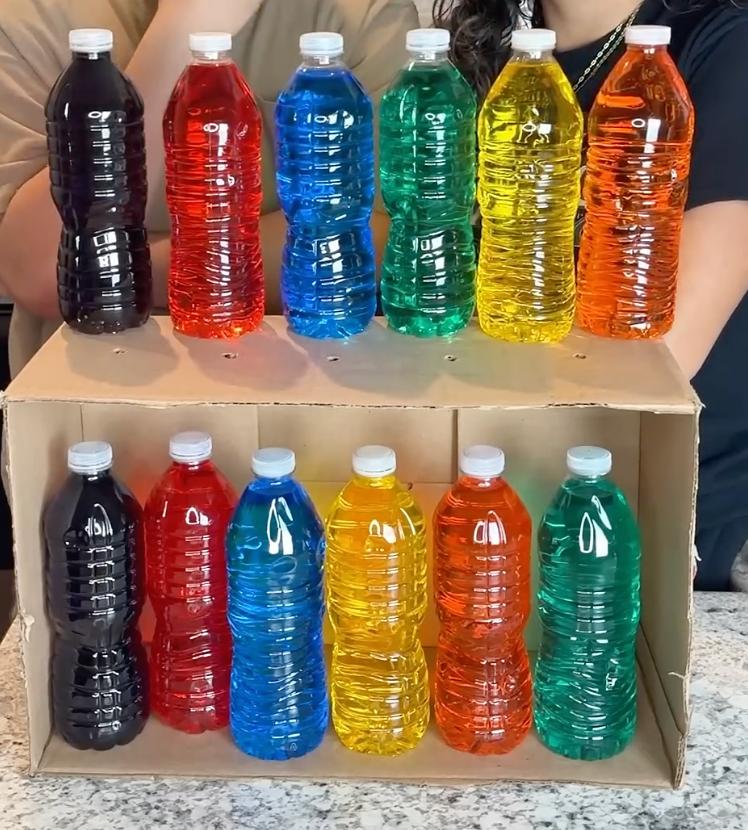}
        };

 \node at (img4.north)[above=0.2cm]
{$(1 \, 2 \, 3 \, 6 \, 4\, 5)$};

\node at (img4.south)[below=0.1cm] {$b_7 = 3$};

         \node[draw, line width=2pt, inner sep=0pt] (img5) at (11.5 +2*\hsp, 0) {
        \includegraphics[height=\fwidth]{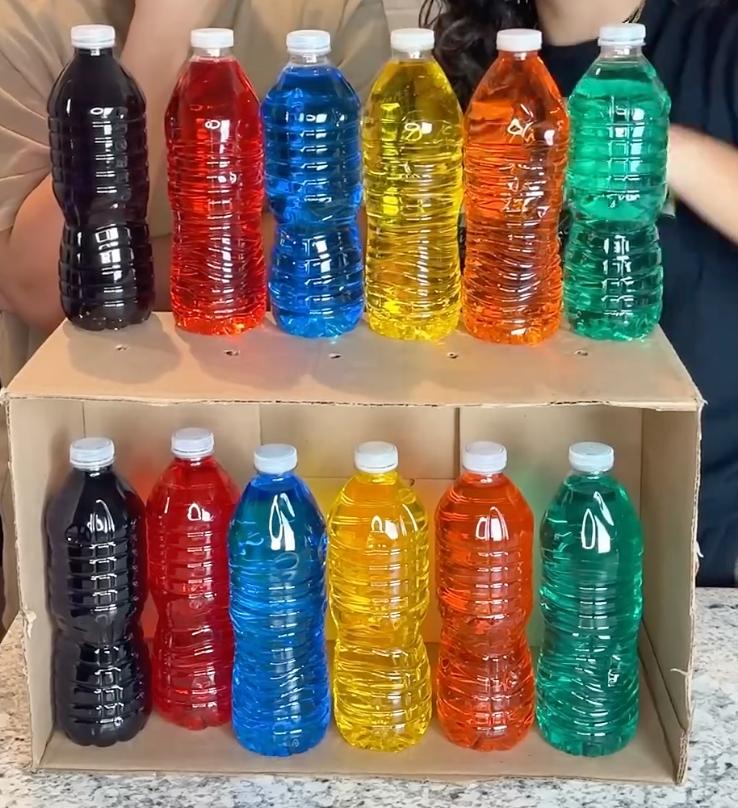}
        };

         \node at (img5.north)[above=0.2cm]
{$(1 \, 2 \, 3 \, 4 \, 5\, 6)$};

    \node at (img5.south)[below=0.1cm] {$b_8 = 6$};
    \end{tikzpicture}
    \caption{Illustration of a game of permutation Mastermind for $n = 6$ from TikTok~\cite{TikTokMake}. Guesses are labeled by treating the secret permutation as the identity $\sigma^\star := (1 \, 2 \, 3\, 4 \, 5 \, 6)$. }\label{fig:tiktok}
\end{figure}

After observing a variety of TikTok influencers play permutation Mastermind, a common aspect of their strategies came to my attention: \emph{they usually only permute a couple of bottles between guesses}. In contrast, the best known upper bound for permutation Mastermind, which takes $O(n \lg n)$ guesses for $n$ bottles, is based on guessing uniformly random permutations~\cite{larcher2022solvingstaticpermutationmastermind}, and thus almost all bottles are permuted between guesses on average.\footnote{Recall that a uniformly random permutation has a single fixed point in expectation.} My initial hypothesis was that these \emph{``local''} strategies, which permute only a small number of bottles between guesses, are natural for humans since it might be easier to keep track of the information throughout the game. This is reminiscent, for instance, of the analysis of Mastermind with constant-sized memory, by Doerr and Winzen~\cite{doerr2011playingmastermindconstantsizememory}. This paper aims to take the only reasonable course of action after TikTok has started to appropriate our well-studied game of Mastermind: to study a theoretical model of these local strategies and their implications both on the number of guesses required to win the game, and on its computational complexity.

\subsection{Preliminaries}
Before stating our main results, let us briefly formalize the problem at hand and present some of the best bounds known for the general case of permutation Mastermind.  

\newcommand{\Sn}{S_n}
\newcommand{\id}{\mathrm{id}}

\subparagraph*{Permutation Mastermind.}
Let $n\geq 2$ and $\Sn$ be the symmetric group, or in other words, the set of permutations $[n] \to [n]$ together with the composition operation.
A secret code is a permutation $\sigma^\star\in\Sn$ chosen by the \emph{codemaker}.
In round $t$, the \emph{codebreaker} plays a guess $\pi_t\in\Sn$ and receives the \emph{black-peg score}\footnote{The name comes from the original physical version of the game, in which \emph{black pegs} were used to indicate the number of correct positions guessed. }
\[
b(\pi_t,\sigma^\star)\;:=\; \bigl|\{i\in[n] : \pi_t(i)=\sigma^\star(i)\}\bigr|.
\]
An \emph{adaptive strategy} is a function that maps the \emph{transcript}
\(
\bigl((\pi_1,b_1),\dots,(\pi_{t-1},b_{t-1})\bigr)
\) 
to the next guess $\pi_t\in\Sn$.
A \emph{static} (non-adaptive) strategy is a fixed list of guesses $(\pi_1,\dots,\pi_T)$ chosen in advance.
In the adaptive setting, a guess $\pi_t$ wins the game for the codebreaker if $\pi_t = \sigma^\star$. In the static (i.e., non-adaptive) setting, after the codebreaker announces their fixed list of guesses $(\pi_1, \dots, \pi_T)$, the codemaker responds with the scores $b(\pi_1, \sigma^\star), \dots, b(\pi_T, \sigma^\star)$, after which the codebreaker gets to do one last guess to potentially win the game~\cite{Glazik1530289}.

\subparagraph*{Locality notions and Cayley-Mastermind.}

We consider two different forms of locality, defined next based on the notation $D(\pi, \sigma) := \{i \in [n] : \pi(i) \neq \sigma(i) \}$ for the set of indices in which two permutations differ. For any $k \geq 2$, we will say a strategy is $\ell_k$-local if for every two consecutive guesses $\pi_t$ and $\pi_{t+1}$ made by the strategy in any transcript, we have $|D(\pi_{t}, \pi_{t+1})| \leq k$. For a permutation $\pi \in S_n$, we call the set $D(\pi, \id_n) := \{ i \in [n]: \pi(i) \neq i\}$ the \emph{support} of $\pi$.

A strategy is $w_k$-local if for every two consecutive guesses $\pi_{t}$ and $\pi_{t+1}$ it holds that 
\[
\max(D(\pi_{t}, \pi_{t+1})) - \min(D(\pi_{t}, \pi_{t+1})) \leq k-1.
\]
Intuitively, this means there is some ``window'' $I := \{i, i+1, \ldots, i+k-1\}$ such that $\pi_{t+1}$ differs from $\pi_{t}$ only within that window (i.e., $D(\pi_{t}, \pi_{t+1}) \subseteq I$).

In the static case, we will assume that the last guess after receiving the scores is free of locality restrictions.

Our notions of locality are a particular case of playing Mastermind on a Cayley graph.
Given a set of generators $\Gamma$ of $S_n$, we can define the $\Gamma$-Permutation-Mastermind game in which the codemaker chooses a secret vertex of the Cayley graph $\Cay(S_n, \Gamma)$, then the codebreaker guesses an arbitrary starting vertex $v_0$, and for $t \ge 1$, the codebreaker must guess a vertex $v_t$ adjacent in $\Cay(S_n, \Gamma)$ to $v_{t-1}$. Equivalently, a vertex $v_t$ such that $v_{t-1}^{-1} v_{t} \in \Gamma$.
Naturally, we can define in turn a $\Gamma$-strategy for the standard permutation Mastermind as one that always plays according to the rules of the $\Gamma$-Permutation-Mastermind game. For example, for any $k \geq 2$, we define $L^{(k)}_n$ as the set of all permutations in $S_n$ whose support has size at most $k$. 
A strategy of permutation Mastermind is then said to be $\ell_k$-local if its consecutive guesses differ in at most $k$ positions, or equivalently, if it plays on the Cayley graph $\Cay(S_n, L^{(k)}_n)$. We purposely present this more general version of the definition to motivate a natural direction of future research: studying permutation Mastermind under other sets of generators.





\subparagraph*{Query complexities.}
Let $A(n)$ be the minimum $T$ such that there exists an adaptive strategy that determines
$\sigma^\star$ within $T$ guesses in the worst case.
Let $S(n)$ be the analogous minimum for static strategies.
For local versions, define:
\(
A_{\ell}(n,k),\;A_{w}(n,k),\;S_{\ell}(n,k),\;S_{w}(n,k)
\)
as the minimum worst-case number of guesses among adaptive/static strategies subject to the
corresponding locality constraint ($\ell_k$ or $w_k$).
%
The main previous results that contextualize our work are summarized in~\Cref{table:res}.

\begin{table}[h]
\caption{Best-known asymptotic query complexities for permutation Mastermind.}\label{table:res}
\centering
\small
\begin{tabular}{@{}lll@{}}
\toprule
Function &  Lower bound & Upper bound \\
\midrule
$A(n)$& 
$\Omega(n)$~\cite{KO1986449,ouali2016querycomplexityblackpegabmastermind} & $O(n\lg n)$~\cite{KO1986449,ouali2016querycomplexityblackpegabmastermind}  \\
$S(n) $ & 
$\Omega(n\lg n)$~\cite{Glazik1530289} & $O(n \lg n)$~\cite{larcher2022solvingstaticpermutationmastermind} \\
\bottomrule
\end{tabular}
\end{table}

\subsection{Our Contributions}

We prove the following results.

\begin{theorem}\label{thm:adaptive-ell}
    For the $\ell_k$-local setting, we have 
    \(    \frac{n^2 - 3n}{2k} \leq A_\ell(n, k) \leq \frac{n^2 \lg n}{k}(1+o(1)).
    \)
\end{theorem}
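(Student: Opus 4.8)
The plan is to establish the two bounds independently: the upper bound by locally \emph{simulating} a near-optimal non-local strategy, and the lower bound by an adversary argument that controls how fast the codebreaker can rule out position--value pairs.

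\emph{Upper bound.} I would start from a static strategy witnessing $S(n)=O(n\lg n)$, i.e.\ a fixed list $\rho_1,\dots,\rho_m$ with $m=(1+o(1))\,n\lg n$ whose score vector $\bigl(b(\rho_1,\sigma^\star),\dots,b(\rho_m,\sigma^\star)\bigr)$ determines $\sigma^\star$~\cite{larcher2022solvingstaticpermutationmastermind}. Convert it into an $\ell_k$-local adaptive strategy by interleaving the $\rho_i$'s with short walks in $\Cay(S_n,L^{(k)}_n)$: to go from $\rho_i$ to $\rho_{i+1}$, write $\rho_i^{-1}\rho_{i+1}$ as a product of at most $n-1$ transpositions and group these into blocks of $k-1$ transpositions supported on a common set of $\le k$ indices, each block being a single element of $L^{(k)}_n$; this realizes the transition in $\lceil(n-1)/(k-1)\rceil=(1+o(1))\,n/k$ local steps. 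Scores collected at the intermediate vertices are discarded. After querying $\rho_m$ the secret is known, and one final walk of length $O(n/k)$ reaches $\sigma^\star$. The total is $m\cdot(1+o(1))\tfrac nk+O(n/k)=\tfrac{n^2\lg n}{k}(1+o(1))$. The only care needed is to get the transit cost down to $(1+o(1))n/k$ uniformly in $k$: for small constant $k$ one can instead restrict the static strategy to permutations all of whose cycle lengths are $\le k$ (reachable from the identity in $\lceil n/k\rceil$ steps) and check that the analysis of the static strategy still applies.

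\emph{Lower bound.} I would use an adversary that pretends the secret is a common derangement of all guesses for as long as possible, revealing the value at a position only when forced. Concretely, the adversary maintains a bipartite ``possibility graph'' $H_t$ on $[n]\sqcup[n]$, starting from $K_{n,n}$: upon the guess $\pi_t$ it deletes, for every position $i$ not already declared, the edge $(i,\pi_t(i))$, and if some position is left with a unique incident edge it ``freezes'' it; it then answers $b_t:=\bigl|\{i:\ (i,\pi_t(i))\text{ is a frozen edge}\}\bigr|$. A careful implementation keeps the set of secrets consistent with the transcript equal to the set of perfect matchings of $H_t$, so the codebreaker can only be certain of $\sigma^\star$ once $H_t$ has a unique perfect matching. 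Let $R_t:=n^2-|E(H_t)|$ be the number of ruled-out pairs. Then: (i) $R_1\le n$, since the first guess deletes exactly the $n$ edges $(i,\pi_1(i))$; (ii) $R_{t+1}\le R_t+k$, since consecutive guesses differ in $\le k$ positions so at most $k$ new edges get deleted, and freezing a position costs nothing because its other $n-1$ edges were already gone; and (iii) when the game ends $R_T\ge\binom n2$, because a bipartite graph on $n+n$ vertices with a unique perfect matching has at most $\binom{n+1}2$ edges, so $|E(H_T)|\le\binom{n+1}2$. Combining, $\binom n2\le R_T\le n+k(T-1)$, whence $T-1\ge\bigl(\binom n2-n\bigr)/k=(n^2-3n)/(2k)$, i.e.\ $A_\ell(n,k)\ge(n^2-3n)/(2k)$.

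\emph{Main obstacle.} The crux is the bookkeeping of this adversary: one must define the freezing rule so that (a) the declared answers $b_t$ are simultaneously consistent with \emph{all} perfect matchings of $H_t$ (in particular with an honestly chosen secret) and with all earlier answers, (b) the consistent set is exactly $\mathrm{PM}(H_t)$ and not something smaller, and (c) no freezing step spends more than the $+k$ per-round budget on $R_t$ --- the latter being true precisely because a position is frozen only after all but one of its values have already been eliminated, so a freeze is ``free''. Getting these invariants right (equivalently, exhibiting explicitly a family of $\ge 2$ consistent secrets as long as $R_t<\binom n2$) is the technical heart of the lower bound; the classical extremal fact about unique perfect matchings then closes the argument. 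On the upper-bound side there is no real obstacle beyond tracking the leading constant in the transit cost.
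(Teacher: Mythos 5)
Your upper bound is essentially the paper's argument: simulate a known non-local strategy by routing between its guesses inside $\mathrm{Cay}(S_n, L^{(k)}_n)$, whose diameter is at most $\lceil (n-1)/(k-1)\rceil$ (your "blocks of $k-1$ transpositions on $\le k$ indices" is the same routing the paper proves by induction). One small correction: to get the stated constant you should simulate the \emph{adaptive} strategy of El Ouali et al., with $A(n)\le (1+o(1))\,n\lg n$ guesses; the best known \emph{static} strategy has size $28\,n\lg n$, not $(1+o(1))\,n\lg n$ as you assert, so your variant would lose a constant factor.

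The lower bound, however, has a genuine gap, and it sits exactly where you flag it. The adversary rule you actually give --- delete $(i,\pi_t(i))$ at every non-frozen position, freeze forced positions, answer "number of frozen matches" --- is unsound: it can produce a transcript consistent with \emph{no} permutation, because the wholesale deletions can destroy every perfect matching of $H_t$. Concretely, take $n=3$, $\pi_1=\mathrm{id}$, and $\pi_2$ with $\pi_2(1)=2$, $\pi_2(2)=1$, $\pi_2(3)=3$ (an $\ell_2$-local move). Your adversary answers $b_1=b_2=0$ and deletes $(1,1),(2,2),(3,3)$ and then $(1,2),(2,1)$; now positions $1$ and $2$ each retain only the edge to value $3$, so $H_2$ has no perfect matching, and indeed no secret exists: the only derangements of $[3]$ each agree with $\pi_2$ in exactly one position, so no $\sigma$ scores $0$ on both guesses. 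Thus your requirement (a) fails for the rule as stated, and the freezing device does not repair it; the invariant you need is not "freeze forced positions" but "never delete an edge that lies in every remaining perfect matching." That is precisely how the paper's adversary is built, after first passing to a \emph{generous} codemaker that reveals which of the at most $k$ changed positions are correct, so that feedback becomes per-edge yes/no and the numeric-score bookkeeping you worry about disappears (this only strengthens the codebreaker, so lower bounds transfer). The adversary then answers \emph{yes} to a queried edge iff it belongs to all perfect matchings of the current graph $G_t$ and deletes it otherwise; this keeps the consistent set equal to $\mathrm{PM}(G_t)$, in particular nonempty, and the counting you sketch ($n$ edges for the first guess, at most $k$ per local move, and at most $\binom{n+1}{2}$ edges once the perfect matching is unique) then goes through verbatim and yields $A_\ell(n,k)\ge (n^2-3n)/(2k)$.
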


\begin{theorem}\label{thm:static}
    In the $\ell_k$-local static setting, we have 
    \[
    \frac{n^2 - (1+o(1))n^{3/2}}{k} \leq S_\ell(n, k)  \leq  28 n \lg n  \cdot \left\lceil\frac{n-1}{k-1}\right\rceil.
    \]
\end{theorem}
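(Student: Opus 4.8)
The plan is to treat the two bounds separately: the lower bound by a counting argument on the set of position--value pairs that a local list ever probes, and the upper bound by taking an off-the-shelf non-local static list and replacing each of its long ``jumps'' by a short $\ell_k$-local detour.

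\textbf{Lower bound.} Fix any $\ell_k$-local static list $\pi_1,\dots,\pi_T$ and set $U=\bigcup_{t=1}^{T}\{(i,\pi_t(i)):i\in[n]\}\subseteq[n]\times[n]$, the set of (position, value) pairs it ever tries. Since $\pi_1$ contributes $n$ pairs and each later $\pi_t$ adds at most $|D(\pi_{t-1},\pi_t)|\le k$ new ones, $|U|\le n+(T-1)k$. Write $x=n^2-|U|$. The crux is: if $x\ge(1+o(1))n^{3/2}$, the complement of $U$ contains a $2\times 2$ all-untested block, i.e.\ positions $i\ne i'$ and values $a\ne b$ with none of $(i,a),(i,b),(i',a),(i',b)$ in $U$. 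This is a routine double-counting argument: letting $d_v$ be the number of positions that can still map to $v$, we have $\sum_v d_v=x$, hence $\sum_v d_v(d_v-1)\ge x^2/n-x$ by Cauchy--Schwarz, and once this exceeds $n(n-1)$ some ordered pair of distinct positions is joined through two different values; solving $x^2/n-x>n(n-1)$ gives the threshold $x>\tfrac12 n\bigl(1+\sqrt{4n-3}\bigr)=(1+o(1))n^{3/2}$. Given such a block, any permutation $\sigma$ with $\sigma(i)=a,\ \sigma(i')=b$ and the permutation $\sigma'$ obtained from it by swapping these two values have identical score vectors (positions $i,i'$ contribute $0$ to $b(\pi_t,\cdot)$ for every $t$, since all four pairs are untested) yet $\sigma\ne\sigma'$, so the list does not determine $\sigma^\star$. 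Hence a successful list has $|U|\ge n^2-(1+o(1))n^{3/2}$, i.e.\ $n+(T-1)k\ge n^2-(1+o(1))n^{3/2}$, which rearranges to $T\ge \dfrac{n^2-(1+o(1))n^{3/2}}{k}$.

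\textbf{Upper bound.} Fix a non-local static list $\rho_1,\dots,\rho_m$ with $m\le 28\,n\lg n$ whose scores already determine $\sigma^\star$; the existence of such a list with $m=O(n\lg n)$ is the known static upper bound~\cite{larcher2022solvingstaticpermutationmastermind}, and tracking its parameters gives the explicit constant $28$. We play $\rho_1$, then walk to $\rho_2$, then to $\rho_3$, and so on, using only $\ell_k$-local steps; the $\rho_j$ appear along this walk, so their scores are observed and $\sigma^\star$ is pinned down. The heart is a morphing lemma: any $\rho,\rho'\in S_n$ with $|D(\rho,\rho')|=s$ are joined by an $\ell_k$-local path of length at most $\lceil (s-1)/(k-1)\rceil$. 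Setting $\theta=\rho'\rho^{-1}$ (support exactly $\rho(D(\rho,\rho'))$, of size $s$), any factorization $\theta=\sigma_r\cdots\sigma_1$ with $|\mathrm{supp}(\sigma_j)|\le k$ yields the path $\mu_j=\sigma_j\cdots\sigma_1\rho$, because $D(\mu_{j-1},\mu_j)=\mu_{j-1}^{-1}(\mathrm{supp}(\sigma_j))$ has size at most $k$. So it suffices to show every $\theta$ with $|\mathrm{supp}(\theta)|=s$ factors into at most $\lceil (s-1)/(k-1)\rceil$ permutations of support $\le k$, which I prove by strong induction on $s$: for $s\le k$ take $\theta$ itself; for $s>k$ exhibit one $\sigma$ with $|\mathrm{supp}(\sigma)|\le k$ such that $\theta\sigma^{-1}$ has support at most $s-(k-1)$, and recurse. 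Such a $\sigma$ exists by a case split: if $\theta$ has a cycle of length $\ge k$, the identity $(a_1\,a_2\cdots a_c)\,(a_k\,a_{k-1}\cdots a_1)=(a_1\,a_{k+1}\,a_{k+2}\cdots a_c)$ shortens that cycle by exactly $k-1$ via a support-$k$ factor; otherwise every cycle is shorter than $k$, and we greedily absorb whole short cycles into $\sigma$, plus a suitable initial segment of one more, until $\sigma$ has support $\le k$ and at least $k-1$ support elements of $\theta$ have been killed. Concatenating the $m-1$ detours (each of length $\le\lceil (n-1)/(k-1)\rceil$) with the initial guess gives at most $1+(m-1)\lceil (n-1)/(k-1)\rceil\le m\lceil (n-1)/(k-1)\rceil\le 28\,n\lg n\,\lceil (n-1)/(k-1)\rceil$ guesses.

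\textbf{Main obstacle.} The delicate point is the factorization sub-lemma: the long-cycle case is a one-line cycle identity, but the all-short-cycles case needs the greedy packing arranged so that a single support-$\le k$ move always removes a full $k-1$ support elements (padding with part of an extra cycle when whole cycles overshoot or undershoot), which is exactly what makes the induction close with the clean bound $\lceil (s-1)/(k-1)\rceil$ rather than the weaker $\sum_{\text{cycles}}\lceil (c_i-1)/(k-1)\rceil$. On the lower-bound side the only real care needed is the asymptotics of $x^2/n-x>n(n-1)$, which yields the $(1+o(1))n^{3/2}$ deficit matching the stated leading term.
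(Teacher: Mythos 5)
Your proposal is correct and follows essentially the same route as the paper: the lower bound is the same untested-pair counting combined with the $K_{2,2}$/value-swap indistinguishability argument (you simply reprove the Kővári–Sós–Turán-type bound inline via Cauchy--Schwarz instead of citing it), and the upper bound is the same simulation of the $28n\lg n$ static strategy through $\ell_k$-local walks, with your factorization lemma being exactly the paper's bound $\diam(\Cay(S_n,L^{(k)}_n))\leq\lceil (n-1)/(k-1)\rceil$, proved by the same ``kill $k-1$ support elements per step'' induction, only with a case-split (long-cycle identity vs.\ greedy packing of short cycles) where the paper uses a single uniform construction from the cycle-by-cycle listing of the support.
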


\begin{theorem}\label{thm:window}
    For the $w_k$-local setting, we have $S_w(n, 2) = \Theta(n^2)$. 
\end{theorem}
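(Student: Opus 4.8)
This direction is immediate from Theorem~\ref{thm:static}. If two consecutive guesses differ only within a window of length at most $2$, then they differ in at most two positions, so every $w_2$-local strategy is in particular $\ell_2$-local; hence $S_w(n,2)\ge S_\ell(n,2)\ge \tfrac{n^2-(1+o(1))n^{3/2}}{2}=\Omega(n^2)$. So all the real work is the matching upper bound $S_w(n,2)=O(n^2)$: I will exhibit an explicit $w_2$-local static walk of length $O(n^2)$ whose black-peg profile pins down $\sigma^\star$, after which the free final guess wins.

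\textbf{The gadget.} The building block is the ``hexagon''. Fixing a window $W=\{j,j+1,j+2\}$ and any current guess $\pi$, the six adjacent transpositions $s_j,s_{j+1},s_j,s_{j+1},s_j,s_{j+1}$ (with $s_i$ swapping positions $i,i+1$) cycle the three values on $W$ through all $3!$ arrangements and return to $\pi$. The seven resulting black-peg scores equal $R+g(\tau)$, where $\tau$ runs over the six arrangements, $g(\tau)$ is the number of positions of $W$ on which $\tau$ agrees with $\sigma^\star$, and $R$ — the agreements contributed by the positions outside $W$ — is an unknown constant that does not change during the gadget. I will show this data determines the partial matching $M=\{(p,v): p\in W,\ v\in\text{the three values},\ \sigma^\star(p)=v\}$: since $\max_\tau g(\tau)=|M|$ (extend $M$ to a bijection of the three positions onto the three values) and $\min_\tau g(\tau)=0$ whenever $M\ne\emptyset$ (a nonempty partial injection between two $3$-sets still admits a common ``derangement''), the observed range of the six scores reveals $|M|$ and then $R=\min_\tau(\text{score})$, so the exact function $g$ is recovered; a finite check over the $34$ partial matchings $M$ then confirms $M\mapsto g$ is injective.

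\textbf{The strategy.} Start at $\id$. For $\theta=0,1,\dots,n-1$ in turn: if $\theta\ge1$, pass from the $(\theta{-}1)$-st cyclic rotation of $\id$ to the $\theta$-th one by a single ``bubble'' $s_1,s_2,\dots,s_{n-1}$ ($n-1$ transpositions); then run a hexagon on each of the windows $\{1,2,3\},\{2,3,4\},\dots,\{n-2,n-1,n\}$. Each hexagon returns to the current arrangement, so all consecutive guesses — within a hexagon, between successive hexagons, and at the seams between blocks — differ by a single adjacent transposition, hence the walk is $w_2$-local. In the $\theta$-th rotation of $\id$, the window $\{j,j+1,j+2\}$ carries the cyclically consecutive values $\{j{+}\theta,\,j{+}1{+}\theta,\,j{+}2{+}\theta\}$ (indices mod $n$); so as $\theta$ ranges over all residues, every value occupies every fixed window at some stage, and therefore for each position $p$ and value $w$ some hexagon in the schedule reveals whether $\sigma^\star(p)=w$. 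Thus the profile determines $\sigma^\star$, and the total length is $n\bigl((n-1)+6(n-2)\bigr)+O(n)=O(n^2)$, giving $S_w(n,2)=O(n^2)$ and, with the lower bound, $S_w(n,2)=\Theta(n^2)$.

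\textbf{Main obstacle.} The delicate step is the hexagon lemma: the scores reveal the local agreement counts only up to the unknown global offset $R$, so one must first extract $|M|$ and $R$ from the range of the six scores before the finite case analysis can identify $M$. Everything else — the rotation schedule and checking that every (position, value) pair is probed — is routine bookkeeping.
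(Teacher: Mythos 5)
Your proposal is correct and follows essentially the same route as the paper: a quadratic-length, $w_2$-local walk of adjacent transpositions that rotates every value through every position (the paper's ``conveyor belt'') combined with the six-arrangement window gadget whose local decodability is certified by a finite case check. The differences are organizational rather than substantive: you run closed hexagons after each cyclic rotation and recover the unknown outside-window offset $R$ explicitly via the min/max of the six scores (a point the paper treats only implicitly through its ``generous codemaker'' simulation), and you derive the $\Omega(n^2)$ lower bound from \Cref{thm:static} instead of the diameter bound.
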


Our lower bounds are graph-theoretic in nature, and interestingly, they are based on giving the codebreaker even more advantage in order to simplify the analysis; we consider \emph{generous} codemakers that reveal exactly which of the permuted positions of some guess is correct. A similar setting was studied by~Li and Zhu~\cite{Li_2024}, who considered the more general case of Mastermind with ``Wordle'' feedback. Our upper bounds for~\Cref{thm:adaptive-ell} and~\Cref{thm:static} are based on simulating the best known upper bounds without locality restrictions. The upper bound of~\Cref{thm:window}, in turn, uses a similar strategy to that of~\cite{Li_2024}.

Finally, we prove a hardness result, showing that processing the feedback received is not easy even if each guess was a single transposition from the previous one. It was first proved by de Bondt~\cite{de2004np} that standard Mastermind, which includes the \emph{white-peg} score, is NP-hard. Independently, Stuckman and Zhang~\cite{stuckman2005mastermindnpcomplete} provided a different proof.
Then, Goodrich~\cite{goodrich2009algorithmic} extended the result to only black-peg score feedback. We take these results further by proving hardness in the permutation variant, and even on the $\ell_3$-local setting. In contrast, for the $\ell_2$-local setting, we show the problem can be solved in randomized polynomial time, leveraging an algorithm for perfect matchings with parity constraints from Geelen and Kapadia~\cite{Geelen2017}.

\begin{theorem}\label{thm:np-hardness}
    The \emph{satisfiability} problem for permutation Mastermind in the $\ell_3$-local setting is $\mathrm{NP}$-hard. In other words, given a transcript $((\pi_1, b_1), \dots, (\pi_t, b_t))$, where each pair of consecutive guesses differs in at most 3 positions, it is $\mathrm{NP}$-hard to decide whether there exists a secret $\sigma^\star$ such that $b(\pi_i, \sigma^\star) = b_i$ for every $1 \leq i \leq t$. In turn, for the $\ell_2$-local setting there is a randomized polynomial-time algorithm.
\end{theorem}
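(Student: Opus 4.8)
The plan is to reduce from \emph{positive $1$-in-$3$ SAT}, which is NP-complete and, crucially, asks whether some assignment makes \emph{exactly one} literal true in each clause --- an exact count, matching the exact black-peg feedback. Given an instance with variables $x_1,\dots,x_N$ and clauses $C_1,\dots,C_m$, I would reserve for each $x_i$ a constant-size block $B_i$ of positions, plus a large region $R$ of ``scratch'' positions; a secret is \emph{intended} to encode an assignment by acting, on each $B_i$, as one of two fixed local permutations (``true''/``false'') and as the identity everywhere outside $\bigcup_i B_i$. The transcript begins with $O(N)$ \emph{setup} guesses --- each a small perturbation of the identity or of its predecessor, so that $\ell_3$-locality holds by construction --- whose exact targets are chosen so that the \emph{only} secrets consistent with all of them are those of this block-structured form. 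Then, for each clause $C_j$, a single \emph{check} guess, namely the product of the transpositions associated with the three blocks of $C_j$, has black-peg score equal to a fixed constant iff exactly one of $C_j$'s literals is true, and its target is set to that constant. To move, while staying $\ell_3$-local, from one check configuration to the next, I would route through a canonical configuration by \emph{parking} the active permutation into $R$ one value at a time (each such transposition sends a value into a forced-identity position), so that every transit guess has a black-peg score that is \emph{independent of the encoded assignment} and can therefore be written into the transcript.

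The main obstacle is the tension between two requirements that must hold \emph{simultaneously}: (i) the setup guesses must pin the secret down to the block-structured form using only exact-count feedback --- note that natural clause gadgets want inequalities such as ``at least one fixed point'', which is precisely why reducing from $1$-in-$3$ SAT rather than $3$-SAT is essential --- and (ii) every setup and transit guess must have an assignment-independent score so its target is well defined, which is what forces the scratch-region routing and a carefully maintained structural invariant on the transcript. Neither (i) nor (ii) alone is hard; making them compatible is the delicate part. Given a correct construction, completeness (a satisfying assignment yields a consistent block-structured secret) is immediate, and soundness follows because the setup guesses force block-structure and the check guesses then force each clause's $1$-in-$3$ condition. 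This is also consistent with the $\ell_2$ case being tractable: as the next part shows, a single $\ell_2$ step only ever creates a rigid $4$-cycle constraint, seemingly too weak to carry a hardness reduction.

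\paragraph{Algorithm for the $\ell_2$-local setting.}
First I would normalize: replacing each $\pi_t$ by $\pi_1^{-1}\pi_t$ and $\sigma^\star$ by $\pi_1^{-1}\sigma^\star$ preserves every black-peg score and every set $D(\pi_t,\pi_{t+1})$, so we may assume $\pi_1=\mathrm{id}$; and since two distinct permutations cannot differ in exactly one position, each $\ell_2$ step is a transposition. View the secret as a perfect matching $M$ of $K_{n,n}$ (position $i$ matched to value $\sigma^\star(i)$) and each guess $\pi_t$ as a perfect matching $P_t$; then the constraints $|M\cap P_t|=b_t$ are equivalent to the single \emph{anchor} $|M\cap P_1|=b_1$ together with, for each step $s$, a constraint on the four edges of the alternating $4$-cycle $P_s \triangle P_{s+1}$. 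A short case analysis on $\delta_s:=b_{s+1}-b_s\in\{-2,-1,0,1,2\}$ shows that, since $M$ meets those four edges in a matching of the $4$-cycle, the step-$s$ constraint is either \emph{structural} --- forbidding specified edges (when $\delta_s=0$, all four of them; when $\delta_s=\pm1$, two of them) or forcing specified edges (when $\delta_s=\pm2$, two of them), which I implement by deleting edges from $K_{n,n}$ --- or, after those deletions, a single \emph{parity} constraint, namely ``exactly one of two disjoint edges lies in $M$'', which holds iff the sum of the two corresponding indicators is odd. I would then hand the reduced instance --- a subgraph of $K_{n,n}$, the family of pairwise parity constraints, and the anchor --- to the Geelen--Kapadia algorithm for perfect matchings with parity constraints.

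The main obstacle on this side is fitting the reduced instance to the exact interface of that algorithm, chiefly the anchor $|M\cap F_0|=c_0$, where $F_0$ is the set of identity edges surviving the deletions, which is a cardinality rather than a parity condition: I would absorb it either by folding it into the parity/linear system the algorithm accepts or, failing that, by combining that algorithm with the exact-matching technique, and I would also check that a system of one parity constraint per ``$\pm1$'' step is within scope. Once this is arranged, the randomized polynomial running time is inherited from Geelen--Kapadia, and an affirmative answer comes with an explicit consistent $\sigma^\star$, obtained by completing the matching $M$ to a permutation.
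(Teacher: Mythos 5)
Your $\ell_3$ argument is a plan, not a proof, and the part you leave open is exactly the content of the theorem. You correctly pick the same starting point as the paper (a reduction from positive/monotone 1-in-3-SAT, with each variable encoded by a small block on which the secret acts as one of two fixed derangements, and clause checks whose score counts how many of the three blocks are ``true''), and you correctly identify the crux: every guess in an $\ell_3$-local transcript, including all setup and transit guesses, must come with a score that the reduction can write down without knowing the satisfying assignment, while the setup must still pin the secret to block-structured form using only exact counts. But you then say that reconciling (i) and (ii) is ``the delicate part'' and stop; the scratch-region/``parking'' routing and the ``carefully maintained structural invariant'' are never constructed or verified, so soundness and completeness are only asserted conditionally on a gadget you do not exhibit. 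The paper's proof consists precisely of such a gadget: block structure is forced by an $\ell_2$-local stage that alternates each cross-block transposition $\delta_{i,j}$ with the identity, all scored $0$ (note this takes $\Theta(n^2)$ guesses; your claim of $O(N)$ setup guesses is implausible under locality, since consecutive guesses share all but at most $3$ position--value pairs and roughly $N^2$ pairs must be excluded --- harmless for polynomiality, but a sign the details were not worked out); and each clause is checked by a fixed $9$-guess walk (apply $\alpha$ across the three blocks column by column, then $\alpha$ within each block, then $\beta$ column by column) whose prescribed score vector $(0,0,0,0,0,0,1,2,3)$ is shown, by a finite case check over the $8$ choices of $(\sigma^{(i)},\sigma^{(j)},\sigma^{(k)})\in\{\alpha,\beta\}^3$, to be attainable exactly when one of the three blocks is $\alpha$. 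Without an explicit gadget of this kind, with its intermediate scores proved assignment-independent (or otherwise forced), the hardness claim is not established.

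The $\ell_2$ part is essentially the paper's argument: each step is a transposition, the case analysis on $\Delta_t\in\{-2,-1,0,1,2\}$ yields forbidden edges, forced edges, and ``exactly one of two edges'' constraints, and the instance is handed to Geelen--Kapadia. The one point you leave unresolved --- how the anchor $|M\cap B_{\pi_1}|=b_1$, a cardinality rather than a parity condition, fits that algorithm's interface --- is handled in the paper by the weight function $w$ (weight $n$ on forbidden edges, $1$ on non-anchor surviving edges, $0$ on anchor edges) together with the exact-weight target $w(M)=n-b_1$, using the observation that Geelen--Kapadia's evaluation step can extract the coefficient of $z^{r}$ rather than the minimum exponent; forced edges and the pair constraints are then encoded as coordinates of the $\mathbb{F}_2^t$-valued labels. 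So this half is salvageable with the cited tool, but as written your proposal defers the very step that makes the reduction type-check.
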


    







\section{Adaptive Lower Bound on \texorpdfstring{$\ell_k$}{ell k}-local Strategies}\label{sec:adaptive}

This section is devoted to proving the lower bound of~\Cref{thm:adaptive-ell}. The main object we will handle in the proof is a bipartite graph representing the potential secrets  $\sigma^\star$ compatible with the information seen thus far. 
It will be convenient for this proof to think of $\sigma^\star$ as a string of length $n$ over an alphabet $\Sigma := \{s_1, \ldots, s_n\}$, so that \emph{positions} of $\sigma$ are just integers in $[n] := \{1, \ldots, n\}$ but the \emph{elements} of $\sigma$ have a different type, $\Sigma$. This way, we think of $\sigma^\star$ as a function from $[n]$ to $\Sigma$. 
To any $\sigma \colon [n] \to \Sigma$, we associate a bipartite graph $B_\sigma := ([n] \sqcup \Sigma, E)$,  where a pair $\{i, s_j\} \in E$ if and only if $\sigma(i) = s_j$. Note that these graphs $B_\sigma$ are perfect matchings.


Recall that $b(\pi, \sigma) = |\{i\in [n]: \pi(i) = \sigma(i) \} |$ is the black-peg score of a guess.
Let $b^{-1}(k, \sigma)$ be the set of all guesses $\pi \colon [n] \to \Sigma$ such that $b(\pi, \sigma) = k$.

Then, given a sequence of guesses $\Pi = (\pi_1, \ldots, \pi_t)$, to which the codemaker has answered $B := (b_1, \ldots, b_t)$, and some $\pi \colon [n] \to \Sigma$, we say $\pi$ is \emph{compatible} with $(\Pi, B)$ if $b(\pi_i, \pi) = b_i$ for $1 \leq i \leq t$. 

\begin{definition}[Graph of possible secrets]
    Given a sequence of guesses $\Pi = (\pi_1, \ldots, \pi_t)$, to which the codemaker has answered $B := (b_1, \ldots, b_t)$, we define the \emph{graph of possible secrets} \[
    P_t := \bigcup_{\sigma \text{ is compatible with } (\Pi, B)} B_{\sigma}.
    \]
    As a degenerate case, $P_0$ is the complete bipartite graph between $[n]$ and $\Sigma$.
\end{definition}

The following observation is immediate, but will be useful to translate our problem into a graph-theoretic one.
\begin{observation}
    Given a sequence of guesses $\Pi = (\pi_1, \ldots, \pi_t)$, to which the codemaker has answered $B := (b_1, \ldots, b_t)$, the codebreaker can be certain of the secret permutation $\sigma$ if and only if $P_t$ has a unique perfect matching.
\end{observation}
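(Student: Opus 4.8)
The plan is to route everything through the elementary bijection $\sigma \mapsto B_\sigma$ between permutations $\sigma \colon [n] \to \Sigma$ and perfect matchings of the complete bipartite graph on $[n] \sqcup \Sigma$, and then read off both directions of the equivalence. The first thing I would record is the (essentially definitional) remark that the true secret $\sigma^\star$ is always compatible with its own transcript $(\Pi, B)$, so the set of permutations compatible with $(\Pi,B)$ is nonempty, and that the codebreaker can be certain of $\sigma^\star$ exactly when that set is a singleton: the codebreaker has no information beyond $(\Pi,B)$, and the secret is among the compatible permutations, so it is pinned down if and only if there is nothing else it could be.

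For the forward implication, if the codebreaker is certain then $\sigma^\star$ is the unique compatible permutation, the union defining $P_t$ collapses to the single term $B_{\sigma^\star}$, and a graph that \emph{is} a perfect matching has only itself as a perfect matching (every vertex already has degree $1$). For the converse, suppose $P_t$ has a unique perfect matching $M$. Each compatible $\sigma$ contributes $B_\sigma \subseteq P_t$ by definition of $P_t$, and since $B_\sigma$ already saturates every vertex of $[n] \sqcup \Sigma$ it is a perfect matching of $P_t$, hence $B_\sigma = M$ by uniqueness; injectivity of $\sigma \mapsto B_\sigma$ then forces every compatible permutation to equal the unique $\sigma$ with $B_\sigma = M$ (which exists and is compatible, being $\sigma^\star$), so compatibility again determines a single permutation and the codebreaker is certain.

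I expect no substantial obstacle. The one point worth a sentence of care is that the argument does \emph{not} require a bijection between compatible permutations and all perfect matchings of $P_t$: a perfect matching of the union could a priori splice edges taken from several different $B_\sigma$'s and need not correspond to a compatible permutation. Accordingly the proof only uses that $\sigma \mapsto B_\sigma$ is injective, that each $B_\sigma$ is genuinely a perfect matching of $P_t$ (not merely a matching contained in it), and that $B_{\sigma^\star}$ already exhibits at least one perfect matching of $P_t$.
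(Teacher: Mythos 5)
Your argument is correct and is exactly the ``immediate'' argument the paper has in mind (the paper states this as an observation without a written proof): certainty means a unique compatible permutation, and the bijection $\sigma \mapsto B_\sigma$ together with the fact that each $B_\sigma$ is a perfect matching of $P_t$ gives both directions. Your cautionary remark that a perfect matching of $P_t$ need not itself be compatible is well placed, and your proof correctly avoids needing that.
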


We now prove that, for a graph $P_t$ to have a unique perfect matching, it needs to be missing a significant number of edges. We will make use of the following folklore observation (illustrated in~\Cref{fig:unique-pm}): 

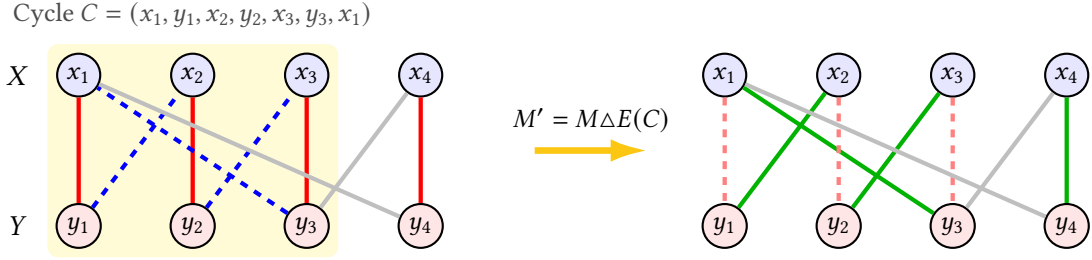
\begin{figure}
    \centering
    \begin{tikzpicture}[
    thick,
    vertex/.style={circle, draw, fill=white, inner sep=2pt, minimum size=12pt, font=\small},
    edge_m/.style={ultra thick, red},          
    edge_c_not_m/.style={ultra thick, blue, dashed}, 
    edge_other/.style={ultra thick, gray!50},            
    edge_m_prime/.style={ultra thick, green!70!black} 
]


\begin{scope}[xshift=0cm, yshift=0cm]
    
    \foreach \i in {1,...,4} {
        \node[vertex, fill=blue!10] (u\i) at (\i*1.5 - 1.5, 1) {$x_{\i}$};
        \node[vertex, fill=red!10] (v\i) at (\i*1.5 - 1.5, -1) {$y_{\i}$};
    }
    \node at (-0.8, 1) {$X$};
    \node at (-0.8, -1) {$Y$};

    \draw[edge_m] (u1) -- (v1);
    \draw[edge_m] (u2) -- (v2);
    \draw[edge_m] (u3) -- (v3);
    \draw[edge_m] (u4) -- (v4);

    \draw[edge_c_not_m] (v1) -- (u2);
    \draw[edge_c_not_m] (v2) -- (u3);
    \draw[edge_c_not_m] (v3) -- (u1);
    
    \begin{pgfonlayer}{background}
        \fill[yellow!20, rounded corners] ($(u1.north west)+(-0.2,0.2)$) rectangle ($(v3.south east)+(0.2,-0.2)$);
    \end{pgfonlayer}
    \node[font=\small, gray!60!black, anchor=south] at (1.5,1.5) {Cycle $C = (x_1, y_1, x_2, y_2, x_3, y_3, x_1)$};

    \draw[edge_other] (u4) -- (v3);
    \draw[edge_other] (v4) -- (u1);


\end{scope}

\draw[-{Latex[length=5mm, width=3mm]}, line width=2.5pt, lipicsYellow] (5.5, 0) -- (7.0, 0);
\node[font=\small, align=center, above] at (6.25, 0.1) {$M' = M \symdiff E(C)$};

\begin{scope}[xshift=8.0cm, yshift=0cm]

    \foreach \i in {1,...,4} {
        \node[vertex, fill=blue!10] (ru\i) at (\i*1.5 - 1.5, 1) {$x_{\i}$};
        \node[vertex, fill=red!10] (rv\i) at (\i*1.5 - 1.5, -1) {$y_{\i}$};
    }

    
    \draw[edge_m_prime] (rv1) -- (ru2);
    \draw[edge_m_prime] (rv2) -- (ru3);
    \draw[edge_m_prime] (rv3) -- (ru1);
    
    \draw[edge_m_prime] (ru4) -- (rv4);

    \draw[edge_other, dashed, red!50] (ru1) -- (rv1);
    \draw[edge_other, dashed, red!50] (ru2) -- (rv2);
    \draw[edge_other, dashed, red!50] (ru3) -- (rv3);

    \draw[edge_other] (ru4) -- (rv3);
    \draw[edge_other] (rv4) -- (ru1);

    
\end{scope}

\end{tikzpicture}
    \caption{Illustration of the proof for~\Cref{obs:unique}. Edges of the original matching $M$ are colored red, and edges in $E(C) \setminus M$ are dashed blue. Gray edges are not part of the matchings, and the resulting matching $M'$ is colored green.}
    \label{fig:unique-pm}
\end{figure}

\begin{observation}\label{obs:unique}
    If a finite bipartite graph $B$ has minimum degree at least $2$, it cannot have a unique perfect matching.
\end{observation}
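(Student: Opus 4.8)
The plan is to prove the contrapositive: assuming $B$ has a perfect matching $M$ (otherwise the claim is vacuous, since then $B$ has zero perfect matchings) and that every vertex of $B$ has degree at least $2$, I will exhibit a second perfect matching $M' \neq M$. The idea, which is the one depicted in~\Cref{fig:unique-pm}, is to locate an $M$-alternating cycle $C$ and swap along it.

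First I would grow an $M$-alternating walk $v_0, v_1, v_2, \dots$. Start from an arbitrary vertex $v_0$ and let $v_1$ be its $M$-partner. Inductively, if the last edge $v_{i-1}v_i$ lies in $M$, then since $\deg(v_i) \geq 2$ there is an edge $v_i v_{i+1}$ with $v_{i+1} \neq v_{i-1}$, and this edge is automatically outside $M$ because the only $M$-edge at $v_i$ is $v_{i-1}v_i$; if instead $v_{i-1}v_i \notin M$, let $v_{i+1}$ be the $M$-partner of $v_i$, which exists (as $M$ is perfect) and differs from $v_{i-1}$. The resulting walk strictly alternates between edges in and out of $M$ and never immediately backtracks, i.e. $v_{i+1} \neq v_{i-1}$ for all $i \geq 1$.

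Second, since $B$ is finite, a vertex must repeat; let $b$ be least such that $v_b = v_a$ for some $a < b$. Bipartiteness forces the walk to alternate sides, so $v_a$ and $v_b$ lying on the same side means $b - a$ is even, and the no-backtracking property excludes $b - a = 2$; hence $C := v_a v_{a+1} \cdots v_{b-1} v_a$ is a genuine even cycle on $b - a \geq 4$ distinct vertices. Tracking parities of the edge indices, the first and last edges of the segment $v_a, \dots, v_b$ have opposite $M$-membership, so travelling around $C$ the edges alternate in and out of $M$; that is, $C$ is $M$-alternating. Finally, set $M' := M \symdiff E(C)$: outside $C$ it coincides with $M$, and inside $C$ it replaces the $M$-edges of the cycle by the complementary cycle edges, so $M'$ is again a perfect matching of $B$, and $M' \neq M$ since $C$ contains at least one $M$-edge that $M'$ discards. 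Hence $B$ has at least two perfect matchings.

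The one delicate point — really the only nontrivial step — is checking that the alternation closes up consistently around $C$, which is exactly where finiteness (to force a repeat), bipartiteness (to force $b - a$ even), and the no-backtrack rule (to rule out a degenerate length-$2$ ``cycle'') are all used simultaneously; everything else is routine bookkeeping, and the argument is the standard folklore one.
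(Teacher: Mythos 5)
Your proof is correct and follows essentially the same route as the paper's proof of~\Cref{obs:unique}: grow an $M$-alternating walk using the degree-$\geq 2$ hypothesis, use finiteness to find a repeated vertex, extract an even alternating cycle $C$, and take $M \symdiff E(C)$. If anything, you are slightly more careful than the paper at the one delicate point (checking via bipartite parity and no-backtracking that the alternation closes up around $C$), which the paper's argument asserts more briefly.
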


\begin{proof}
If $B$ has no perfect matchings, the statement holds trivially. Thus, let $M$ be a perfect matching of $B$. For each vertex $u$, let $M(u)$ denote its neighbor in the matching $M$, and $D(u)$ denote one neighbor of $u$ via an edge not in $M$, chosen arbitrarily, which must exist since $u$ has degree at least $2$.
Now let $u_0$ be any vertex, and construct a walk $u_0, u_1, \ldots$ by
\[
u_i := \begin{cases}
    M(u_{i-1}) & \text{if } i \text{ is odd}\\
    D(u_{i-1}) & \text{otherwise}.
\end{cases}
\]
%
Since $B$ is finite, some vertex appears twice in this walk. Let $u_j=u_k$ with $j<k$, chosen so that all intermediate vertices are distinct. Then the subwalk $u_j,u_{j+1},\dots,u_k$ forms a simple alternating cycle $C$, which necessarily has even length. Consider the matching $M' = M \symdiff E(C)$. For vertices not in $C$, $M'$ agrees with $M$. Each vertex of $C$ is incident to exactly one edge of $M$ and one edge of $E(C) \setminus M$, so it is matched by exactly one edge in $M'$. Thus $M'$ is a perfect matching. Since $C$ contains edges not in $M$, we have $M' \neq M$, contradicting the uniqueness of $M$.
\end{proof}

We now use~\Cref{obs:unique} to show a much stronger condition implied by having a unique perfect matching.

\begin{lemma}\label{lemma:upm}
    Let $B = (X \sqcup Y, E)$ be a balanced bipartite graph with $n := |X| = |Y|$. Then, if $B$ has a unique perfect matching, $|E| \leq \binom{n+1}{2}$.
\end{lemma}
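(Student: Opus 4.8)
The plan is to induct on $n$, using \Cref{obs:unique} to extract at each step a vertex of degree exactly $1$, which we then delete together with its matched partner. The base case $n=1$ is immediate, since then $|E|\le 1=\binom{2}{2}$.

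For the inductive step, let $M$ be the unique perfect matching of $B$. By the contrapositive of \Cref{obs:unique}, $B$ has a vertex of degree at most $1$; since $M$ saturates every vertex, such a vertex $v$ has degree exactly $1$, and its unique neighbour is its matched partner $w:=M(v)$. Say $v\in X$ and $w\in Y$. Deleting $v$ and $w$ yields a balanced bipartite graph $B'$ on $(n-1)+(n-1)$ vertices, and $M\setminus\{vw\}$ is a perfect matching of $B'$.

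The crucial point is that this perfect matching of $B'$ is again \emph{unique}: any perfect matching $M''$ of $B'$ would extend to the perfect matching $M''\cup\{vw\}$ of $B$ (legal because $v$'s only neighbour in $B$ is $w$), which differs from $M$ whenever $M''\neq M\setminus\{vw\}$ — contradicting uniqueness in $B$. Hence the inductive hypothesis gives $|E(B')|\le\binom{n}{2}$. Finally, the number of edges of $B$ incident to $v$ or $w$ equals $\deg_B(v)+\deg_B(w)-1=\deg_B(w)$ (the edge $vw$ is counted twice in the sum of degrees), and $\deg_B(w)\le|X|=n$ since all neighbours of $w$ lie in $X$. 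Therefore $|E(B)|=|E(B')|+\deg_B(w)\le\binom{n}{2}+n=\binom{n+1}{2}$.

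I do not anticipate a real obstacle: the only subtle steps are (i) deleting the matched \emph{pair} $(v,w)$ rather than only the degree-$1$ vertex $v$ — needed both to keep the graph balanced and to transfer the unique-matching property — and (ii) charging the deleted edges to $\deg_B(w)\le n$ rather than to $\deg_B(v)$. An alternative, induction-free route: after labelling $X=\{x_1,\dots,x_n\}$ and $Y=\{y_1,\dots,y_n\}$ with $M=\{x_iy_i : i\in[n]\}$, orient an arc $i\to j$ whenever $x_iy_j\in E$ with $i\neq j$; a directed cycle would produce an alternating cycle in $B$ and hence a second perfect matching, so this digraph is acyclic, and topologically sorting it shows there are at most $\binom{n}{2}$ non-matching edges, plus the $n$ edges of $M$. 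I would present the inductive version, as it reuses \Cref{obs:unique} directly.
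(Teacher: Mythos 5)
Your proof is correct and takes essentially the same route as the paper's: both use \Cref{obs:unique} to find a degree-one vertex, delete it together with its matched partner (noting the unique-matching property transfers to the smaller graph), and charge the at most $n$ deleted edges to the partner's degree, yielding $f(n)\leq f(n-1)+n$. The only cosmetic difference is that the paper phrases the induction via an extremal function $f(n)$ rather than directly on the graph.
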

\begin{proof}
    Let $f(n)$ be the maximum number of edges of a balanced bipartite graph with $2n$ vertices that has a unique perfect matching. We trivially have $f(1) = 1$. For $n \geq 2$, we claim that $f(n) \leq f(n-1) + n$. Before proving the claim, let us see that it is enough to conclude, since then, using induction we will have
    \[
    f(n) \leq \binom{n}{2} + n = \frac{n(n-1)}{2} + n = \binom{n+1}{2}.
    \]
    Now, to prove the claim, we consider an arbitrary balanced bipartite graph $B_{n}$ on $2n$ vertices with a unique perfect matching and such that $|E(B_n)| = f(n)$. Let $\delta$ be the minimum degree in $B_n$, and note that $\delta = 1$: since $B_n$ has a perfect matching we have $\delta \geq 1$, and by~\Cref{obs:unique} we have $\delta < 2$. Let $u \in V(B_n)$ be a vertex of degree $1$, and $v$ its only neighbor. It must be the case that $u$ and $v$ belong to opposite sides of the bipartition, and thus if we consider the graph resulting from $B_n$ by removing vertices $u$ and $v$, we get a balanced bipartite graph $B'$ on $2(n-1)$ vertices. Moreover, $\{u,v\}$ must belong to the (unique) perfect matching of $B_n$, and therefore any perfect
matching of $B'$ would extend (by adding $\{u,v\}$) to a perfect matching of $B_n$, and thus $B'$ must also have a unique perfect matching. Therefore, $|E(B')| \leq f(n-1)$. On the other hand,
\[
|E(B')| = |E(B_n)| - \deg_{B_n}(v) - (\deg_{B_n}(u) - 1)  = |E(B_n)| - \deg_{B_n}(v) \geq |E(B_n)| - n.
\]

We thus have
    \[
    f(n) = |E(B_n)| \leq |E(B')| + n \leq f(n-1) + n,
    \]
concluding the claim.
\end{proof}

The tightness of~\Cref{lemma:upm} is witnessed by \emph{half graphs} $H_n$ (see~\Cref{fig:half-graph}), which include the edge $\{i, s_j\}$ if and only if $i \leq j$. The uniqueness of a perfect matching in $H_n$ can be easily seen by induction, noting that vertex $n$ has degree one, and upon removing $n$ together with its neighbor $s_n$, we are left exactly with $H_{n-1}$, for which the inductive hypothesis applies.

\begin{figure}
    \centering
    \def\n{5}
\begin{tikzpicture}[
    scale=0.8,
    xscale=1.5, 
    yscale=2.5, 
    u_node/.style={circle, draw=blue!60!black, very thick, fill=blue!10, minimum size=7mm, inner sep=1pt},
    v_node/.style={circle, draw=green!60!black, very thick, fill=green!10, minimum size=7mm, inner sep=1pt},
    myedge/.style={very thick, draw=gray!70}
]

    \foreach \i in {1,...,\n} {

        \node[u_node] (u\i) at (\i, 1) {${\i}$};

        \node[v_node] (v\i) at (\i, 0) {$s_{\i}$};
    }

    \foreach \i in {1,...,\n} {
        \foreach \j in {\i,...,\n} {
            \draw[myedge] (u\i) -- (v\j);
        }
    }

    \foreach \i in {1,..., \n} {
         \draw[orange, line width=2.5pt, opacity=0.5] (u\i) -- (v\i);
    }


\end{tikzpicture}
    \caption{The half graph $H_5$, with $\binom{6}{2} = 15$ edges. Its only perfect matching is highlighted in orange. }
    \label{fig:half-graph}
\end{figure}
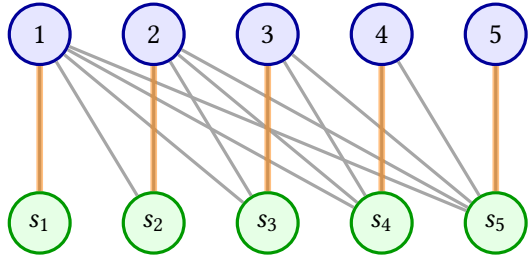

\subsection{Generous and Supergenerous Codemakers}

Perhaps paradoxically, I found that proving lower bounds on permutation Mastermind with locality becomes easier if we give more power to the codebreaker, but in a way that makes the game simpler.

Namely, while a normal codemaker will give $|B_{\pi} \cap B_{\sigma^\star}|$ as feedback (i.e., the number of correctly guessed positions), we will consider a \emph{generous codemaker} that gives the set $B_{\pi} \cap B_{\sigma}$ as feedback. 
Note that, any strategy against a generous codemaker can be used against a standard codemaker without any loss, since from $B_{\pi} \cap B_{\sigma}$ the codebreaker can compute $|B_{\pi} \cap B_{\sigma^\star}|$.

In the original form of the game, each guess $\pi_t$ can be described as well by $E_t := B_{\pi_t} \setminus B_{\pi_{t-1}}$, the set of newly guessed edges. Except for the first guess, which tests $n$ edges, each guess $\pi_t$ tests at most $k$ edges in the $\ell_k$ setting. We will define a \emph{supergenerous} codemaker as one that allows the codebreaker to guess individual edges one by one, giving generous feedback to each (i.e., revealing whether it is correct or not), but only accounting 1 guess per each $k$ individual-edge guesses. Naturally, if we prove a lower bound of $t$ individual-edge guesses against a supergenerous codemaker, then $(t-n)/k$ guesses are necessary against a standard codemaker.

\subsection{Finishing the proof of Theorem 1's lower bound}

\begin{lemma}
    Let $k \geq 2$. Then, any deterministic adaptive strategy in the $\ell_k$-local model requires $\frac{n^2-3n}{2k}$ guesses.
\end{lemma}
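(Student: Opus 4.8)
The plan is to translate the game into an edge–deletion process on the graph of possible secrets $P_t$ and then exhibit a (generous) codemaker strategy that forces this process to be slow.

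\textbf{Reduction to counting deleted edges.}
Since we play against a generous codemaker, by the Observation the codebreaker has determined $\sigma^\star$ only once $P_t$ has a unique perfect matching, and by \Cref{lemma:upm} (with $|X|=|Y|=n$) this forces $|E(P_t)|\le\binom{n+1}{2}$. As $P_0=K_{[n],\Sigma}$ has $n^2$ edges, at least
\[
n^2-\binom{n+1}{2}=\binom{n}{2}=\frac{n^2-n}{2}
\]
edges must disappear from $P_t$ over the course of the game. Hence it suffices to design a codemaker under which the first guess deletes at most $n$ edges of $P_t$ and every later guess deletes at most $k$: then $\frac{n^2-n}{2}\le n+(T-1)k$, which rearranges to $T\ge 1+\frac{n^2-3n}{2k}>\frac{n^2-3n}{2k}$.

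\textbf{The codemaker.}
I would have the codemaker maintain the set of compatible secrets as the perfect matchings of a bipartite graph $G_t$, with $G_0=K_{[n],\Sigma}$, so that $P_t$ is $G_t$ with its non‑matchable edges removed. On the first guess it answers ``no'' on all $n$ positions, which is consistent (a permutation disagreeing with $\pi_1$ everywhere exists, indeed many), deleting exactly the $n$ edges of $B_{\pi_1}$. On a later guess $\pi_{t+1}$, which by $\ell_k$-locality differs from $\pi_t$ only in a set $S$ with $|S|\le k$, the generous codemaker is asked only about the pairs $\{(i,\pi_{t+1}(i)):i\in S\}$, which form a \emph{partial matching} (distinct positions, distinct values) of size $\le k$; the codemaker again answers ``no'' to all of them. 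If deleting exactly these $\le k$ edges from $G_t$ leaves a graph still having $\ge 2$ perfect matchings and no edge that becomes non-matchable, then $P_{t+1}$ loses exactly $|S|\le k$ edges, as required, and $G_{t+1}$ is obtained. The ``$\le k$'' in the denominator comes precisely from the fact that an $\ell_k$-local guess can probe only such a bounded partial matching of (position, value) pairs, and in a permutation-respecting way.

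\textbf{The main obstacle.}
The delicate point — which I expect to be the technical heart — is that the naive ``answer no'' policy cannot be maintained: two bad events can occur before $|E(P_t)|$ has already fallen to $\binom{n+1}{2}$. First, answering ``no'' on all probed pairs might leave no compatible secret, forcing a ``yes'' that reveals a whole row and column and deletes $\Theta(n)$ edges at once. Second, a legitimate ``no'' might drive some coordinate of $P_t$ down to degree $1$, which (by \Cref{obs:unique}-type reasoning) makes the rest of a column/row non-matchable and again deletes $\Theta(n)$ edges in a cascade. To control both, the codemaker must keep $G_t$ \emph{robust} rather than merely matching-covered — for instance, maintaining a large ``flexible block'' of still-undetermined coordinates on which $P_t$ restricts to a dense, highly matching-connected bipartite graph, so that removing any $\le k$ matching-edges cannot collapse a coordinate or destroy all perfect matchings there — and, for the collapses that are genuinely unavoidable, absorb their cost in an amortized sense: driving any single coordinate to degree $1$ requires probing $\Omega(n)$ pairs in its row, and by the permutation constraint this needs $\Omega(n)$ separate guesses, so each $\Theta(n)$-cascade is ``paid for'' by many preceding guesses. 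Pinning down the right invariant on $G_t$, showing the codemaker can always maintain it against an arbitrary deterministic codebreaker, and verifying that it never needs to spend more than $\approx k$ deletions per guess is where the real work lies; once that is established, the counting in the first step yields the claimed bound.
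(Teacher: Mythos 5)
Your counting framework (at least $\binom{n}{2}$ edges must be lost before a unique perfect matching is possible, hence $T\geq (\text{deletions}-n)/k$) matches the paper's, but the heart of the argument --- a codemaker policy that provably loses at most $k$ edges per guess while staying consistent --- is exactly what you leave open, so the proposal has a genuine gap. Moreover, the obstacle you describe (forced ``yes'' answers and degree-one cascades wiping out $\Theta(n)$ edges of $P_t$) is an artifact of your choice to track $P_t$, the union of compatible matchings, i.e.\ a graph from which non-matchable edges have been pruned. The paper sidesteps this entirely: its adversary tracks a graph $G_t$ that is \emph{never} cleaned up. Starting from $G_0=K_{n,n}$, each probed pair $e_t$ is answered ``yes'' if and only if $e_t$ lies in \emph{every} perfect matching of $G_{t-1}$ (in which case $G_t=G_{t-1}$ and no edge is lost), and ``no'' otherwise (in which case exactly the single edge $e_t$ is deleted). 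A short induction shows the set of compatible secrets is always exactly $\textsf{PM}(G_t)$, so consistency is automatic and there is never a forced revelation costing extra edges; and since \Cref{lemma:upm} bounds the edge count of \emph{any} balanced bipartite graph with a unique perfect matching --- including one still carrying non-matchable edges --- the endgame condition $|\textsf{PM}(G_t)|=1$ alone forces $|E(G_t)|\leq\binom{n+1}{2}$, with no cascade accounting, robustness invariant, or amortization needed. The paper also handles the ``at most $k$ per guess'' bookkeeping cleanly by an ``extra generous'' relaxation in which the codebreaker queries edges one at a time and the cost is normalized to $(t-n)/k$.

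So the fix for your proposal is conceptual rather than technical: do not prune non-matchable edges (i.e.\ work with a graph whose perfect matchings are the compatible secrets, rather than with the union of those matchings), and let the adversary say ``yes'' precisely when the edge is in all remaining perfect matchings. With that change your first-paragraph counting goes through verbatim and the ``main obstacle'' paragraph becomes unnecessary; as written, however, the proof is incomplete, since the invariant you say must be ``pinned down'' is the entire content of the lemma.
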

\begin{proof}
We consider an adversarial \emph{supergenerous} codemaker. The adversary will keep a graph $G_t$, with $G_0 := K_{n, n}$, and upon receiving a guess $e_t$ for $t \geq 1$, it will answer \emph{yes} if and only if $e_t$ belongs to all perfect matchings of $G_{t-1}$. If it answers \emph{no}, the adversary sets $G_t \gets G_{t-1} \setminus e_t$, and otherwise $G_t \gets G_{t-1}$. Let $\mathcal{S}_t$ be the set of all perfect matchings compatible with the answers given until time $t$ (including to guess $e_t$), and $\mathcal{S}_0 = \textsf{PM}(K_{n,n})$ the set of all perfect matchings of $K_{n,n}$. Then, by construction, we have that the adversary says \emph{yes} to an edge $e_t$ if and only if $e_t \in M$ for every $M \in \mathcal{S}_{t-1}$.

Let $\textsf{PM}(G_t)$ be the set of all perfect matchings in $G_t$. We will prove by induction that $\mathcal{S}_{t} = \textsf{PM}(G_t)$ for every $t \geq 0$. For $t = 0$ it holds trivially since both sets include all possible matchings. For $t \geq 1$, we consider two cases. First, if the adversary said \emph{no} to guess $e_t$,  we have
\[
\textsf{PM}(G_t) = \textsf{PM}(G_{t-1}) \setminus \{ M  \in \textsf{PM}(G_{t-1}) : e_t  \in M \},
\]
and similarly,
\(
\mathcal{S}_{t} = \mathcal{S}_{t-1} \setminus \{  M \in \mathcal{S}_{t-1} :  e_t \in M\}.
\)
But by the inductive hypothesis we then have
\[
\mathcal{S}_{t} = \textsf{PM}(G_{t-1}) \setminus \{ M \in \textsf{PM}(G_{t-1}) : e_t  \in M \} = \textsf{PM}(G_t).
\]
On the other hand, if the adversary answered \emph{yes}, we have by definition \begin{align*}
    \mathcal{S}_t &=  \mathcal{S}_{t-1} \setminus \{ M : M \in \mathcal{S}_{t-1} \text{ and } e_t \not\in M\} \\
    &= \mathcal{S}_{t-1} \setminus \varnothing  = \mathcal{S}_{t-1}\tag{Since the answer to $e_t$ was \emph{yes}}\\
    &= \textsf{PM}(G_{t-1}) \tag{inductive hypothesis} \\
    &= \textsf{PM}(G_t) \tag{Since $G_t = G_{t-1}$ as the answer to $e_t$ was \emph{yes}}.
\end{align*}

We have thus proved that $\mathcal{S}_t = \textsf{PM}(G_t)$. Now, observe that by definition the codebreaker can only win when $|\mathcal{S}_t| = 1$. But this implies $|\textsf{PM}(G_t)| = 1$, and by~\Cref{lemma:upm}, this implies $|E(G_t)| \leq \binom{n+1}{2}$. But $|E(G_t)| \geq |E(G_0)| - t$, from where we get
\[
t \geq |E(G_0)| - |E(G_t)| = n^2 -  |E(G_t)| \geq n^2 - \binom{n+1}{2} = \binom{n}{2}.
\]

Therefore, the lower bound we get against a standard codemaker is at least
\[
\frac{t-n}{k} \geq \frac{\binom{n}{2} -n}{k} = \frac{n^2 - 3n}{2k}. \qedhere
\]
\end{proof}

\section{Lower Bound for the Static Variant}

This section is devoted to the static version of permutation Mastermind with locality considerations. Recall that in this variant, the codebreaker must reveal a fixed sequence of guesses  $\pi_1, \dots, \pi_T$ upfront, subject to the locality constraints, after which they will receive the black-peg feedback $b_1, \dots, b_T$, and then finally make their final guess that must equal the secret permutation $\sigma^\star$ in order to win. The final guess is not subject to the locality restrictions.

The main ingredient of our lower bound is a classic result on extremal graph theory (see~\cite{zhaoGraphTheoryAdditive2023} for a modern presentation).

\begin{theorem}[\cite{Kovari1954, zbMATH03232670}]\label{thm:k22}
    The maximum number of edges an $n$-vertex graph can have without containing a $K_{2,2}$ is $(\frac{1}{2} + o(1))n^{3/2}$. Moreover, the maximum number of edges on a $K_{2,2}$-free bipartite graph with $n$ vertices on each side is $(1 + o(1))n^{3/2}$.
\end{theorem}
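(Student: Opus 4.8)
The plan is to prove the upper and lower bounds separately: the upper bounds by a double-counting argument, and the matching lower bounds by explicit constructions from finite geometry.

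For the upper bound on a general $n$-vertex graph $G$ with $m$ edges, I would count the \emph{cherries} of $G$, i.e.\ the pairs $(v,\{x,y\})$ in which $v$ is adjacent to both $x$ and $y$. On one hand, the cherries centered at a fixed vertex $v$ number exactly $\binom{d(v)}{2}$, so the total is $\sum_{v}\binom{d(v)}{2}$. On the other hand, since $G$ contains no $K_{2,2}$ (equivalently, no $C_4$), any pair $\{x,y\}$ has at most one common neighbor, so the total is at most $\binom{n}{2}$. Combining these and using the convexity of $t\mapsto\binom{t}{2}$ together with Jensen's inequality gives $\binom{n}{2}\ge\sum_v\binom{d(v)}{2}\ge n\binom{\bar d}{2}$ with $\bar d=2m/n$ the average degree, hence $\bar d(\bar d-1)\le n-1$ and so $\bar d\le\tfrac12\bigl(1+\sqrt{4n-3}\bigr)$. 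This yields $m=\tfrac{n}{2}\bar d\le\tfrac{n}{4}\bigl(1+\sqrt{4n-3}\bigr)=\bigl(\tfrac12+o(1)\bigr)n^{3/2}$. For the bipartite version with sides $A,B$ of size $n$ each, I would run the same count but restrict to cherries whose center lies in $B$ and whose two endpoints lie in $A$: there are at most $\binom{n}{2}$ of them, and exactly $\sum_{v\in B}\binom{d(v)}{2}\ge n\binom{m/n}{2}$ of them, so $(m/n)^2-(m/n)\le n-1$ and $m\le\tfrac{n}{2}\bigl(1+\sqrt{4n-3}\bigr)=(1+o(1))n^{3/2}$. The factor of $2$ separating the two estimates is exactly the loss from this one-sided counting.

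For the matching lower bounds I would use incidence structures over finite fields. Given a prime power $q$, write $n_q:=q^2+q+1$. For the bipartite statement, take the incidence (Levi) graph of the projective plane $\mathrm{PG}(2,q)$: it is bipartite with $n_q$ point-vertices and $n_q$ line-vertices, is $(q+1)$-regular, and has girth $6$ — in particular contains no $K_{2,2}$ — because two distinct points lie on a unique common line and dually; its number of edges is $(q+1)n_q=(1+o(1))n_q^{3/2}$. For the general statement I would instead use the Erd\H{o}s--R\'enyi polarity graph $\mathrm{ER}_q$ on $n_q$ vertices, which is $C_4$-free (hence $K_{2,2}$-free) and has $\bigl(\tfrac12+o(1)\bigr)q^3=\bigl(\tfrac12+o(1)\bigr)n_q^{3/2}$ edges. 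To extend these from the sparse set of values $n_q$ to an arbitrary $n$, I would take the largest prime power $q$ with $n_q\le n$, use the corresponding construction, and pad with $n-n_q$ isolated vertices; since the gap between consecutive primes (a fortiori prime powers) is $o(p)$, we have $n_q=(1-o(1))n$, so the edge counts are still $\bigl(\tfrac12+o(1)\bigr)n^{3/2}$ and $(1+o(1))n^{3/2}$ respectively, matching the upper bounds.

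The main obstacle is not the upper bound, which is the clean cherry-counting argument above, but the lower bound. Producing $K_{2,2}$-free graphs of this density genuinely requires the finite-geometry constructions (projective planes, polarities), and forcing the \emph{leading constants} to agree with the upper bound for \emph{every} $n$, rather than only for $n$ of the special shape $q^2+q+1$, is precisely what makes the $o(1)$ error term necessary and ties the statement to the distribution of primes. A secondary technical subtlety is the verification that $\mathrm{ER}_q$ is $C_4$-free: one must handle the absolute (self-conjugate) points of the polarity with care, since these are the vertices whose adjacency pattern could a priori create a forbidden $4$-cycle.
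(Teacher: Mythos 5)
Your proposal is correct, but there is nothing in the paper to compare it against: the paper invokes this statement as a classical black-box result (K\H{o}v\'ari--S\'os--Tur\'an for the upper bounds, with the matching constructions going back to Reiman and Erd\H{o}s--R\'enyi--S\'os), citing the literature rather than proving it. What you wrote is essentially the standard textbook proof, and all the steps check out: the cherry-count $\sum_v \binom{d(v)}{2} \leq \binom{n}{2}$ with Jensen gives $m \leq \frac{n}{4}(1+\sqrt{4n-3})$ in the general case and, counting only cherries centered on one side, $m \leq \frac{n}{2}(1+\sqrt{4n-3})$ in the balanced bipartite case; the incidence graph of $\mathrm{PG}(2,q)$ and the Erd\H{o}s--R\'enyi polarity graph give the matching lower-bound constants $1$ and $\tfrac12$ on $q^2+q+1$ vertices; and the prime-gap argument ($p_{k+1}-p_k = o(p_k)$, already a consequence of the prime number theorem) legitimately transfers these to all $n$ inside the $o(1)$ error, with the caveat you correctly flag about absolute points of the polarity. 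One contextual remark: for the purpose it serves in this paper (the lower bound on $S_\ell(n,k)$), only the bipartite \emph{upper} bound $(1+o(1))n^{3/2}$ is actually used, so the finite-geometry constructions---which you rightly identify as the heavier half of the theorem---are not needed downstream; a reader who only wants the paper's Theorem 5 to go through could stop after your first paragraph.
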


For this proof we will consider a simpler graph than in~\Cref{sec:adaptive}. Let $U_t := ([n] \times \Sigma, E_t)$ be the bipartite graph where an edge $\{i, s_j\}$ is present if and only if no guess $\pi_r$ with $r \leq t$ has $\pi_r(i) = s_j$. Intuitively, $U_t$ corresponds to the \emph{untested} matching edges up to time $t$.
Now, we will prove the relevance of $K_{2,2}$-subgraphs. For this, let us say the static part of a strategy, $\pi_1, \ldots, \pi_T$, is \emph{sufficient}, if for every secret $\sigma^\star$, the only permutation compatible with the transcript 
\(
((\pi_1, b(\pi_1, \sigma^\star)), \dots, (\pi_T, b(\pi_T, \sigma^\star)))
\)
is exactly $\sigma^\star$. Naturally, the static part of any correct static strategy must be sufficient.

\begin{figure}
    \centering
    \begin{tikzpicture}[
    scale=0.9,
    node distance=1.5cm,
    vertex/.style={circle, draw, fill=white, minimum size=6mm, font=\small},
    position/.style={circle, draw=blue!60!black, very thick, fill=blue!10, minimum size=8mm, inner sep=1pt},
    symbol/.style={circle, draw=green!60!black, very thick, fill=green!10, minimum size=8mm, inner sep=1pt},
    edge/.style={thick},
    untested/.style={very thick, black},
    tested/.style={thick, gray!40},
    k22edge/.style={very thick, red, line width=2pt}
]


\node[label, align=center] at (-5, 3.6) {$U_T$};

\node[position] (p1) at (-7, 3) {$1$};
\node[position] (p2) at (-7, 2) {$2$};
\node[position] (p3) at (-7, 1) {$3$};
\node[position] (p4) at (-7, 0) {$4$};
\node[position] (p5) at (-7, -1) {$5$};

\node[symbol] (s1) at (-3, 3) {$s_1$};
\node[symbol] (s2) at (-3, 2) {$s_2$};
\node[symbol] (s3) at (-3, 1) {$s_3$};
\node[symbol] (s4) at (-3, 0) {$s_4$};
\node[symbol] (s5) at (-3, -1) {$s_5$};

\draw[tested] (p1) -- (s2);
\draw[tested] (p1) -- (s4);
\draw[tested] (p2) -- (s1);
\draw[tested] (p2) -- (s5);
\draw[tested] (p3) -- (s2);
\draw[tested] (p3) -- (s5);
\draw[tested] (p4) -- (s1);
\draw[tested] (p5) -- (s3);
\draw[tested] (p5) -- (s5);

\draw[untested] (p1) -- (s1);
\draw[untested] (p1) -- (s5);
\draw[untested] (p2) -- (s2);
\draw[untested] (p4) -- (s5);
\draw[untested] (p5) -- (s1);
\draw[untested] (p5) -- (s2);

\draw[k22edge] (p2) -- (s3);
\draw[k22edge] (p2) -- (s4);
\draw[k22edge] (p4) -- (s3);
\draw[k22edge] (p4) -- (s4);

\node[position, fill=red!30, thick] (p2h) at (-7, 2) {$2$};
\node[position, fill=red!30, thick] (p4h) at (-7, 0) {$4$};
\node[symbol, fill=red!30, thick] (s3h) at (-3, 1) {$s_3$};
\node[symbol, fill=red!30, thick] (s4h) at (-3, 0) {$s_4$};


   \draw[lipicsYellow, line width=5.5pt, -latex] (-2.25, 1) -- (-0.7, 1);


\def\deltaV{-0.4}

\def\deltaL{-2.0}

\node[label, font=\bfseries] at (1.2+\deltaL, 3.3 + \deltaV) {$\sigma^\star$};
\node[position, minimum size=6mm] (i1) at (2+\deltaL, 3+ \deltaV) {$1$};
\node[position, minimum size=6mm] (i2) at (2+\deltaL, 2.2+ \deltaV) {$2$};
\node[position, minimum size=6mm] (i3) at (2+\deltaL, 1.4+ \deltaV) {$3$};
\node[position, minimum size=6mm] (i4) at (2+\deltaL, 0.6+ \deltaV) {$4$};
\node[position, minimum size=6mm] (i5) at (2+\deltaL, -0.2+ \deltaV) {$5$};

\node[symbol, minimum size=6mm] (t1) at (4+\deltaL, 3+ \deltaV) {$s_1$};
\node[symbol, minimum size=6mm] (t2) at (4+\deltaL, 2.2+ \deltaV) {$s_2$};
\node[symbol, minimum size=6mm] (t3) at (4+\deltaL, 1.4+ \deltaV) {$s_3$};
\node[symbol, minimum size=6mm] (t4) at (4+\deltaL, 0.6+ \deltaV) {$s_4$};
\node[symbol, minimum size=6mm] (t5) at (4+\deltaL, -0.2+ \deltaV) {$s_5$};

\draw[edge, blue!70!black, line width=1.2pt] (i1) -- (t1);
\draw[edge, orange!70!black, line width=1.2pt] (i2) -- (t3);
\draw[edge, blue!70!black, line width=1.2pt] (i3) -- (t2);
\draw[edge, orange!70!black, line width=1.2pt] (i4) -- (t4);
\draw[edge, blue!70!black, line width=1.2pt] (i5) -- (t5);

\node[position, minimum size=6mm, fill=red!30, thick] at (2+\deltaL, 2.2+ \deltaV) {$2$};
\node[position, minimum size=6mm, fill=red!30, thick] at (2+\deltaL, 0.6+ \deltaV) {$4$};
\node[symbol, minimum size=6mm, fill=red!30, thick] at (4+\deltaL, 1.4+ \deltaV) {$s_3$};
\node[symbol, minimum size=6mm, fill=red!30, thick] at (4+\deltaL, 0.6+ \deltaV) {$s_4$};

\node[label, font=\bfseries] at (5.2+\deltaL, 3.3+ \deltaV) {$\sigma^\dagger$};
\node[position, minimum size=6mm] (j1) at (6+\deltaL, 3+ \deltaV) {$1$};
\node[position, minimum size=6mm] (j2) at (6+\deltaL, 2.2+ \deltaV) {$2$};
\node[position, minimum size=6mm] (j3) at (6+\deltaL, 1.4+ \deltaV) {$3$};
\node[position, minimum size=6mm] (j4) at (6+\deltaL, 0.6+ \deltaV) {$4$};
\node[position, minimum size=6mm] (j5) at (6+\deltaL, -0.2+ \deltaV) {$5$};

\node[symbol, minimum size=6mm] (u1) at (8+\deltaL, 3+ \deltaV) {$s_1$};
\node[symbol, minimum size=6mm] (u2) at (8+\deltaL, 2.2+ \deltaV) {$s_2$};
\node[symbol, minimum size=6mm] (u3) at (8+\deltaL, 1.4+ \deltaV) {$s_3$};
\node[symbol, minimum size=6mm] (u4) at (8+\deltaL, 0.6+ \deltaV) {$s_4$};
\node[symbol, minimum size=6mm] (u5) at (8+\deltaL, -0.2+ \deltaV) {$s_5$};

\draw[edge, blue!70!black, line width=1.2pt] (j1) -- (u1);
\draw[edge, orange!70!black, line width=1.2pt] (j2) -- (u4);  
\draw[edge, blue!70!black, line width=1.2pt] (j3) -- (u2);
\draw[edge, orange!70!black, line width=1.2pt] (j4) -- (u3);  
\draw[edge, blue!70!black, line width=1.2pt] (j5) -- (u5);

\node[position, minimum size=6mm, fill=red!30, thick] at (6+\deltaL, 2.2+ \deltaV) {$2$};
\node[position, minimum size=6mm, fill=red!30, thick] at (6+\deltaL, 0.6+ \deltaV) {$4$};
\node[symbol, minimum size=6mm, fill=red!30, thick] at (8+\deltaL, 1.4+ \deltaV) {$s_3$};
\node[symbol, minimum size=6mm, fill=red!30, thick] at (8+\deltaL, 0.6+ \deltaV) {$s_4$};


\draw[dashed, thick, gray!50, rounded corners] (1.5+\deltaL, -0.7+ \deltaV) rectangle (4.5+\deltaL, 3.5+ \deltaV);
\draw[dashed, thick, gray!50, rounded corners] (5.5+\deltaL, -0.7+ \deltaV) rectangle (8.5+\deltaL, 3.5+ \deltaV);

\end{tikzpicture}
    \caption{Illustration of the obstacle introduced by a $K_{2,2}$ for uniquely determining $\sigma^\star$. On the left, tested edges are colored light gray, while untested edges are colored black, except for those forming a $K_{2,2}$ which are highlighted.}
    \label{fig:k22}
\end{figure}
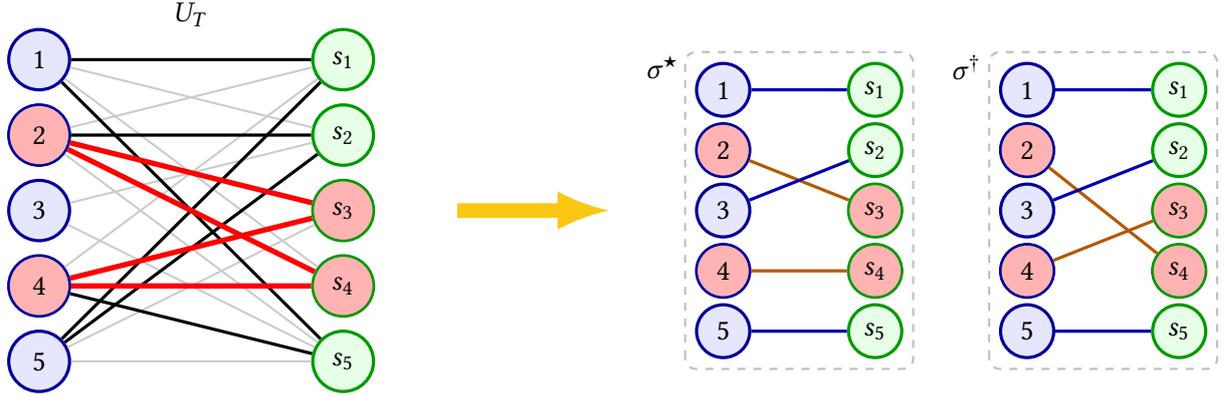

\begin{lemma}
    If $\pi_1, \ldots, \pi_T$ is the static part of a correct static strategy, then $U_T$ is $K_{2,2}$-free.
\end{lemma}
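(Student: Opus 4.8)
The plan is a proof by contradiction built around the observation that a $K_{2,2}$ in $U_T$ is precisely enough freedom to swap two values of the secret without changing any feedback. So suppose $U_T$ contains a copy of $K_{2,2}$: positions $i_1, i_2 \in [n]$ and symbols $s_a, s_b \in \Sigma$ such that none of the four pairs $(i_1, s_a)$, $(i_1, s_b)$, $(i_2, s_a)$, $(i_2, s_b)$ is tested, i.e., no guess $\pi_r$ satisfies $\pi_r(i_1) \in \{s_a, s_b\}$ or $\pi_r(i_2) \in \{s_a, s_b\}$. I will produce two distinct secrets that are indistinguishable to the codebreaker given $\pi_1, \dots, \pi_T$, contradicting sufficiency of the static part (which holds because the strategy is correct).

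First I would fix a concrete pair of secrets. Take any permutation $\sigma^\star \colon [n] \to \Sigma$ with $\sigma^\star(i_1) = s_a$ and $\sigma^\star(i_2) = s_b$; such a permutation exists, since the partial assignment on $\{i_1, i_2\}$ is injective and extends to a bijection on the remaining $n-2$ positions and symbols arbitrarily. Then let $\sigma^\dagger$ agree with $\sigma^\star$ everywhere except $\sigma^\dagger(i_1) = s_b$ and $\sigma^\dagger(i_2) = s_a$; this is again a permutation, and $\sigma^\dagger \neq \sigma^\star$ because $s_a \neq s_b$.

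Next I would verify that $b(\pi_r, \sigma^\star) = b(\pi_r, \sigma^\dagger)$ for every $r \in \{1, \dots, T\}$. Since $\sigma^\star$ and $\sigma^\dagger$ coincide outside $\{i_1, i_2\}$, each guess $\pi_r$ contributes the same match indicator at every position $i \notin \{i_1, i_2\}$. At position $i_1$, the $K_{2,2}$ hypothesis gives $\pi_r(i_1) \notin \{s_a, s_b\} = \{\sigma^\star(i_1), \sigma^\dagger(i_1)\}$, so $\pi_r$ matches neither secret there; the same argument applies at $i_2$. Hence the black-peg scores agree for all $r$, so $\sigma^\dagger$ is compatible with the transcript $\big((\pi_1, b(\pi_1, \sigma^\star)), \dots, (\pi_T, b(\pi_T, \sigma^\star))\big)$, contradicting sufficiency. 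Therefore $U_T$ is $K_{2,2}$-free.

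I do not expect a genuine obstacle here; the argument is short. The only points requiring minor care are checking that $\sigma^\star$ and $\sigma^\dagger$ are honest bijections $[n] \to \Sigma$ (immediate from the construction) and aligning the indexing convention in the definition of $U_T$ so that ``untested'' indeed refers to all of $\pi_1, \dots, \pi_T$. The conceptual content is simply that an untested $K_{2,2}$ lets the codemaker transpose the two involved symbol-assignments while keeping every response fixed.
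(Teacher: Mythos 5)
Your proposal is correct and follows essentially the same argument as the paper: exhibit a secret $\sigma^\star$ using two opposite edges of the untested $K_{2,2}$, swap along the alternating $4$-cycle to get $\sigma^\dagger$, and check position by position that every guess scores identically on both, contradicting sufficiency. Your side remark about the indexing convention in the definition of $U_T$ is well taken (the paper's ``$r<t$'' should be read so that $U_T$ excludes all edges tested by $\pi_1,\dots,\pi_T$, which is what both your proof and the paper's use).
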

\begin{proof}
    Suppose for a contradiction that $U_T$ contains a $K_{2,2}$ consisting of the four edges
    \[
    \{i, s_x\}, \{i, s_y\}, \{j, s_x\}, \{j, s_y\}.
    \] An illustration with a concrete example is depicted in~\Cref{fig:k22}.
    Now, let $\sigma^\star$ be an arbitrary permutation such that $\sigma^{\star}(i) = s_x$ and $\sigma^{\star}(j) = s_y$. From $\sigma^\star$ we define now the permutation $\sigma^\dagger$ by
    $
    \sigma^\dagger(k) = \begin{cases}
        s_y & \text{if } k= i\\
        s_x & \text{if } k = j\\
        \sigma^\star(k) & \text{otherwise}.
    \end{cases}
    $
    In other words, $\sigma^\dagger$ is the matching obtained from  $\sigma^\star$ via the alternating $4$-cycle corresponding to the $K_{2,2}$. Now, note that for any  guess $\pi_t$ we have $b(\pi_t, \sigma^\star) = b(\pi_t, \sigma^\dagger)$. Indeed, let $k \in [n]$ and we prove by cases that $\sigma^\star(k) = \pi_t(k)$ if and only if $\sigma^{\dagger}(k) = \pi_t(k)$:
    \begin{itemize}
        \item If $k \not\in \{i, j\}$, then $\sigma^\star(k) = \sigma^\dagger(k)$, and thus the claim holds.
        \item If $k \in \{i, j\}$, then by construction  we have $\sigma^{\star}(k) \in \{s_x, s_y\}$ as well as $\sigma^{\dagger}(k) \in \{s_x, s_y\}$. But by definition of $U_T$, we have $\pi_t(k) \not\in \{s_x, s_y\}$. Therefore neither $\sigma^\star(k) = \pi_t(k)$ nor $\sigma^\dagger(k) = \pi_t(k)$.
    \end{itemize}
 We thus have that $b(\pi_t, \sigma^\star) = b(\pi_t, \sigma^\dagger)$ for any guess $\pi_t$, which implies both $\sigma^\star$ and $\sigma^\dagger$ are compatible with transcript $((\pi_1, b(\pi_1, \sigma^\star)), \dots, (\pi_T, b(\pi_T, \sigma^\star)))$, and thus $\pi_1,\dots, \pi_T$ is not sufficient.
\end{proof}


Our lower bound is now a direct consequence of the previous lemma and the quantitative bounds on $K_{2,2}$-free graphs.

\begin{theorem}
For any $k\geq 2,$ we have
     \(
     S_\ell(n, k) \geq \frac{n^2- (1+o(1))n^{3/2}}{k}.
     \)
\end{theorem}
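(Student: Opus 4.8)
The plan is to combine the preceding lemma (that the untested-edges graph associated with a sufficient static part is $K_{2,2}$-free) with the Kővári–Sós–Turán bound of~\Cref{thm:k22}, and to add one counting argument that exploits $\ell_k$-locality to limit how fast edges can leave the untested graph. Concretely, let $\pi_1,\dots,\pi_T$ be the static part of a correct static $\ell_k$-local strategy witnessing $S_\ell(n,k)$, so that $T \le S_\ell(n,k)$; it therefore suffices to show $T \ge \frac{n^2-(1+o(1))n^{3/2}}{k}$. Since the strategy is correct, its static part is sufficient, so by the previous lemma the untested graph $U_T$ — bipartite on $[n]$ and $\Sigma$, with $n$ vertices on each side — is $K_{2,2}$-free, and hence $|E(U_T)| \le (1+o(1))n^{3/2}$ by~\Cref{thm:k22}.

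Next I would bound from below how many edges of $U_0 = K_{n,n}$ (which has $n^2$ edges) survive into $U_T$, i.e. control the number of edges ever ``tested'' by the guesses. The first guess $\pi_1$ removes exactly the $n$ matching edges $\{i,\pi_1(i)\}$. For each $r \ge 2$, the $\ell_k$-locality constraint gives $|D(\pi_{r-1},\pi_r)| \le k$; at every position $i$ with $\pi_r(i) = \pi_{r-1}(i)$ the edge $\{i,\pi_r(i)\}$ was already removed when $\pi_{r-1}$ was played, so $\pi_r$ removes at most $k$ \emph{new} edges. Consequently at most $n + (T-1)k$ edges of $K_{n,n}$ are removed in total, which yields
\[
n^2 - n - (T-1)k \;\le\; |E(U_T)| \;\le\; (1+o(1))n^{3/2}.
\]
Rearranging gives $Tk \ge n^2 - n + k - (1+o(1))n^{3/2} \ge n^2 - (1+o(1))n^{3/2}$, where in the last step the nonnegative term $+k$ is dropped and the lower-order $-n$ is absorbed into the error term since $n = o(n^{3/2})$. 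Dividing by $k$ gives $T \ge \frac{n^2-(1+o(1))n^{3/2}}{k}$, hence $S_\ell(n,k) \ge T$ is at least this quantity.

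There is no genuinely hard step once the lemma and~\Cref{thm:k22} are available; the only point that needs a moment's care is the edge-removal count, where one must observe that it is precisely the $\le k$ positions in which consecutive guesses differ that can contribute a previously-untested edge, so each guess after the first costs at most $k$ edges rather than $n$ — this is exactly where locality enters and turns the trivial $\Omega(n)$ bound into $\Omega(n^2/k)$. As a minor bookkeeping remark, the final locality-free guess made after receiving the feedback plays no role: sufficiency of $\pi_1,\dots,\pi_T$ alone already forces $U_T$ to be $K_{2,2}$-free, so it is harmless whether or not that guess is included in the count.
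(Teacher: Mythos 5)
Your proof is correct and follows essentially the same route as the paper: sufficiency forces $U_T$ to be $K_{2,2}$-free, the Kővári–Sós–Turán bound caps $|E(U_T)|$ at $(1+o(1))n^{3/2}$, and the locality-based count (first guess removes $n$ edges, each later guess at most $k$ new ones) yields the bound after absorbing the lower-order $n$ term into the $o(1)$. The only difference is cosmetic bookkeeping ($(T-1)k$ versus the paper's slightly looser $kT$), which does not affect the result.
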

\begin{proof}
    Consider a correct strategy whose static part is $\pi_1, \dots, \pi_T$.
    Note that before the first guess we have $|E(U_0)| = n^2$. The first guess tests $n$ edges, so $|E(U_1)| = n^2 - n$, and after this special first guess the $\ell_k$-locality condition implies for $2 \leq t \leq T$ that
    \[
    |E(U_{t})| \geq |E(U_{t-1})| - k.
    \]
    Therefore, 
    $|E(U_T)| \geq |E(U_1)| - k T = n^2 -n - kT$.
    Now, since the static strategy is correct, we must have that $U_T$ is $K_{2,2}$-free, and by~\Cref{thm:k22} it must hold that 
    \(
    |E(U_T)| \leq (1 + o(1))n^{3/2}.
    \)
    Combining our inequalities, we have
    \(
    (1 + o(1))n^{3/2} \geq n^2 -n - kT
    \), from where $T \geq \frac{n^2 - (1 + o(1))n^{3/2}}{k}$, thus concluding the proof.
\end{proof}

\section{Diameter-Based Upper Bounds}\label{sec:diameter-upper}

In this section we prove the upper bounds for~\Cref{thm:adaptive-ell} and~\Cref{thm:static} by a rather direct simulation adaptation of the $O(n \lg n)$ upper bounds of~\cite{larcher2022solvingstaticpermutationmastermind,ouali2016querycomplexityblackpegabmastermind}. In a nutshell, it suffices to show that we can pay $O(n/k)$ to simulate regular guesses in the $\ell_k$-local model. This idea generalizes to the following lemma.

\begin{lemma}\label{lemma:diam-prod}
    Let $\Gamma$ be a set of generators for $S_n$. Let  $S_\Gamma(n)$  denote the minimum number of guesses for a correct static strategy for the $\Gamma$-Permutation-Mastermind game, and $A_\Gamma(n)$ denote the minimum number of guesses in the worst case of an adaptive strategy. Then,
\[
    S_\Gamma(n) \leq S(n) \cdot \mathrm{diam}(\Cay(S_n, \Gamma)) \quad \text{and} \quad  A_\Gamma(n) \leq A(n) \cdot \mathrm{diam}(\Cay(S_n, \Gamma))
\]
\end{lemma}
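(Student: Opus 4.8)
The plan is to simulate an optimal non-local strategy, realizing each of its ``abstract'' guesses by a shortest path in $\Cay(S_n,\Gamma)$. Since $\Gamma$ generates $S_n$, this graph is connected, so any two permutations $\pi,\pi'$ are joined by a walk $\pi = w_0, w_1, \ldots, w_\ell = \pi'$ with $w_{j-1}^{-1}w_j \in \Gamma$ for every $j$ and $\ell \le \mathrm{diam}(\Cay(S_n,\Gamma))$; the codebreaker may legally play $w_1,\ldots,w_\ell$ consecutively in the $\Gamma$-game and thereby ``move'' from $\pi$ to $\pi'$. The key point I would stress is that during such a detour the codemaker still returns the black-peg score of each intermediate guess, which is only more information than in the non-local game; the simulating strategy simply discards the feedback of the intermediate guesses and passes along only $b(\pi',\sigma^\star)$ to the underlying non-local strategy.

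\emph{Adaptive bound.} First I would fix an adaptive strategy $\mathcal A$ for standard permutation Mastermind that wins in at most $A(n)$ guesses in the worst case, and denote by $\pi_1,\pi_2,\ldots$ the guesses it produces. Then I would define a $\Gamma$-strategy $\mathcal A'$: play $v_0 := \pi_1$ (the starting vertex is unrestricted), obtain $b(\pi_1,\sigma^\star)$ and feed it to $\mathcal A$; whenever $\mathcal A$ outputs its next guess $\pi_{i+1}$, walk from the current vertex $\pi_i$ to $\pi_{i+1}$ along a shortest path, making the intermediate guesses and ignoring their answers, then feed $b(\pi_{i+1},\sigma^\star)$ to $\mathcal A$; halt as soon as a score of $n$ is observed. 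Because $\mathcal A$ terminates after at most $A(n)$ guesses and each of its at most $A(n)-1$ transitions costs at most $\mathrm{diam}(\Cay(S_n,\Gamma))$ additional guesses, $\mathcal A'$ uses at most $1 + (A(n)-1)\,\mathrm{diam}(\Cay(S_n,\Gamma)) \le A(n)\,\mathrm{diam}(\Cay(S_n,\Gamma))$ guesses, using $\mathrm{diam}(\Cay(S_n,\Gamma))\ge 1$. Hence $A_\Gamma(n) \le A(n)\,\mathrm{diam}(\Cay(S_n,\Gamma))$.

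\emph{Static bound.} I would take a sufficient static list $\pi_1,\ldots,\pi_T$ with $T = S(n)$. Since it is fixed in advance, so is the concatenation of shortest paths $\pi_1 \rightsquigarrow \pi_2 \rightsquigarrow \cdots \rightsquigarrow \pi_T$, which is a legal sequence $v_0,\ldots,v_{T'}$ of $\Gamma$-moves with $T'+1 \le 1 + (T-1)\,\mathrm{diam}(\Cay(S_n,\Gamma))$. The feedback returned for this list contains, as a sublist, $b(\pi_1,\sigma^\star),\ldots,b(\pi_T,\sigma^\star)$, which by sufficiency determines $\sigma^\star$; the codebreaker then sets its final unrestricted guess to $\sigma^\star$ and wins. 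Thus $S_\Gamma(n) \le 1 + (S(n)-1)\,\mathrm{diam}(\Cay(S_n,\Gamma)) \le S(n)\,\mathrm{diam}(\Cay(S_n,\Gamma))$.

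The only step that requires care is the bookkeeping: checking that discarding the intermediate feedback is genuinely harmless and that the constructed objects are still bona fide adaptive resp.\ static strategies for the $\Gamma$-game (in the static case, the entire detour sequence must be committed upfront, which is fine since $\pi_1,\ldots,\pi_T$ and the chosen geodesics are all fixed in advance). Everything else reduces to the elementary fact that a shortest path has length at most the diameter. One caveat worth a remark: if $\Gamma$ is not symmetric then $\Cay(S_n,\Gamma)$ is directed and ``shortest path'' must be read as a directed shortest path, so $\mathrm{diam}$ should be taken as the directed diameter; for the generator sets relevant here (such as $L^{(k)}_n$) $\Gamma$ is closed under inverses and this distinction disappears.
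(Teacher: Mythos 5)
Your proposal is correct and matches the paper's argument: both replace each transition $\pi_i \to \pi_{i+1}$ of an optimal unrestricted strategy by a geodesic in $\Cay(S_n,\Gamma)$ (equivalently, a minimal word over $\Gamma$ for $(\pi_i)^{-1}\pi_{i+1}$), discard the intermediate feedback, and count $1+(S(n)-1)\,\mathrm{diam} \leq S(n)\cdot\mathrm{diam}$ (likewise for $A$). Your explicit remarks on why discarding extra feedback is harmless and on the static list being fixed in advance are just the bookkeeping the paper leaves implicit.
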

\begin{proof}
    We write the proof in terms of the static setting, but the adaptive one is identical. Let $\pi_1, \dots, \pi_{S(n)}$ be a correct static strategy for the standard permutation Mastermind. Then, for each $1 \leq i \leq S(n)$, write $(\pi_{i-1})^{-1}\pi_{i}$ as a word $\omega_i := \gamma^{(i)}_1 \dots \gamma^{(i)}_{r_i}$ over $\Gamma$ that minimizes $r_i$. Then, by definition of the diameter $r_i \leq \text{diam}(\Cay(S_n, \Gamma))$. Now, construct the guess sequence
    \[
    \pi_\Gamma := \pi_1 \gamma^{(2)}_1 \dots \gamma^{(2)}_{r_2} \gamma^{(3)}_1 \dots \gamma^{(3)}_{r_3} \dots \gamma^{(S(n))}_1 \dots \gamma^{(S(n))}_{r_{S(n)}},
    \]
    which is $\Gamma$-Permutation-Mastermind strategy, and has size at most
    \[
   1 + \sum_{i=2}^{S(n)} r_i \leq 1 + (S(n)-1) \text{diam}(\Cay(S_n, \Gamma)) \leq  S(n) \cdot \text{diam}(\Cay(S_n, \Gamma)).\qedhere
    \]
    
\end{proof}

We now prove a tight bound on the diameter of the Cayley graph corresponding to the $\ell_k$-local setting.

\begin{lemma}\label{lem:routing}
Fix $k \geq 2$. Recall that $L^{(k)}_{n}$ is the set of permutations
with support on at most $k$ elements. Then, 
\[
    \mathrm{diam}(\Cay(S_n, L^{(k)}_{n})) \leq \left\lceil \frac{n-1}{k-1}\right\rceil.
\]

\end{lemma}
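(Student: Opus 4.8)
The plan is to show that any permutation $\sigma \in S_n$ can be written as a product of at most $\lceil (n-1)/(k-1) \rceil$ generators from $L^{(k)}_n$; since the Cayley graph is vertex-transitive, bounding the distance from the identity to an arbitrary $\sigma$ suffices to bound the diameter. The natural idea is a greedy ``batch sorting'' argument: think of $\sigma$ as a permutation we want to turn into the identity (equivalently, we build $\sigma$ from the identity), and at each step we apply one generator of support size at most $k$ that increases the number of fixed points by at least $k-1$, until at most one position remains unfixed (and a permutation fixing all but one point is the identity, hence already done).

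The key step is the following observation about cycle structure. Write $\sigma$ in disjoint cycle notation. A single permutation with support of size at most $k$ can be chosen to ``resolve'' a batch of $k-1$ non-fixed points as follows: pick any $k-1$ positions that are currently not fixed, say forming parts of one or more cycles of $\sigma$; one can left-multiply (or right-multiply, depending on the convention chosen for composing guesses) $\sigma$ by a permutation supported on those $k-1$ positions together with at most one extra ``junction'' position, in such a way that those $k-1$ positions become fixed points of the product. Concretely, if we peel off $k-1$ points from the non-trivial cycles of $\sigma$ one batch at a time, each batch can be straightened out using a permutation whose support is those $k-1$ points plus one boundary point where the cycle gets ``rerouted''; that boundary point is the only one whose image may still be wrong, and it gets absorbed into the next batch. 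Hence each generator application reduces the number of non-fixed points by at least $k-1$, and starting from at most $n$ non-fixed points we reach $\leq 1$ non-fixed point (thus the identity) after at most $\lceil (n-1)/(k-1) \rceil$ steps.

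In more detail, I would set it up by induction on $n$, or more cleanly by a direct counting argument: let $m$ denote the number of non-fixed points of the current permutation; I claim there is a generator in $L^{(k)}_n$ whose application leaves at most $\max(m-(k-1),\, 1)$ non-fixed points. Given the disjoint cycle decomposition restricted to the support, greedily walk along cycles collecting $k-1$ points; the $k$-th point of the chosen window serves as the junction, and the explicit permutation is the one that maps each collected point to where $\sigma$ should eventually send it while shortcutting the cycle through the junction. Verifying that this permutation has support at most $k$ and produces exactly $k-1$ new fixed points is the routine bookkeeping. Iterating gives a word of length $\lceil (n-1)/(k-1)\rceil$ expressing $\sigma$, establishing the bound.

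The main obstacle will be handling the case where the $k-1$ points we want to batch straddle several short cycles, or where a cycle has length less than $k-1$ so that we cannot fill a full window from a single cycle — the junction-point accounting must be done carefully so that exactly one ``carry'' point is passed between consecutive batches and no point is double-counted, ensuring the per-step gain is genuinely $k-1$ and not $k-2$. A clean way to avoid case analysis is to first reduce to the case of a single $n$-cycle (a permutation that is a single cycle is the ``hardest'' case and any permutation's support decomposes into cycles that can be processed independently, only helping us), and then exhibit the explicit factorization of an $n$-cycle into $\lceil (n-1)/(k-1)\rceil$ permutations of support $\leq k$, for instance by chaining overlapping $k$-point cyclic shifts $(\,1\,2\,\cdots\,k\,)(\,k\,\,k{+}1\,\cdots\,2k{-}1\,)\cdots$, whose product is easily checked to be an $n$-cycle.
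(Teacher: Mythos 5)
Your main line is the same as the paper's: bound the distance from the identity by induction on the number $m$ of moved points, fixing $k-1$ of them per generator by walking along the cycles with a single ``junction''/carry point, and the arithmetic $\lceil (m-1)/(k-1)\rceil$ is handled correctly. However, the step you defer as ``routine bookkeeping'' is exactly the mathematical content of the lemma, and neither of your two ways of discharging it works as stated. It is \emph{not} true that any $k-1$ non-fixed positions can be fixed by one generator of support at most $k$: take $\sigma = (1\,2)(3\,4)(5\,6)$ and $k=4$; to fix positions $\{1,3,5\}$ the generator would have to send $1\mapsto 2$, $3\mapsto 4$, $5\mapsto 6$, hence move at least six points. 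The batch must be chosen \emph{consecutively along the cycles} (the first $k$ points of a cycle-by-cycle listing), and one must then verify that the resulting permutation really has support at most $k$; this is precisely what the paper checks, by observing that $\pi$ maps each of the first $k-1$ listed points back into the chosen window $T$, so exactly one element of $T$ is left over to serve as the image of $v_k$. Your sketch gestures at this construction but never verifies it, and your looser phrasing (``pick any $k-1$ positions'') is false.

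More importantly, the ``clean way to avoid case analysis'' you propose as a substitute is incorrect. Reducing to a single $n$-cycle on the grounds that cycles ``can be processed independently, only helping us'' fails: independent processing costs $\sum_i \lceil (c_i-1)/(k-1)\rceil$ over the nontrivial cycle lengths $c_i$, which can strictly exceed $\lceil (n-1)/(k-1)\rceil$. For instance, with $k=4$ and $\sigma$ a product of $n/2$ disjoint transpositions, cycle-by-cycle processing needs $n/2$ generators, while the claimed bound is about $n/3$; meeting the bound forces you to batch points from \emph{several} cycles into one generator, which is exactly the cross-cycle case you flagged as the main obstacle and then tried to avoid. So the single $n$-cycle is not the hardest case under this accounting, and the chained overlapping $k$-cycle factorization, though fine for one long cycle, does not prove the lemma. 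The gap is closed by carrying out the windowed, cross-cycle construction as in the paper's proof.
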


\begin{proof}
Since $\Cay(S_n, L^{(k)}_{n})$ is a Cayley graph and thus vertex transitive, it suffices to upper bound the distance between the identity permutation $\textsf{id}_n$, and an arbitrary permutation $\pi$.

Let $M := D(\pi,\mathrm{id}_n)=\{i\in[n]:\pi(i)\neq i\}$ be the support of $\pi$, and let
$m:=|M|$. 
Letting $\dist := \dist_{\mathrm{Cay}(S_n, L^{(k)}_n)}$, we
will prove by induction on $m$ that the stronger statement
\(
\dist(\mathrm{id},\pi) \leq \left\lceil \frac{m-1}{k-1}\right\rceil
\)
holds for any $\pi$.

If $m=0$, then $\pi=\mathrm{id}_n$ and the bound holds. Otherwise $m\geq 2$, since $m=1$ is not possible. If $m\leq k$, then $\pi\in L^{(k)}_n$, so $\dist(\mathrm{id}_n,\pi)=1$, which agrees with the
bound.

Assume now that $m>k$. We will construct $\gamma\in L^{(k)}_n$ such that
$|D(\gamma^{-1}\pi,\mathrm{id})|\leq m-(k-1)$, and then invoke the inductive hypothesis.

Write $\pi$ as a product of disjoint cycles $C_1, \ldots, C_r$, and for each nontrivial cycle $C_i$ fix any cyclic ordering of its
elements. By listing the elements of the nontrivial cycles in these cyclic orders (cycle by cycle), we
obtain a sequence $v_1,\ldots,v_m$ that enumerates $M$ with the property that for every $1\leq j<m$,
either $\pi(v_j)=v_{j+1}$, or else $v_j$ is the last element of its cycle (in which case $\pi(v_j)$ is the
first element of that same cycle and hence appears earlier in the list).

Let $T:=\{v_1,\ldots,v_k\}$ be the first $k$ elements of the previously obtained sequence, and observe that for every $x\in T\setminus\{v_k\}$ we have
$\pi(x)\in T$: indeed, if $x=v_j$ with $j<k$, then either $\pi(x)=v_{j+1}\in T$, or $x$ is the last
element of its cycle and $\pi(x)$ is the first element of that cycle, which also lies in $T$.

Since $\pi$ is injective and $\pi(T\setminus\{v_k\})\subseteq T$, the set $\pi(T\setminus\{v_k\})$ has size
$k-1$, so there is a unique element $y\in T\setminus \pi(T\setminus\{v_k\})$. Define $\gamma\in S_n$ by
\[
\gamma(x)=
\begin{cases}
\pi(x) & \text{if } x\in T\setminus\{v_k\},\\
y & \text{if } x=v_k,\\
x & \text{otherwise.}
\end{cases}
\]
By construction, $\gamma$ only moves elements of $T$, so $\gamma\in L^{(k)}_n$. Moreover, for every
$x\in T\setminus\{v_k\}$ we have
\(
(\gamma^{-1}\pi)(x)=\gamma^{-1}(\pi(x))=\gamma^{-1}(\gamma(x))=x,
\)
so $\gamma^{-1}\pi$ fixes all elements of $T\setminus\{v_k\}$. Since $T\subseteq M$, these are $k-1$
indices that were moved by $\pi$, and thus
\(
|D(\gamma^{-1}\pi,\mathrm{id}_n)|\ \leq m-(k-1).
\)

Now, since $\pi=\gamma\cdot(\gamma^{-1}\pi)$ and $\gamma\in L^{(k)}_n$, we have
\[
\dist(\mathrm{id}_n,\pi)\ \leq 1+\dist(\mathrm{id}_n,\gamma^{-1}\pi)
\ \leq 1+\left\lceil \frac{(m-(k-1))-1}{k-1}\right\rceil
\ =\ \left\lceil \frac{m-1}{k-1}\right\rceil,
\]
where the second inequality is the inductive hypothesis. This completes the induction.

Finally, since $m\leq n$, we conclude that for every $\pi\in S_n$, we have
\(
\dist(\mathrm{id}_n,\pi)\ \leq \left\lceil \frac{n-1}{k-1}\right\rceil,
\)
and hence $\diam(\mathrm{Cay}(S_n, L^{(k)}_n)) \leq\left\lceil \frac{n-1}{k-1}\right\rceil$.
\end{proof}



\begin{proposition}[Upper bounds]\label{thm:static-upper}
In the static $\ell_k$-local model,
\[
S_{\ell}(n,k)\leq 28 n \lg n  \cdot \left\lceil\frac{n-1}{k-1}\right\rceil,
\]
and in the adaptive $\ell_k$-local model,
\[
A_{\ell}(n,k)\leq \frac{n^2 \lg n}{k}(1+o(1)).
\]
\end{proposition}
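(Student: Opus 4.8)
The plan is to deduce both bounds as immediate instances of the generic simulation inequality of~\Cref{lemma:diam-prod}, applied with the generating set $\Gamma := L^{(k)}_n$, combined with the diameter bound of~\Cref{lem:routing} and the best known locality-free query complexities collected in~\Cref{table:res}. By the definitions in the preliminaries, a static (resp.\ adaptive) $\ell_k$-local strategy is exactly a static (resp.\ adaptive) $L^{(k)}_n$-Permutation-Mastermind strategy, so $S_\ell(n,k) = S_{L^{(k)}_n}(n)$ and $A_\ell(n,k) = A_{L^{(k)}_n}(n)$, and the whole argument reduces to bounding $\mathrm{diam}\bigl(\Cay(S_n,L^{(k)}_n)\bigr)$ and plugging in a known bound for $S(n)$ or $A(n)$.

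For the static bound, \Cref{lemma:diam-prod} gives $S_\ell(n,k) \le S(n)\cdot\mathrm{diam}\bigl(\Cay(S_n,L^{(k)}_n)\bigr)$, \Cref{lem:routing} bounds the diameter by $\lceil (n-1)/(k-1)\rceil$, and substituting the static upper bound $S(n) \le 28 n \lg n$ of~\cite{larcher2022solvingstaticpermutationmastermind} yields exactly the claimed $S_\ell(n,k)\le 28 n \lg n\cdot\lceil(n-1)/(k-1)\rceil$. The one point I would be careful to spell out is that the simulated sequence is a legal static $\ell_k$-local strategy: the word expansion in~\Cref{lemma:diam-prod} only prepends intermediate routing guesses before each original guess (whose black-peg answers are simply discarded), it keeps the starting guess (which has no locality restriction) and the final post-feedback guess untouched, and the static $\ell_k$-local model explicitly exempts that final guess from locality.

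For the adaptive bound the reduction is identical: $A_\ell(n,k)\le A(n)\cdot\mathrm{diam}\bigl(\Cay(S_n,L^{(k)}_n)\bigr)\le A(n)\cdot\lceil(n-1)/(k-1)\rceil$, and inserting the locality-free adaptive bound $A(n)\le(1+o(1)) n \lg n$ of~\cite{KO1986449,ouali2016querycomplexityblackpegabmastermind} together with $\lceil(n-1)/(k-1)\rceil\le (n-1)/(k-1)+1$ gives $A_\ell(n,k)\le\frac{n^2\lg n}{k}(1+o(1))$. The detail to verify here is that the adaptive $\Gamma$-strategy can compute each simulated standard guess $\pi_i$ before routing to it, which is legitimate: the codebreaker receives black-peg feedback on every guess it plays, in particular on $\pi_{i-1}$, and hence knows precisely the transcript that the standard adaptive strategy consumes to produce $\pi_i$ (the extra feedback from the routing guesses is simply ignored). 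In summary, no idea beyond~\Cref{lemma:diam-prod} and~\Cref{lem:routing} is needed, and the main work is just bookkeeping; the only genuine subtlety is cosmetic, namely that $\lceil(n-1)/(k-1)\rceil$ equals $n/k$ only up to a factor $k/(k-1)$, so the $(1+o(1))$ in the adaptive statement should be read with $n\to\infty$ (and is cleanest when $k$ is also allowed to grow), while for fixed $k$ the same computation still yields the order bound $A_\ell(n,k)=O(n^2\lg n/k)$ used elsewhere in the paper.
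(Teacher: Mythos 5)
Your proposal is correct and follows essentially the same route as the paper: both bounds are obtained by instantiating \Cref{lemma:diam-prod} with $\Gamma = L^{(k)}_n$, bounding the diameter via \Cref{lem:routing}, and plugging in $S(n)\le 28 n\lg n$ from~\cite{larcher2022solvingstaticpermutationmastermind} and $A(n)\le (1+o(1))n\lg n$ from~\cite{ouali2016querycomplexityblackpegabmastermind}. The subtlety you flag is real but is a looseness in the paper's own statement rather than a gap in your argument: for fixed $k$ the product gives $\frac{n^2\lg n}{k-1}(1+o(1))$, so the stated $\frac{n^2\lg n}{k}(1+o(1))$ is only exact up to the factor $\frac{k}{k-1}$, exactly as you note.
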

\begin{proof}
    Larcher, Martinsson, and Steger proved that $S(n)  \leq 28 n \lg n$~\cite{larcher2022solvingstaticpermutationmastermind}, and thus combining~\Cref{lemma:diam-prod} and~\Cref{lem:routing} directly yields the first result. 

    On the other hand, El Ouali et al.~\cite{ouali2016querycomplexityblackpegabmastermind} proved that $A(n) \leq (n-3) \lceil \lg n \rceil + \frac{5}{2}n - 1$. Again, combining~\Cref{lemma:diam-prod} and~\Cref{lem:routing} yields the result.  
\end{proof}
%

\section{Window Locality}

In this section we consider the $w_k$-local model. Intuitively, one would expect this more restrictive model to require many more guesses than the $\ell_k$-local model. However, perhaps surprisingly, showing a good separation result is one of the questions we will leave open.

Let us start by remarking that diameter-based upper bounds, as in~\Cref{sec:diameter-upper}, cannot help us here, since in the $w_k$-local model the diameter is $\Omega(n^2/k^2)$ as we show next. 

\begin{proposition}
    Let $W_k$ be the set of permutations $\omega \in S_n$ for which there exists an index $i$ such that $\omega(j) = j$ for $j \not \in \{i, i+1, \ldots, i+k-1\}$. Then, 
    \[
    \diam(\Cay(S_n, W_k)) \geq \frac{\lceil n^2 /2 \rceil}{k(k-1)} \geq \frac{1}{2}\left(\frac{n}{k}\right)^2.
    \]
\end{proposition}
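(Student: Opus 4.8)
The plan is to equip the vertex set $S_n$ of $\Cay(S_n, W_k)$ with an integer-valued \emph{potential} $\Phi$ that vanishes at the identity, changes by at most $2k(k-1)$ along every edge, and attains a value of order $n^2/2$ somewhere; the diameter is then at least that value divided by $2k(k-1)$. The natural choice is the \emph{total displacement} $\Phi(\pi) := \sum_{i=1}^{n}|\pi(i)-i|$. We have $\Phi(\id)=0$, and for the far-away vertex I would take the reversal $\rho(i)=n+1-i$: a one-line computation (split $\sum_{i=1}^{n}|n+1-2i|$ at the midpoint, or treat the cases $n=2m$ and $n=2m+1$) gives $\Phi(\rho)=\lfloor n^2/2\rfloor$, which matches the claimed $\lceil n^2/2\rceil$ up to an irrelevant $\pm1$ for odd $n$ (and is in fact the maximum value of $\Phi$, though only a lower bound is needed). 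Also $\Cay(S_n,W_k)$ is connected, since $W_k$ contains all adjacent transpositions. Granting the Lipschitz bound, any shortest $\id$–$\rho$ path $\id=\tau_0,\dots,\tau_T=\rho$ telescopes:
\[
\Phi(\rho)=\sum_{t=1}^{T}\bigl(\Phi(\tau_t)-\Phi(\tau_{t-1})\bigr)\le 2k(k-1)\,T,
\]
so $\diam(\Cay(S_n,W_k))\ge T\ge \lceil n^2/2\rceil/(2k(k-1))$, which is the asserted bound, and in particular $=\Omega((n/k)^2)$; the final displayed simplification is then routine.

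The crux is the Lipschitz estimate. Let $\pi$ and $\pi'=\pi\gamma$ be adjacent, with $\gamma\in W_k$ supported on a window $J=\{i,i+1,\dots,i+k\}$ of $k+1$ consecutive indices. Since $\gamma$ fixes $[n]\setminus J$ pointwise, $\pi'$ and $\pi$ agree off $J$, so only indices in $J$ contribute to $\Phi(\pi')-\Phi(\pi)$. Here one must resist a coordinatewise estimate — both $|\pi'(j)-j|$ and $|\pi(j)-j|$ may be of order $n$ — and instead reindex the $J$-sum by $\ell=\gamma(j)$, which also ranges over $J$:
\[
\sum_{j\in J}\bigl|\pi(\gamma(j))-j\bigr|=\sum_{\ell\in J}\bigl|\pi(\ell)-\gamma^{-1}(\ell)\bigr|,
\]
so that $\Phi(\pi')-\Phi(\pi)=\sum_{\ell\in J}\bigl(|\pi(\ell)-\gamma^{-1}(\ell)|-|\pi(\ell)-\ell|\bigr)$. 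Applying $|x-a|-|x-b|\le|a-b|$ to each summand removes the dependence on $\pi$ and gives
\[
\bigl|\Phi(\pi')-\Phi(\pi)\bigr|\le\sum_{\ell\in J}|\gamma^{-1}(\ell)-\ell|=\sum_{j\in J}|\gamma(j)-j|,
\]
the displacement of $\gamma$ within its own window. Since $\gamma$ merely permutes the $k+1$ consecutive integers of $J$, this is at most the maximum displacement of a permutation of $k+1$ elements, i.e.\ $\lfloor(k+1)^2/2\rfloor$; and one checks $\lfloor(k+1)^2/2\rfloor\le 2k(k-1)$ for every $k\ge2$, with equality at $k=2$. This completes the Lipschitz bound, and hence the proof.

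I expect the reindexing step — recognising that the naive per-coordinate bound fails, and replacing it by the substitution $\ell=\gamma(j)$ followed by the triangle inequality — to be the only substantive point; the displacement computation for the reversal and the two numeric comparisons ($\lfloor(k+1)^2/2\rfloor$ versus $2k(k-1)$, and $\lceil n^2/2\rceil/(2k(k-1))$ versus $\tfrac12(n/k)^2$) are routine. As an alternative potential, the number of inversions works equally well: since a window generator preserves the multiset of values occupying $J$, it alters the inversion count only through pairs with both endpoints in $J$, hence by at most $\binom{k+1}{2}$, while its maximum $\binom{n}{2}$ is again of order $n^2/2$.
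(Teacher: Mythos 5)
Your proof is correct and takes essentially the same route as the paper: the total-displacement potential $\sum_i |\pi(i)-i|$, evaluated at the reversal permutation, combined with a per-edge Lipschitz bound over the window and a telescoping argument along a shortest path. If anything you are more careful than the paper's own write-up, which works with a window of size $k$ rather than the size-$(k+1)$ window in the statement and records the reversal's displacement as $\lceil n^2/2\rceil$ where it is $\lfloor n^2/2\rfloor$ for odd $n$; your sharper bound $\lfloor (k+1)^2/2\rfloor \leq 2k(k-1)$ is exactly what is needed to cover the case $k=2$, where the cruder estimate $(k+1)k$ would not suffice.
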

\begin{proof}
    Consider the permutation $\pi^\dagger \colon i \mapsto n+1 - i$, and let us prove that $\dist(\pi^\dagger, \id_n) \geq \frac{\lceil n^2 /2 \rceil}{k(k-1)}$. For this, define the function $\delta \colon S_n \to \mathbb{N}$ by
    \(
    \delta(\pi) := \sum_{i=1}^n |\pi(i) - i|.
    \)
    Note that $\delta(\id_n) = 0$, whereas 
    \[
    \delta(\pi^\dagger) = \sum_{i=1}^n |n+1 - i -i| =\sum_{i=1}^n |n+1 - 2i| = \left \lceil \frac{n^2 }{2} \right \rceil,
    \]
    where the last equality can be seen by splitting the sum into two parts depending on whether the argument of the absolute value is positive or not.
    
    To conclude, it suffices to show that, for any $\pi \in S_n$ and $\omega \in W_k$, we have
    \begin{equation}\label{eq:delta}
    \delta(\omega  \pi ) - \delta(\pi) \leq  k(k-1).
     \end{equation}
    Indeed, if $\dist(\pi^\dagger, \id_n) < \frac{\lceil n^2 /2 \rceil}{k(k-1)}$, we could write $\pi^\dagger$ as $\omega_1 \omega_2 \dots \omega_t \id_n$ with $t < \frac{\lceil n^2 /2 \rceil}{k(k-1)}$ and $\omega_i \in W_k$ for all $i$. But then we would have
    \[
    \delta(\pi^\dagger) = \sum_{i=1}^t \delta(\omega_i \dots \omega_1 \id_n) - \delta(\omega_{i+1} \dots \omega_1 \id_n) \leq t \cdot k(k-1) < \lceil n^2 /2 \rceil,
    \]
    which contradicts the fact that $\delta(\pi^\dagger) = \lceil n^2 /2 \rceil$.   
    We now prove~\eqref{eq:delta}. let $I_\omega$ be the interval $\{i, \ldots, i+k-1\}$ from the definition of $W_k$ in which $\omega$ is potentially different from $\id_n$, and then observe that
    \begin{align*}
        \delta( \omega\pi) &= \sum_{i=1}^n |\omega(\pi(i)) - i|\\
        &= \sum_{i : \pi(i) \not\in I_\omega} | \pi(i) - i|  + \sum_{i : \pi(i) \in I_\omega} |\omega(\pi(i))-i|\\
        &= \sum_{i : \pi(i) \not\in I_\omega} | \pi(i) - i|  + \sum_{i : \pi(i) \in I_\omega} |(\omega(\pi(i))- \pi(i)) + (\pi(i) - i)|\\
        &\leq \sum_{i : \pi(i) \not\in I_\omega} | \pi(i) - i|  +  \sum_{i : \pi(i) \in I_\omega} |\omega(\pi(i))- \pi(i)| + \sum_{i : \pi(i) \in I_\omega} | \pi(i) - i| \\
        &= \delta(\pi)  +   \sum_{i : \pi(i) \in I_\omega} |\omega(\pi(i))- \pi(i)|.
    \end{align*}
    But by definition of $W_k$, we have that $\omega(I_\omega) = I_\omega$, and thus
    $|\omega(j) - j| \leq k-1$ for $j \in I_\omega$.  Therefore,
    \[
     \delta( \omega\pi) \leq \delta(\pi) + \sum_{i : \pi(i) \in I_\omega} (k-1) = \delta(\pi) + k(k-1),
    \]
    which proves~\eqref{eq:delta} and thus concludes the proof.
\end{proof}

Naturally, the diameter of the Cayley graph yields a lower bound on $A_w(n, k)$, but in this case such a bound is weaker than the one we get from the trivial inequality $A_\ell(n, k) \leq A_w(n, k)$, combined with~\Cref{thm:adaptive-ell}.

While using the diameter bound between guesses of a normal permutation Mastermind strategy would only yield an $O\left(\frac{n^3 \lg n}{k^2}\right)$ upper bound, we show that $O(n^2)$ is always a valid upper bound for $S_w(n, k)$, and it holds already for $k = 2$. Since the diameter lower bound immediately shows $S_w(n, 2) = \Omega(n^2)$, we focus on proving the upper bound of~\Cref{thm:window}.

\begin{proposition}
    We have $S_w(n, 2) = O(n^2)$.
\end{proposition}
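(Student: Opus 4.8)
The plan is to design an explicit static strategy that uses $w_2$-local moves — that is, each guess differs from the previous one by swapping two adjacent positions (a single transposition $(i\ i{+}1)$) — and wins using $O(n^2)$ guesses. The key observation, following the approach of Li and Zhu~\cite{Li_2024} for their Wordle-feedback setting, is that even though we only get the aggregate black-peg score $b(\pi_t,\sigma^\star)$ (not which positions changed), the \emph{difference} $b(\pi_{t+1},\sigma^\star) - b(\pi_t,\sigma^\star)$ after an adjacent swap $(i\ i{+}1)$ takes values in $\{-2,-1,0,1,2\}$ and is determined entirely by whether positions $i$ and $i{+}1$ held correct symbols before and after the swap. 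So each adjacent transposition, together with the score it produces, is a tiny local query about the relationship between $\pi_t(i)$, $\pi_t(i{+}1)$, $\sigma^\star(i)$ and $\sigma^\star(i{+}1)$.

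The concrete strategy is a bubble-sort-style sweep. First I would fix an arbitrary starting permutation $\pi_0$. Then I would perform $n-1$ full left-to-right passes, where each pass performs the adjacent transpositions $(1\ 2), (2\ 3), \dots, (n-1\ n)$ in sequence (returning to position $1$ between passes costs at most another $n$ adjacent swaps per pass, so the total guess count is $O(n^2)$). The point is that a single full pass of adjacent swaps, read together with the score changes, lets the codebreaker bubble the element that belongs in the current leftmost unfixed position all the way to where it belongs: whenever swapping $(i\ i{+}1)$ makes the score go \emph{up}, we have learned useful positional information; whenever a swap would be "wrong" we can immediately undo it (one more swap) — crucially the undo is still $w_2$-local. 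Over $n-1$ passes we can drive $\pi_t$ to agree with $\sigma^\star$ on positions $1,\dots,n-1$, hence everywhere, and the final free guess finishes. Since each pass is $O(n)$ guesses and there are $O(n)$ passes, the bound is $O(n^2)$.

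The main obstacle — and the place where care is needed — is that a single adjacent swap $(i\ i{+}1)$ with black-peg feedback does not always fully disambiguate the four-way relationship among $\pi_t(i),\pi_t(i{+}1),\sigma^\star(i),\sigma^\star(i{+}1)$: a score change of $0$ is consistent with "both positions were wrong before and after" but also with "both were right before and are wrong after" (the latter can be ruled out since we only swap when we are not yet committed, but one still has to track invariants carefully). The fix, again as in~\cite{Li_2024}, is to maintain the invariant that at the start of pass $p$ the permutation $\pi_t$ already agrees with $\sigma^\star$ on positions $1,\dots,p-1$, and that the symbol $\sigma^\star(p)$ currently sits at some known-or-detectable position $\geq p$; then a left-to-right sweep of adjacent swaps, observing the score, lets us locate that symbol (the score strictly increases exactly at the swap that places it into position $p$, by an argument on the partial matching) and park it, while any swap that strictly decreases the score is undone with one extra adjacent swap. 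Bounding the undo overhead and the "rewind to position $1$" overhead by $O(n)$ per pass, and verifying the matching invariant is preserved, are the routine-but-necessary steps; I would state the invariant precisely as a claim and prove it by induction on $p$, then sum the costs.
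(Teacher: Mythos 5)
Your plan has a fundamental mismatch with the statement being proved: the proposition is about $S_w(n,2)$, the \emph{static} complexity, where the entire sequence of guesses must be announced before any feedback is received (only the single final guess, made after all scores arrive, is free). Your bubble-sort strategy is inherently adaptive: you decide whether to ``undo'' a swap, and where to ``park'' an element, based on the observed score changes, and your invariant (``at the start of pass $p$ the permutation agrees with $\sigma^\star$ on positions $1,\dots,p-1$'') can only be maintained by branching on feedback. Such a strategy bounds $A_w(n,2)$, not $S_w(n,2)$, and an adaptive bound does not transfer to the static setting. The paper's proof is designed precisely around this constraint: it uses a fixed, feedback-independent ``conveyor belt'' sequence ($n$ passes of the adjacent transpositions $(1\,2),(2\,3),\dots,(n-1\ n)$), argues correctness first against a \emph{generous} codemaker who reveals which swapped position is correct, and then simulates that generous feedback statically by replacing each adjacent transposition with a fixed tour through all $6$ arrangements of a window of $3$ consecutive positions, whose six scores (decoded offline at the end) uniquely determine the relevant positional information; this decoding is verified by a finite computation.

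A secondary gap, even within your adaptive framing, is the disambiguation step you label routine: after an adjacent swap with score change $+1$ you know exactly one of the two swapped positions became correct, but not which one, and a change of $0$ or $-1$ leaves similar ambiguity; your claim that ``the score strictly increases exactly at the swap that places it into position $p$'' does not by itself resolve which of the two entries landed correctly. This is exactly the point the paper's $6$-permutation window gadget exists to settle, so it cannot be dismissed as bookkeeping. (Also, no ``rewind to position $1$'' cost is needed: the window constraint only restricts where consecutive guesses differ, so a pass may end with a swap at $(n-1,n)$ and the next guess may immediately differ at $(1,2)$.)
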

\begin{proof}
    First, let us assume the codemaker is \emph{generous}, in the sense of~\Cref{sec:adaptive}, and thus reveals after every guess which of the $k=2$ changed positions is correct. We will then show how to get rid of this assumption with only a constant overhead.

    We will now use the following ``conveyor belt'' strategy (see~\Cref{alg:conveyor-belt}): take the sequence of $n-1$ adjacent transpositions $(1 \, 2), (2\, 3), \dots, (n-1 \; n)$ and repeat it $n$ times. For example for $n = 4$, this strategy guesses:
    \begin{align*}
1.\;&(\textcolor{red}{1}\,\textcolor{blue}{2}\,\textcolor{green!60!black}{3}\,4)
&\qquad
6.\;&(\textcolor{green!60!black}{3}\,4\,\textcolor{blue}{2}\,\textcolor{red}{1})\\
2.\;&(\textcolor{blue}{2}\,\textcolor{red}{1}\,\textcolor{green!60!black}{3}\,4)
&
7.\;&(\textcolor{green!60!black}{3}\,4\,\textcolor{red}{1}\,\textcolor{blue}{2})\\
3.\;&(\textcolor{blue}{2}\,\textcolor{green!60!black}{3}\,\textcolor{red}{1}\,4)
&
8.\;&(4\,\textcolor{green!60!black}{3}\,\textcolor{red}{1}\,\textcolor{blue}{2})\\
4.\;&(\textcolor{blue}{2}\,\textcolor{green!60!black}{3}\,4\,\textcolor{red}{1})
&
9.\;&(4\,\textcolor{red}{1}\,\textcolor{green!60!black}{3}\,\textcolor{blue}{2})\\
5.\;&(\textcolor{green!60!black}{3}\,\textcolor{blue}{2}\,4\,\textcolor{red}{1})
&
10.\;&(4\,\textcolor{red}{1}\,\textcolor{blue}{2}\,\textcolor{green!60!black}{3})
\end{align*}
    The nice property of this sequence of adjacent transpositions is that each element goes through every position, and thus, since the codemaker is generous, after this sequence of guesses, the codebreaker will know exactly what the secret permutation is. 
\begin{algorithm}[t]
\caption{\textsc{ConveyorBelt}$(n)$}
\label{alg:conveyor-belt}
\begin{algorithmic}[1]
\State $t \gets 1$
\State $\pi_t \gets \id$ \Comment{initial guess}
\For{$m = 1$ \textbf{to} $n$} \Comment{repeat a full left-to-right pass $n$ times}
    \State $\pi' \gets \pi_t$
    \For{$i = 1$ \textbf{to} $n-1$} \Comment{adjacent transposition $(i\;\; i+1)$}
        \State swap $\pi'(i)$ and $\pi'(i+1)$
        \State $\pi_{t+1} \gets \pi'$
        \State $t \gets t+1$
    \EndFor
\EndFor
\end{algorithmic}
\end{algorithm}
More formally, for each value $1 \leq m \leq n$ (line 3 of~\Cref{alg:conveyor-belt}), the first element of $\pi'$ right after line 4 is $m$, and it will be at position $2 \leq j \leq n$ after exactly $j-1$ guesses. Thus, the exact position of $m$ must be revealed by the generous codemaker on that pass.

After having identified the position of each element, the codebreaker knows the target permutation, and can reach the secret permutation through a shortest path. Using the well-known identity $\diam(S_n, W_2) = \binom{n}{2}$, the codebreaker will reach the secret permutation in $O(n^2)$ guesses.

Now, it only remains to show how to simulate the generous codemaker with only a constant factor overhead. For this, consider for example that for $n = 3$ the last guess was $(1 \, 2 \, 3)$ and we now want to guess $(2 \, 1 \, 3)$ but receiving generous feedback. We will do the sequence of guesses:
    \(
    (1 \, 2 \, 3) \to (1 \, 3 \, 2) \to (3 \, 1 \, 2) \to (3 \, 2\, 1) \to (2\, 3 \, 1)  \to (2\, 1\, 3).
    \)
    Intuitively, the scores received throughout this sequence will uniquely determine the ``generous'' feedback for the guess $(2 \, 1 \, 3)$.
    More in general, we will replace each transposition $(i \, i+1)$ by $6$ guesses that traverse the $6$ possible permutations of $\{i, i+1, i+2\}$ (or $\{i-1, i, i+1\}$ in case $i = n-1$). The fact that going through the $6$ permutations allows us to determine if any of the $3$ cycled elements should go into one of the $3$ positions of the cycle can be checked via finite computation. Indeed, the following Python3 code is enough:

\begin{lstlisting}[language=Python, style=mystyle]
import itertools

possible_masks = list(set(itertools.permutations([0,0,1,2,3], 3)))
guesses = list(itertools.permutations([1,2,3]))

results = {}
for pm in possible_masks:
    results[pm] = []
    for g in guesses:
        results[pm].append(sum(1 for m, g in zip(pm, g) if m == g))

for pm1, pm2 in itertools.combinations(possible_masks, 2):
    assert results[pm1] != results[pm2]
\end{lstlisting}
    

    To conclude the proof, we observe that the strategy is indeed static.
\end{proof}

\section{The Complexity of Locality}

This section is dedicated to proving~\Cref{thm:np-hardness}. We will start by proving NP-hardness of the permutation variant without considering the locality restriction, and then show how a modification of the proof yields hardness even in the $\ell_3$-local setting. We will conclude the section by proving that in the $\ell_2$-local setting there is a randomized polynomial algorithm.

Our reduction is from \textsf{Monotone-1-in-3-SAT}, where the input is a 3-CNF formula with only positive literals, and the goal is to decide if some assignment of its variables assigns exactly 1 literal per clause to true.  A more formal description is presented below. Interestingly, the first proof of NP-completeness of Mastermind was a reduction from \textsf{1-in-3-SAT} (non-monotone)~\cite{de2004np}, but our construction is different.

\newcommand{\yes}{\textcolor{green!50!black}{\textsc{Yes}}}

\newcommand{\no}{\textcolor{red!60!black}{\textsc{No}}}

\begin{center}
\fbox{\begin{tabular}{lp{11cm}}
{\small PROBLEM} : & \textsf{Monotone-1-in-3-SAT} \\
{\small INPUT} : & A 3-CNF formula $\varphi := \bigwedge_{i=1}^m C_i$, over variables $x_1, \ldots, x_n$, with each clause $C_i := (x^{(i)}_1 \lor x^{(i)}_2 \lor x^{(i)}_3)$, and each $x^{(i)}_j \in \{x_1, \ldots, x_n\}$.
\\ 
{\small OUTPUT} : & {\yes}, if there is an assignment $\tau \colon \{x_1, \ldots, x_n\} \to \{0, 1\}$ such that for every clause $C_i$ we have $\tau\left(x^{(i)}_1\right) + \tau\left(x^{(i)}_2\right) + \tau\left(x^{(i)}_3\right)  = 1$. {\no} otherwise.
\end{tabular}}
\end{center}

\begin{theorem}[\cite{goldComplexityAutomatonIdentification1978}, see~\cite{DARMANN202145}]
    \textsf{Monotone-1-in-3-SAT} is $\mathrm{NP}$-complete.
\end{theorem}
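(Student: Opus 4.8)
The proof splits into the two standard parts: membership in \textsf{NP}, which is trivial, and \textsf{NP}-hardness, which carries all the content. For \emph{membership}, a truth assignment $\tau\colon\{x_1,\dots,x_n\}\to\{0,1\}$ is a certificate of size $O(n)$, and one checks in $O(m)$ time that $\tau(x^{(i)}_1)+\tau(x^{(i)}_2)+\tau(x^{(i)}_3)=1$ for every clause $C_i$; hence \textsf{Monotone-1-in-3-SAT} is in \textsf{NP}.

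For \emph{hardness}, the plan is to reduce from (non-monotone) \textsf{1-in-3-SAT}, whose \textsf{NP}-completeness is classical: the ``exactly one of three'' relation is one of the relations in Schaefer's dichotomy theorem that fails every tractability criterion, and it also admits a direct reduction from \textsf{Exact-Cover-by-3-Sets}. Given a \textsf{1-in-3-SAT} instance $\varphi$ over variables $v_1,\dots,v_n$, whose clauses may contain negated literals, I build a monotone instance $\psi$ as follows. For each $v_k$ introduce a fresh variable $\bar v_k$; in every clause of $\varphi$ replace $\neg v_k$ by $\bar v_k$ and leave positive occurrences untouched, so that each clause becomes monotone. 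Then, for a single auxiliary variable $f$, add for each $k$ the \emph{complementarity clause} $(v_k \lor \bar v_k \lor f)$, together with a constant-size set of monotone exactly-one clauses that force $f$ to be false: for instance $(f \lor f \lor g)$ if repeated literals within a clause are permitted (under the exactly-one rule this forces $f=0$, $g=1$), or else the three clauses $(f\lor a\lor b)$, $(a\lor b\lor c)$, $(b\lor c\lor f)$ over fresh $a,b,c$, whose only satisfying assignments have $f=0$. This $\psi$ is a monotone 3-CNF computable from $\varphi$ in polynomial time.

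Correctness is then a short verification in both directions. If $\tau\models\varphi$, set $\bar v_k:=1-\tau(v_k)$ and $f:=0$ (with the gadget variables assigned consistently): each rewritten clause has exactly the same set of true literals as the corresponding clause of $\varphi$, hence exactly one, and $(v_k\lor\bar v_k\lor f)$ has exactly one true literal because $v_k\neq\bar v_k$. Conversely, any satisfying assignment of $\psi$ is forced to set $f=0$ by the forcing gadget, so each clause $(v_k\lor\bar v_k\lor f)$ forces $\bar v_k=\neg v_k$; reinterpreting the rewritten clauses as clauses of $\varphi$ shows $\varphi$ is satisfiable. Hence \textsf{Monotone-1-in-3-SAT} is \textsf{NP}-hard, and together with membership, \textsf{NP}-complete.

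The one delicate point — the step I would spend the most care on — is the \emph{monotonization gadget}: exhibiting monotone exactly-one clauses that pin $f$ to false while leaving the original variables unconstrained. This is a finite gadget-correctness check of exactly the same flavour as the one carried out in the proof of~\Cref{thm:window}, and is best done by explicit enumeration of the handful of cases. Everything else (the two directions above, the polynomiality of the reduction, and \textsf{NP} membership) is routine bookkeeping. If one prefers not to invoke Schaefer's theorem, the alternative is to reduce all the way from \textsf{3-SAT}, which only additionally requires a per-clause gadget converting ``at least one literal true'' into ``exactly one literal true''; the monotonization step is then identical.
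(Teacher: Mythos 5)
Your proposal is correct, but note that the paper does not actually prove this statement: it is imported as a black-box citation (Gold, via the Darmann--D\"ocker survey), so there is no in-paper argument to compare against. What you give is the standard monotonization reduction from (non-monotone) \textsf{1-in-3-SAT}, and it checks out: replacing $\neg v_k$ by a fresh $\bar v_k$, adding the complementarity clauses $(v_k \lor \bar v_k \lor f)$, and pinning $f$ to false does exactly what you claim. Both of your forcing gadgets are sound --- under the paper's definition a clause may repeat a variable (each $x^{(i)}_j$ merely lies in $\{x_1,\dots,x_n\}$), so $(f\lor f\lor g)$ is admissible and forces $f=0$, $g=1$; and for the repeat-free alternative $(f\lor a\lor b)$, $(a\lor b\lor c)$, $(b\lor c\lor f)$, subtracting the exactly-one equations gives $f=c$ and $a=f$, hence $2f+b=1$, so the unique solution is $f=a=c=0$, $b=1$, as needed. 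With $f=0$ the complementarity clauses force $\bar v_k=\neg v_k$, and the two directions of the equivalence are then immediate, since each rewritten clause has the same number of true literals as its original. The only dependency worth flagging is that you still lean on the NP-completeness of non-monotone \textsf{1-in-3-SAT} (Schaefer, or a reduction from \textsf{3-SAT} with a per-clause exactly-one gadget), which is itself a classical cited result rather than something you prove; that is a perfectly reasonable place to stop, and it leaves your argument at essentially the same level of self-containment as the paper's citation, just one reduction closer to first principles.
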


Now we present the \textsf{Permutation-Mastermind-SAT} problem.

\begin{center}
\fbox{\begin{tabular}{lp{11cm}}
{\small PROBLEM} : & \textsf{Permutation-Mastermind-SAT} \\
{\small INPUT} : & A sequence of permutations $\pi_1, \ldots, \pi_T \in S_n$, and a sequence of black-peg scores $(b_1, \ldots, b_T) \in \{0, \ldots, n\}$.
\\ 
{\small OUTPUT} : & {\yes}, if there is some permutation $\sigma^\star \in S_n$ such that $b(\pi_t, \sigma^\star) = b_t$ for each $1 \leq t \leq T$.
{\no} otherwise.
\end{tabular}}
\end{center}

\begin{theorem}\label{thm:np-complete-1}
    \textsf{Permutation-Mastermind-SAT} is $\mathrm{NP}$-complete.
\end{theorem}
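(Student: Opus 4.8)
Membership in NP is immediate: given a candidate secret $\sigma^\star$, one checks all $T$ black-peg scores in polynomial time. For NP-hardness the plan is to reduce from \textsf{Monotone-1-in-3-SAT}, which I may assume (this is standard --- a clause with a repeated variable either forces a value or is unsatisfiable, so it can be removed by propagation, and one can always pad with dummy variables occurring in no clause) has at least five variables $x_1,\dots,x_n$ and clauses each mentioning three \emph{distinct} variables.

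The ground set will consist of $2n$ positions and $2n$ symbols, grouped into one \emph{variable gadget} per $x_i$: positions $p_i,q_i$ and symbols $\mathsf{T}_i,\mathsf{F}_i$. Call a secret \emph{intended} if it maps $\{p_i,q_i\}$ onto $\{\mathsf{T}_i,\mathsf{F}_i\}$ for every $i$; intended secrets are in bijection with truth assignments via $\tau(x_i)=1\iff\sigma^\star(p_i)=\mathsf{T}_i$, and I write $\sigma^\star_\tau$ for the intended secret of $\tau$. To force a compatible secret to be intended I would add $2(n-1)$ \emph{structural guesses}, all with demanded black-peg score $0$: for each $d\in\{1,\dots,n-1\}$, the cyclic-shift permutation $p_i\mapsto\mathsf{T}_{i+d},\ q_i\mapsto\mathsf{F}_{i+d}$ (indices mod $n$) together with its $\mathsf{T}\leftrightarrow\mathsf{F}$-swapped version. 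Each of these uses only \emph{cross-gadget} edges (those joining a position of one gadget to a symbol of another), so demanding score $0$ exactly forbids the cross-gadget edges it contains; as $d$ ranges over all nonzero shifts these guesses jointly forbid \emph{every} cross-gadget edge, and a bijection avoiding all of them is precisely an intended secret.

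For each clause $C=(x_p\lor x_q\lor x_r)$ I would add a single \emph{clause guess} $\pi_C$ with demanded score $1$: set $\pi_C(p_p)=\mathsf{T}_p$, $\pi_C(p_q)=\mathsf{T}_q$, $\pi_C(p_r)=\mathsf{T}_r$; cycle the symbols $\mathsf{F}_p,\mathsf{F}_q,\mathsf{F}_r$ via $q_p\mapsto\mathsf{F}_q,\ q_q\mapsto\mathsf{F}_r,\ q_r\mapsto\mathsf{F}_p$; and on every other gadget $i$ let $\pi_C$ act by a fixed-point-free shift $p_i\mapsto\mathsf{T}_{\rho(i)},\ q_i\mapsto\mathsf{F}_{\rho(i)}$, which exists because $n-3\geq 2$. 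After checking that $\pi_C$ is a genuine permutation, the point is that on an intended secret $\sigma^\star_\tau$ all of $\pi_C$'s matches occur at $p_p,p_q,p_r$ --- at every other position $\pi_C$ uses a cross-gadget edge, which $\sigma^\star_\tau$ can never match --- so $b(\pi_C,\sigma^\star_\tau)$ equals the number of true variables among $x_p,x_q,x_r$. Hence demanding $b(\pi_C,\sigma^\star)=1$ imposes exactly the 1-in-3 condition on $C$, and the demanded value $1$ is the same regardless of which of the three variables is true. Putting the $2(n-1)+m$ guesses together, a secret is compatible with the whole transcript iff it equals $\sigma^\star_\tau$ for an assignment $\tau$ that is a 1-in-3 model of $\varphi$; hence the (polynomial-time constructible) instance is a yes-instance iff $\varphi$ is 1-in-3 satisfiable.

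The step I expect to need the most care is the soundness direction ``$\varphi$ satisfiable $\Rightarrow$ yes-instance'': since the reduction must commit to a score vector \emph{before} knowing a model, the construction has to be arranged so that \emph{every} 1-in-3 model yields a secret with the \emph{same} score vector ($0$ on all structural guesses, $1$ on all clause guesses) --- this is why the clause gadget is engineered to be ``model-oblivious'', detecting only how many of its three variables are true and nothing about the rest of $\tau$, and uniformly for all $\tau$. The remaining obligations --- that the structural guesses are jointly satisfiable (they are, since an intended secret shares no edge with any cross-gadget permutation) and collectively forbid all $\Theta(n^2)$ cross-gadget edges, and that each $\pi_C$ is a well-defined permutation of the $2n$-element set --- are routine verifications once the padding to $n\geq 5$ is in place.
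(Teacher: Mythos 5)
Your reduction is correct, and it establishes the same statement by the same high-level route as the paper (NP membership is trivial; hardness via a polynomial-time reduction from \textsf{Monotone-1-in-3-SAT} using per-variable gadgets, score-$0$ ``structural'' guesses that confine any compatible secret to the gadgets, and one clause guess per clause whose score counts the true variables), but the gadgets are genuinely different. The paper uses blocks of size $3$, encodes truth by which of the two derangements of $S_3$ acts inside a block, spends $1+6\binom{n}{2}$ score-$0$ guesses (identity plus all block-swap permutations) to enforce block preservation and fixed-point-freeness, and gives each clause guess score $3$, exploiting that each clause block contributes $0$ or $3$; outside the clause blocks the guess is the identity, which can never match because of the first guess. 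Your construction uses size-$2$ gadgets $\{p_i,q_i\}\to\{\mathsf{T}_i,\mathsf{F}_i\}$, needs only $2(n-1)$ structural shift guesses (a quadratic saving in transcript length), and clause guesses of score $1$; the price is that you must pad the guess into a full permutation with cross-gadget edges (the $\mathsf{F}$-cycle on the clause gadgets and the fixed-point-free shift $\rho$ elsewhere), which forces the extra hypotheses that clauses have three distinct variables and $n\geq 5$. Those hypotheses are harmless --- the restricted version of \textsf{Monotone-1-in-3-SAT} is NP-hard by the standard references --- though your one-line ``propagation'' justification is slightly glossed (eliminating a forced-false variable from a clause leaves a 1-in-2 constraint, so it is cleaner to simply invoke hardness of the distinct-variable version rather than claim an easy self-reduction). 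All the verifications you flag as routine (the structural and clause guesses are genuine permutations, the structural guesses jointly forbid every cross-gadget edge, and every 1-in-3 model induces the same score vector) do check out, so both directions of the equivalence hold.
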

\begin{proof}
    Membership is trivial, since it suffices to guess $\sigma^\star \in S_n$. For hardness, we reduce from~\textsf{Monotone-1-in-3-SAT}. Let $\varphi := \bigwedge_{i=1}^m C_i$, over variables
    $x_1, \ldots, x_n$, be an input instance of~\textsf{Monotone-1-in-3-SAT}.
    Then, we set $N = 3n$, and we will construct an instance of~\textsf{Permutation-Mastermind-SAT} over $S_N$.

    The first guess will be simply the identity $\pi_1 := \id_N$, and the score is $b_1 := 0$. This enforces that the secret permutation $\sigma^\star$ holds $\sigma^\star(i) \neq i$ for each $1 \leq i \leq N$.
    In order to define the next guesses we will need some notation.
    For each $1 \leq i \leq n$, let us define the \emph{block} $B_i := (3i-2, 3i-1, 3i)$. 
    Intuitively, the first part of our construction will enforce that in the secret permutation $\sigma^\star = (s_1 \, s_2 \ldots s_N)$, we have $\{ s_{3i-2}, s_{3i-1}, s_{3i} \} = \{{3i-2}, {3i-1}, {3i}\}$, meaning that the secret permutation is only permuting within blocks of 3 elements.
    
    Using notation $\oplus$ for concatenation, we can define the identity permutation $\id_N \in S_N$ as $B_1 \oplus B_2 \dots \oplus B_n$.  
    Now, for each $1 \leq i < j \leq n$ and $\sigma \in S_3$, we define the permutation 
    \[
    \gamma_{i,j, \sigma} := B_1 \oplus \dots \oplus B_{i-1} \oplus \sigma(B_j) \oplus \dots \oplus B_{j-1}  \oplus \sigma(B_i) \oplus \dots \oplus B_n.
    \]
    For example, if $n = 4$, and $\sigma = (1 \, 3 \, 2)$, then
    \[
   \gamma_{1,3, \sigma} = \sigma(B_3) \oplus B_2 \oplus \sigma(B_1) \oplus B_4
= (7\,9\,8\,4\,5\,6\,1\,3\,2\,10\,11\,12).
    \]
    Let $\sigma_1, \ldots, \sigma_6$ be an arbitrary enumeration of $S_3$. We now introduce $6\binom{n}{2}$ new guesses, corresponding to the permutations $\gamma_{i, j, \sigma_k}$ for $1 \leq i < j \leq n$ and $1 \leq k \leq 6$. The score for each of them will be $0$.

\begin{claim}\label{claim:blocks}
    Any permutation $\sigma^\star$ compatible with the scores for the $1 + 6\binom{n}{2}$ guesses made thus far satisfies 
    \[
    \forall 1 \leq i \leq n, \quad\{\sigma^\star(3i-2), \sigma^\star(3i-1), \sigma^\star(3i)\} = \{3i-2, 3i-1, 3i\}.
    \]
\end{claim}
\begin{claimproof}
    Assume the condition fails for some $1 \leq i \leq n$. Then, some element of $\{\sigma^\star(3i-2), \sigma^\star(3i-1), \sigma^\star(3i)\}$ belongs to a block $B_j$ for $j \neq i$. Let us assume $i < j$.\footnote{Otherwise we implicitly do $i := \min(i, j), j = \max(i, j)$.} Then, for some $k$ it must be the case that guess $\gamma_{i, j, \sigma_k}$ gets score at least $1$, which contradicts the compatibility assumption with the received scores which are all $0$.
\end{claimproof}

We therefore have that the secret permutation $\sigma^\star$ must be of the form
\[
\sigma^{(1)}(B_1) \oplus \sigma^{(2)}(B_2) \oplus \dots \oplus \sigma^{(n)}(B_n),
\]
where each $\sigma^{(i)}$ is a permutation with no fixed points. In particular, the only two possibilities in $S_3$ are $\alpha := (2\,3\,1)$ and $\beta := (3\,1\,2)$. Intuitively, we will have $\sigma^{(i)} = \alpha$ if variable $x_i$ should be assigned to true in $\varphi$, and $\beta$ if it should be assigned to false. 

Concretely, for each clause $C := (x_a \lor x_b \lor x_c)$ with $a < b < c$, we create a permutation 
\[
\pi_C := B_1 \oplus \dots \oplus B_{a-1} \oplus \alpha(B_a) \oplus B_{a+1} \oplus \dots \oplus \alpha(B_b) \oplus \dots \oplus \alpha(B_c) \oplus \dots \oplus B_n.
\]
The score given to each of these guesses will be $3$.

We are now ready to prove the correctness of the reduction.
Let $\pi_1, \dots, \pi_T$ with $T = 1 + 6\binom{n}{2} + m$, and $b_1, \dots, b_T$ be the sequence of guesses and scores of our construction.

\begin{claim}
    If $\varphi$ is a \yes-instance for \textsf{Monotone-1-in-3-SAT}, then $(\pi_1, \dots, \pi_T), (b_1, \dots, b_T)$ is a \yes-instance for \textsf{Permutation-Mastermind-SAT}.
\end{claim}
\begin{claimproof}
    Let $\tau$ be the satisfying assignment for $\varphi$ which witnesses its membership in~\textsf{Monotone-1-in-3-SAT}.
    Then, let $\sigma^{\star}$ be the permutation $\sigma^{(1)}(B_1) \oplus \sigma^{(2)}(B_2) \oplus \dots \oplus \sigma^{(n)}(B_n)$ where \[
    \sigma^{(i)}  =\begin{cases}
      \alpha & \text{if } \tau(x_i) = 1\\
      \beta & \text{otherwise.}
    \end{cases}
    \]
    It suffices to check that all scores on $\sigma^\star$ are as constructed. First, since both $\alpha$ and $\beta$ have no fixed points, we indeed have $b(\id_N, \sigma^\star) = 0$. Let us write $\sigma[i] := (\sigma(3i-2),\sigma(3i-1), \sigma(3i)) $ for the $i$-th block of a permutation $\sigma$.
    
    Then, we claim that $b(\gamma_{i, j, \sigma}, \sigma^\star) = 0$ for every $1 \leq i < j \leq n$ and $\sigma \in S_3$. Indeed, for $k \in \{1, \ldots, n\} \setminus \{i, j\}$, we have $\gamma_{i, j, \sigma}[k] = B_k$, whereas $\sigma^{\star}[k]$ is either $\alpha(B_k)$ or $\beta(B_k)$, and in either case $b(\gamma_{i, j, \sigma}[k], \sigma^\star[k]) = 0$. For $k \in \{i, j\}$, we also have that $\sigma^{\star}[k]$ is either $\alpha(B_k)$ or $\beta(B_k)$, whereas $\gamma_{i, j, \sigma}[k] = \sigma(B_{\{i, j\} \setminus\{k\}})$, and thus $b(\gamma_{i, j, \sigma}[k], \sigma^\star[k]) = 0$.
    
    Finally, for each clause $C$, we claim that $b(\pi_C, \sigma^\star) = 3$. Indeed, by assumption, there is exactly one variable $x_i \in C$ such that $\tau(x_i) = 1$, so $\sigma^{(i)} = \alpha$, and thus $\pi_C[i] = \alpha(B_i)$ and by construction $\sigma^\star[i] = \alpha(B_i)$, so $b(\pi_C[i], \sigma^{\star}[i]) = 3$. If the other variables of $C$ are $x_j$ and $x_k$, then for any $\ell \not\in \{i, j, k\}$ we have $\pi_C[\ell] = B_\ell$, whereas $\sigma^\star[\ell]$ is either $\alpha(B_\ell)$ or $\beta(B_\ell)$, and in either case $b(\pi_C[\ell], \sigma^\star[\ell]) = 0$. For $\ell \in \{j, k\}$ we have $\tau(x_\ell) = 0$ by assumption, and thus $\sigma^\star[\ell] = \beta(B_\ell)$, whereas $\pi_C[\ell] = \alpha(B_\ell)$, and again $b(\pi_C[\ell], \sigma^\star[\ell]) = 0$.
\end{claimproof}

To complete our proof, we prove the other direction of the reduction.
\begin{claim}
    If $(\pi_1, \dots, \pi_T), (b_1, \dots, b_T)$ is a \yes-instance for \textsf{Permutation-Mastermind-SAT}, then $\varphi$ is a \yes-instance for \textsf{Monotone-1-in-3-SAT}.
\end{claim}
\begin{claimproof}
    Assume there is some  $\sigma^\star$ compatible with the transcript $(\pi_1, \dots, \pi_T), (b_1, \dots, b_T)$. Then, based on~\Cref{claim:blocks}, we have for each $1 \leq i\leq n$ that $\sigma^\star[i] = \sigma^{(i)}(B_i)$ for some $\sigma^{(i)} \in S_3$. Because $\pi_1 = \id_N$ and $b_1 = 0$, all the $\sigma^{(i)}$ must have no fixed points. But on $S_3$ the only two permutations without fixed points are exactly $\alpha := (2\,3\,1)$ and $\beta := (3\,1\,2)$. Therefore, $\sigma^{(i)} \in \{\alpha, \beta\}$ for each $1 \leq i \leq n$. Now, define the assignment $\tau \colon \{x_1, \ldots, x_n\} \to \{0, 1\}$ by 
    \[
    \tau(x_i) := \begin{cases}
        1 &\text{ if } \sigma^{(i)} = \alpha\\
        0 & \text{ otherwise}.
    \end{cases}
    \]
    Now, let $C := (x_a \lor x_b \lor x_c)$ be an arbitrary clause in $\varphi$. Then, since guess $\pi_C$ got score $3$, we have that
    \begin{equation}\label{eq:score3}
    b(\alpha(B_a), \sigma^{\star}[a]) +  b(\alpha(B_b), \sigma^{\star}[b]) +  b(\alpha(B_c), \sigma^{\star}[c]) = 3.
      \end{equation}
    However, for each $g \in \{a,b,c\}$, we have $\sigma^{\star}[g] = \alpha(B_g)$, in which case $b(\alpha(B_g), \sigma^{\star}[g]) = 3$ or $\sigma^{\star}[g] = \beta(B_g)$, in which case $b(\alpha(B_g), \sigma^{\star}[g]) = 0$. Thus, the only way to satisfy~\Cref{eq:score3} is that for exactly one $g \in \{a,b, c\}$, $\sigma^{(g)} = \alpha$, and thus $\tau$ satisfies exactly one variable of $C$.
\end{claimproof}

Since the reduction clearly takes polynomial time, this completes the proof.
\end{proof}

We will now show that the construction above can be adapted so that each pair of consecutive guesses differs in a fixed number of positions. The main difficulty is that, while the sequence of guesses $\pi_1, \dots, \pi_T$ from the proof of~\Cref{thm:np-complete-1} could be ``interpolated'' by a series of transpositions between each $\pi_t$ and $\pi_{t+1}$, the reduction needs to construct scores for these intermediate guesses as well, and it must do so in such a way that does not require knowing what a satisfying assignment (if there is any) looks like.

\begin{center}
\fbox{\begin{tabular}{lp{11cm}}
{\small PROBLEM} : & $k$-\textsf{Local-PM-SAT} \\
{\small INPUT} : & A sequence of permutations $\pi_1, \ldots, \pi_T \in S_n$, where each $\pi_{t+1} := \pi_t \circ \omega_{t+1}$, for some permutation $\omega_{t+1}$ of support size at most $k$,
and a sequence of black-peg scores $(b_1, \ldots, b_T) \in \{0, \ldots, n\}$.
\\ 
{\small OUTPUT} : & {\yes}, if there is some permutation $\sigma^\star \in S_n$ such that $b(\pi_t, \sigma^\star) = b_t$ for each $1 \leq t \leq T$.
{\no} otherwise.
\end{tabular}}
\end{center}

\begin{theorem}
    3-\textsf{Local-PM-SAT} is $\mathrm{NP}$-complete.
\end{theorem}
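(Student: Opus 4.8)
NP membership is immediate: nondeterministically guess the secret $\sigma^\star$ and verify in polynomial time that $b(\pi_t,\sigma^\star)=b_t$ for all $t$; the hypothesis that consecutive guesses differ in at most $3$ positions plays no role here.

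For hardness I would keep the skeleton of the reduction behind \Cref{thm:np-complete-1} and turn its guess sequence into an $\ell_3$-walk. Recall that there $N=3n$, and the guesses $\id_N$ (score $0$) together with the $6\binom{n}{2}$ block-swap guesses $\gamma_{i,j,\sigma}$ (score $0$) force every consistent secret to be \emph{block-structured}: inside each block $B_i$ the secret equals one of the two fixed-point-free permutations $\alpha,\beta$, encoding the truth value of $x_i$. The fact to exploit is that, against a block-structured secret, a guess's black-peg count splits over blocks, and a block's contribution is \emph{determined by the guess's restriction to that block alone} whenever that restriction is the identity ($0$ pegs), a transposition (exactly $1$ peg, against both $\alpha$ and $\beta$), or a permutation sending a block value outside its block ($0$ pegs); only an $\alpha$- or $\beta$-display on a block yields a secret-dependent $0$ or $3$. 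Call a guess \emph{oblivious} if each of its blocks is of the first kind. Oblivious guesses have a score the reduction can compute with no knowledge of a satisfying assignment, so the interpolating guesses we insert should all be oblivious.

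The walk is then built in two phases. For the setup guesses — $\id_N$ and the $\gamma_{i,j,\sigma}$'s, in any order — I would interpolate between consecutive ones by exchanging the contents of the two blocks involved one element at a time, never leaving a block's own value at a wrong position inside that block; every intermediate guess is then oblivious with score $0$, so the setup becomes an $\ell_3$-walk with all-zero intermediate scores. The delicate phase is the clause guesses: $\pi_C$ displays $\alpha$ on the three (generally unforced) blocks of $C$, and any short walk reaching it must pass through a guess displaying $\alpha$ on a single such block, whose score $3\,\tau(x_i)$ is not the same across all satisfying assignments, so the reduction cannot name it. The route around this that I would aim for is to (i) preprocess $\varphi$ into a regular instance, so that every satisfying assignment has the same, known, number of true variables (from $\sum_C(\#\text{true in }C)=m$ and regularity); (ii) use that $\pi_1$ is unconstrained to pre-install auxiliary "counter/complement" blocks whose secret values are forced; and (iii) redesign the clause gadget around these so that the only guesses whose score reflects the unknown assignment are the clause guesses $\pi_C$ themselves — whose score is the constant $3$, since a satisfying assignment makes exactly one literal of $C$ true — while every guess on the detour into and out of $\pi_C$ is oblivious, or at least has a score pinned down by the forced counters. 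Correctness would then follow as in \Cref{thm:np-complete-1}: the guesses still force a block-structured secret encoding a $1$-in-$3$ satisfying assignment, and conversely such an assignment gives a block-structured secret matching every (now reduction-computed) score.

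Step (iii) is where I expect the main obstacle to lie. The difficulty is structural: any guess revealing information about a single unforced variable necessarily has a score that differs between two satisfying assignments, so the gadget must be engineered so that no such guess ever occurs along the walk except the clause guesses. Since flipping even one block of a clause to its $\alpha$-display already leaks that variable — and costs at least two of the three available locality positions, hence cannot be simultaneously balanced by touching a complement block inside a single move — accomplishing this appears to require a genuinely new clause gadget rather than a mechanical interpolation of the one from \Cref{thm:np-complete-1}, for instance one in which the clause is verified "all at once" on a scaffold that the counter blocks render oblivious. Pinning down such a gadget together with its $\ell_3$-routing is the crux of the argument.
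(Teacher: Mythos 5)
Your membership argument and your handling of the setup phase are fine (the paper in fact does the setup even more simply, replacing the block swaps $\gamma_{i,j,\sigma}$ by single transpositions $\delta_{i,j}$ alternated with the identity, which is already $\ell_2$-local). But the hardness proof has a genuine gap exactly where you locate the crux: you never construct the clause gadget. Your proposal (i)--(iii) — regularizing $\varphi$ so all satisfying assignments have the same number of true variables, pre-installing forced ``counter'' blocks via $\pi_1$, and redesigning the clause check around them — is left as a sketch of intent, and you yourself conclude that a ``genuinely new clause gadget'' would be needed without exhibiting one. As it stands, the reduction is not specified, so the proof is incomplete.

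Moreover, the structural obstacle that drives you toward that heavier machinery is based on an unnecessary requirement. You insist that every interpolating guess be \emph{oblivious}, i.e.\ have a score the reduction can predict independently of the (unknown) satisfying assignment. But the reduction only needs to write down scores that are (a) attained by \emph{every} secret encoding a 1-in-3 satisfying assignment, and (b) attainable \emph{only} by such secrets; assignment-dependent intermediate guesses are then not a bug but the very mechanism enforcing the clause. The paper's walk from $\id_N$ to $\pi_C$ for $C=(x_i\lor x_j\lor x_k)$ does exactly this with nine support-3 steps: first apply $\alpha$ ``column-wise'' across positions $(3i-2,3j-2,3k-2)$, $(3i-1,3j-1,3k-1)$, $(3i,3j,3k)$, then $\alpha$ within each of the three displaced blocks (all six of these guesses score $0$ against any block-structured derangement, since the relevant positions hold out-of-block elements), and finally $\beta$ column-wise three times, which returns the elements to their home blocks in $\alpha$-arrangement; these last three guesses score $c$, $2c$, $3c$ where $c$ is the number of $\alpha$-blocks among $\{i,j,k\}$. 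Declaring the scores $(0,0,0,0,0,0,1,2,3)$ is therefore simultaneously correct for all satisfying assignments and forces $c=1$, as the paper verifies by an eight-case table; the walk is then reversed to return to $\id_N$ before the next clause. No regularization, counters, or new gadget is needed — the missing idea is precisely this choice of interpolating permutations whose secret-dependent scores are constant on (and characteristic of) the satisfying secrets.
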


\begin{proof}
Membership is again trivial since it suffices to guess $\sigma^\star \in S_n$. 
For hardness, the proof is very similar to that of~\Cref{thm:np-complete-1}, also via a reduction from \textsf{Monotone-1-in-3-SAT}, but we will need some non-trivial modifications. Once again, if the input instance $\varphi$ has $n$ variables, we will construct an instance of 3-\textsf{Local-PM-SAT} over $S_N$ for $N := 3n$. 
The first guess will again be the identity $\id_N$ with a score of $0$, implying the secret permutation must not have any fixed points. Then, similarly to the proof of~\Cref{thm:np-complete-1}, we will construct guesses and scores enforcing that 
    \begin{equation}\label{eq:block-preservation}
    \forall 1 \leq i \leq n, \quad\{\sigma^\star(3i-2), \sigma^\star(3i-1), \sigma^\star(3i)\} = \{3i-2, 3i-1, 3i\}.
    \end{equation}
This part can be done in fact under the stronger $\ell_2$-local restriction. 
It suffices to take every pair of indices $1 \leq i < j \leq 3n$ such that $\lceil i/3\rceil \neq \lceil j/3\rceil$ (i.e., $i$ and $j$ belong to different blocks of size $3$), and guess the permutation $\delta_{i, j}$ defined by
\[
\delta_{i, j}(k) = \begin{cases}
    k & \text{if } k \not\in \{i, j\}\\
    i & \text{if } k = j\\
    j & \text{if } k = i.
\end{cases}
\]
The associated score of each guess $\delta_{i, j}$ will be $0$, and after each such guess, we will guess again the identity permutation $\id_N$ with a score of $0$, before making the next guess $\delta_{i', j'}$. This stage makes $O(n^2)$ guesses, and indeed guarantees~\Cref{eq:block-preservation} since any position in $B_i := \{3i-2, 3i-1, 3i\}$ is guessed in every position outside of $B_i$ and receives a score of $0$.
Now, we turn our attention to the \emph{clause guesses}. Let $C = (x_i \lor x_j \lor x_k)$ be a clause in $\varphi$, with $i < j < k$. 
Our goal is to make the guess
\[
\pi_C := B_1 \oplus \dots \oplus B_{i-1} \oplus \alpha(B_i) \oplus B_{i+1} \oplus \dots \oplus \alpha(B_j) \oplus \dots \oplus \alpha(B_k) \oplus \dots \oplus B_n.
\]
However, we will have to ``walk'' to such a guess by permutations of support size at most $3$. Let us show an example for $n = 4$, with the clause $C = (x_1 \lor x_3 \lor x_4)$.  For ease of notation, we will identify the identity permutation with $b_1 b_2 \dots b_{12}$, and highlight in blue the permuted elements of each guess.
\begin{align*}
b_{1}  b_{2}  b_{3} \;b_{4}  b_{5}  b_{6} \;b_{7}  b_{8}  b_{9} \;b_{10}  b_{11}  b_{12} \\
\textcolor{blue}{ b_{7} }  b_{2}  b_{3} \;b_{4}  b_{5}  b_{6} \;\textcolor{blue}{ b_{10} }  b_{8}  b_{9} \;\textcolor{blue}{ b_{1} }  b_{11}  b_{12} \\
b_{7}  \textcolor{blue}{ b_{8} }  b_{3} \;b_{4}  b_{5}  b_{6} \;b_{10}  \textcolor{blue}{ b_{11} }  b_{9} \;b_{1}  \textcolor{blue}{ b_{2} }  b_{12} \\
b_{7}  b_{8}  \textcolor{blue}{ b_{9} } \;b_{4}  b_{5}  b_{6} \;b_{10}  b_{11}  \textcolor{blue}{ b_{12} } \;b_{1}  b_{2}  \textcolor{blue}{ b_{3} } \\
\textcolor{blue}{ b_{8} }  \textcolor{blue}{ b_{9} }  \textcolor{blue}{ b_{7} } \;b_{4}  b_{5}  b_{6} \;b_{10}  b_{11}  b_{12} \;b_{1}  b_{2}  b_{3} \\
b_{8}  b_{9}  b_{7} \;b_{4}  b_{5}  b_{6} \;\textcolor{blue}{ b_{11} }  \textcolor{blue}{ b_{12} }  \textcolor{blue}{ b_{10} } \;b_{1}  b_{2}  b_{3} \\
b_{8}  b_{9}  b_{7} \;b_{4}  b_{5}  b_{6} \;b_{11}  b_{12}  b_{10} \;\textcolor{blue}{ b_{2} }  \textcolor{blue}{ b_{3} }  \textcolor{blue}{ b_{1} } \\
\textcolor{blue}{ b_{2} }  b_{9}  b_{7} \;b_{4}  b_{5}  b_{6} \;\textcolor{blue}{ b_{8} }  b_{12}  b_{10} \;\textcolor{blue}{ b_{11} }  b_{3}  b_{1} \\
b_{2}  \textcolor{blue}{ b_{3} }  b_{7} \;b_{4}  b_{5}  b_{6} \;b_{8}  \textcolor{blue}{ b_{9} }  b_{10} \;b_{11}  \textcolor{blue}{ b_{12} }  b_{1} \\
b_{2}  b_{3}  \textcolor{blue}{ b_{1} } \;b_{4}  b_{5}  b_{6} \;b_{8}  b_{9}  \textcolor{blue}{ b_{7} } \;b_{11}  b_{12}  \textcolor{blue}{ b_{10} } 
\end{align*}
Naturally, this sequence of permutations can be inverted to obtain back the identity. More formally, let us introduce notation $\alpha(i, j, k)$ for the permutation that applies $\alpha \in S_3$ to positions $i, j, k$, and similarly, we define $\beta(i, j, k)$. Then, the sequence of guesses corresponding to a clause $C = (x_i \lor x_j \lor x_k)$ is:
\begin{enumerate}
    \item $\alpha(3i-2, 3j-2, 3k-2)$,
    \item $\alpha(3i-1, 3j-1, 3k-1)$,
    \item $\alpha(3i, 3j, 3k)$,
    \item $\alpha(3i-2, 3i-1, 3i)$,
    \item $\alpha(3j-2, 3j-1, 3j)$,
    \item $\alpha(3k-2, 3k-1, 3k)$,
    \item $\beta(3i-2, 3j-2, 3k-2)$,
    \item $\beta(3i-1, 3j-1, 3k-1)$,
    \item $\beta(3i, 3j, 3k)$.
\end{enumerate}

Their corresponding scores are: $(0, 0, 0, 0, 0, 0, 1, 2, 3)$. Consecutive ``clause gadgets'' can be chained by appending the inverse walk back to $\id_N$ between clauses, with the corresponding scores.
We now make the main claim for the correctness of the reduction.

\begin{claim}\label{claim:final-np}
    A permutation $\sigma^\star$ is compatible with the scores constructed thus far if and only if we have
    \[
    \sigma^\star = \sigma^{(1)}(B_1) \oplus \sigma^{(2)}(B_2) \oplus \dots \oplus \sigma^{(n)}(B_n)
    \]
    where each $\sigma^{(i)}$ is either $\alpha := (2 \, 3\, 1)$ or $\beta := (3\, 1\, 2)$, and for each clause $(x_i \lor x_j \lor x_k) \in \varphi$, exactly one of $\sigma^{(i)}, \sigma^{(j)}, \sigma^{(k)}$ equals $\alpha$.
\end{claim}
\begin{claimproof}
    The first part is direct from~\Cref{eq:block-preservation} and the fact that the identity guess $\id_N$ has a score of $0$, since $\alpha$ and $\beta$ are the only permutations in $S_3$ without fixed points.

    The second part can be checked mechanically. It suffices to consider clause $(x_1 \lor x_2 \lor x_3)$ without loss of generality, and for each of the $8$ possibilities for $(\sigma^{(1)},\sigma^{(2)}, \sigma^{(3)})$, observe that the only ones that give the scores $(0,0,0,0,0,0,1,2,3)$ are those with exactly one $\alpha$. This is shown in~\Cref{tab:guesses-results}, and concludes the proof of the claim.
    \begin{table}
      \caption{Correctness of the sequence of clause guesses.}
        \label{tab:guesses-results}
        \centering
        \resizebox{\textwidth}{!}{%
        \begin{tabular}{
    c|
    *{8}{c}}
        \toprule
    $(\sigma^{(1)},\sigma^{(2)}, \sigma^{(3)})$ &   \small \textcolor{red}{$(\alpha, \alpha, \alpha)$} &  \small \textcolor{red}{$ (\alpha, \alpha, \beta)$} &   \small \textcolor{red}{$ (\alpha, \beta, \alpha)$} &  \small \textcolor{green!70!black}{$(\alpha, \beta, \beta)$} & \small  \textcolor{red}{$(\beta, \alpha, \alpha)$} &  \small    \textcolor{green!70!black}{$(\beta, \alpha, \beta)$} &  \small \textcolor{green!70!black}{$(\beta, \beta, \alpha)$}  &  \small \textcolor{red}{$(\beta, \beta, \beta)$} \\ \midrule
$b_{1}  b_{2}  b_{3} \;b_{4}  b_{5}  b_{6} \;b_{7}  b_{8}  b_{9} $& 0  & 0  & 0  & 0  & 0  & 0  & 0  & 0  \\
$\textcolor{blue}{ b_{4} }  b_{2}  b_{3} \;\textcolor{blue}{ b_{7} }  b_{5}  b_{6} \;\textcolor{blue}{ b_{1} }  b_{8}  b_{9} $& 0  & 0  & 0  & 0  & 0  & 0  & 0  & 0  \\
$b_{4}  \textcolor{blue}{ b_{5} }  b_{3} \;b_{7}  \textcolor{blue}{ b_{8} }  b_{6} \;b_{1}  \textcolor{blue}{ b_{2} }  b_{9} $& 0  & 0  & 0  & 0  & 0  & 0  & 0  & 0  \\
$b_{4}  b_{5}  \textcolor{blue}{ b_{6} } \;b_{7}  b_{8}  \textcolor{blue}{ b_{9} } \;b_{1}  b_{2}  \textcolor{blue}{ b_{3} } $& 0  & 0  & 0  & 0  & 0  & 0  & 0  & 0  \\
$\textcolor{blue}{ b_{5} }  \textcolor{blue}{ b_{6} }  \textcolor{blue}{ b_{4} } \;b_{7}  b_{8}  b_{9} \;b_{1}  b_{2}  b_{3} $& 0  & 0  & 0  & 0  & 0  & 0  & 0  & 0  \\
$b_{5}  b_{6}  b_{4} \;\textcolor{blue}{ b_{8} }  \textcolor{blue}{ b_{9} }  \textcolor{blue}{ b_{7} } \;b_{1}  b_{2}  b_{3} $& 0  & 0  & 0  & 0  & 0  & 0  & 0  & 0  \\
$b_{5}  b_{6}  b_{4} \;b_{8}  b_{9}  b_{7} \;\textcolor{blue}{ b_{2} }  \textcolor{blue}{ b_{3} }  \textcolor{blue}{ b_{1} } $& 0  & 0  & 0  & 0  & 0  & 0  & 0  & 0  \\
$\textcolor{blue}{ b_{2} }  b_{6}  b_{4} \;\textcolor{blue}{ b_{5} }  b_{9}  b_{7} \;\textcolor{blue}{ b_{8} }  b_{3}  b_{1} $& 3  & 2  & 2  & 1  & 2  & 1  & 1  & 0  \\
$b_{2}  \textcolor{blue}{ b_{3} }  b_{4} \;b_{5}  \textcolor{blue}{ b_{6} }  b_{7} \;b_{8}  \textcolor{blue}{ b_{9} }  b_{1} $& 6  & 4  & 4  & 2  & 4  & 2  & 2  & 0  \\
$b_{2}  b_{3}  \textcolor{blue}{ b_{1} } \;b_{5}  b_{6}  \textcolor{blue}{ b_{4} } \;b_{8}  b_{9}  \textcolor{blue}{ b_{7} } $& 9  & 6  & 6  & 3  & 6  & 3  & 3  & 0  \\
        \bottomrule
        \end{tabular}%
        }
    \end{table}
\end{claimproof}

Having proven~\Cref{claim:final-np}, we will proceed to prove the correctness of the reduction.

\begin{claim}
    If $\varphi$ is a \yes-instance for \textsf{Monotone-1-in-3-SAT}, then the sequence of guesses and scores $(\pi_1, \dots, \pi_T), (b_1, \dots, b_T)$ constructed above is a \yes-instance for 3-\textsf{Local-PM-SAT}.
\end{claim}
\begin{claimproof}
    Let $\tau$ be the satisfying assignment for $\varphi$ which witnesses its membership in~\textsf{Monotone-1-in-3-SAT}.
    Then, let $\sigma^{\star}$ be the permutation 
    \[ 
    \sigma^{(1)}(B_1) \oplus \sigma^{(2)}(B_2) \oplus \dots \oplus \sigma^{(n)}(B_n)  \] where \(
    \sigma^{(i)}  =\begin{cases}
      \alpha & \text{if } \tau(x_i) = 1\\
      \beta & \text{otherwise.}
    \end{cases}
    \)
    Then, by~\Cref{claim:final-np} we have that $\sigma^{\star}$ is compatible with $(\pi_1, \dots, \pi_T), (b_1, \dots, b_T)$, thus proving it is a \yes-instance for 3-\textsf{Local-PM-SAT}.
\end{claimproof}

\begin{claim}
   If $(\pi_1, \dots, \pi_T), (b_1, \dots, b_T)$ is a \yes-instance for 3-\textsf{Local-PM-SAT}, then $\varphi$ is a \yes-instance for \textsf{Monotone-1-in-3-SAT}.
\end{claim}
\begin{claimproof}
    Directly by~\Cref{claim:final-np} it suffices to take the assignment $\tau$ that assigns each $x_i$ to $1$ if $\sigma^{(i)}=\alpha$, and to $0$ otherwise.
\end{claimproof}

Since the reduction can clearly be constructed in polynomial time, we conclude the proof.
\end{proof}

Finally, we show that 2-\textsf{Local-PM-SAT} can be solved in randomized polynomial time, by formulating it as a search over perfect matchings with parity constraints. In particular, we consider the following problem.

\begin{center}
\fbox{\begin{tabular}{lp{11cm}}
{\small PROBLEM} : & $t$-\textsf{Dimensional Parity Perfect Matching} \\
{\small INPUT} : & A $G = (V, E)$, a weight function $w \colon E \to \mathbb{Z}^{\geq 0}$, a function $\gamma \colon E \to \mathbb{F}^t_2$,  a constant $\mathbf{c} \in \mathbb{F}^t_2$, and a number $r \in \mathbb{Z}$.
\\ 
{\small OUTPUT} : & A perfect matching $M$ of $G$ such that $w(M) := \sum_{e \in M} w(e) = r$, and such that $\sum_{e \in M} \gamma(e) = \mathbf{c}$.
\end{tabular}}
\end{center}

\begin{theorem}[\!\!\!{{\cite[Thm. 5.6]{Geelen2017}}}]\label{thm:dim-matching}
    There is a randomized algorithm that solves the $t$-\textsf{Dimensional Parity Perfect Matching} problem, over instances on $n$ vertices, in time $O(n^6 \lg^2 n \cdot \max_{e \in E(G)} w(e))$, with probability of success $1 - o(1)$.
\end{theorem}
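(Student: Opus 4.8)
The plan is to observe that this is precisely Theorem~5.6 of Geelen and Kapadia~\cite{Geelen2017}, so we may simply invoke it; for completeness I sketch the shape of such a proof, which follows the classical algebraic approach to matching problems. The idea is to represent perfect matchings of $G$ as monomials of a single polynomial---a determinant when $G$ is bipartite, and a Pfaffian in general---and to enrich the representation with a handful of extra indeterminates whose exponents record the weight $w(M)$ and the parity vector $\sum_{e\in M}\gamma(e)\in\mathbb{F}_2^{t}$ of each matching $M$.

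Concretely, I would form the (skew-symmetric) Tutte matrix $A$ of $G$, placing in the $\{u,v\}$ entry the quantity $\pm\, x_{uv}\cdot y^{\,w(uv)}\cdot\zeta(\gamma(uv))$, where $x_{uv}$ is an edge variable, $y$ is a fresh variable tracking the weight, and $\zeta$ sends $\mathbb{F}_2^{t}$ into the group algebra $\mathbb{F}[\mathbb{F}_2^{t}]$ so that the parity labels add correctly modulo $2$. Expanding $\operatorname{Pf}(A)$, every nonzero term is indexed by a perfect matching $M$ and contributes a monomial in the $x_{uv}$ with $y$-exponent $w(M)$ and group-algebra component $\sum_{e\in M}\gamma(e)$; distinct matchings give distinct $x$-monomials, so no cancellation occurs and the coefficient of $y^{r}$ in the $\mathbf{c}$-component of $\operatorname{Pf}(A)$, as a polynomial in the $x_{uv}$, is not identically zero iff a valid matching exists. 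Since $\operatorname{Pf}(A)$ has $y$-degree $O(n\cdot\max_{e}w(e))$ and the group algebra has dimension $2^{t}=O(1)$, one evaluates all $x_{uv}$ at independent uniform elements of a field extension of size $\mathrm{poly}(n)$ and reads off the relevant coefficient; by the Schwartz--Zippel lemma a single evaluation decides existence with error $1/\mathrm{poly}(n)$. To actually produce a matching, I would run the standard self-reduction: for each edge in turn, test whether a valid matching survives its deletion, deleting it permanently when the answer is \emph{yes} and otherwise contracting it (decrementing $r$ and $\mathbf{c}$ accordingly). This costs $O(n^{2})$ evaluations of a polynomial determinant/Pfaffian of degree $O(n\cdot\max_{e}w(e))$, and the usual accounting of fast polynomial and field arithmetic yields the stated $O(n^{6}\lg^{2}n\cdot\max_{e}w(e))$ bound, with an $O(n^{2})$ union bound over the self-reduction steps keeping the overall success probability at $1-o(1)$.

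The main obstacle is getting a \emph{single} algebraic identity test to track simultaneously the weight (an integer sitting in an exponent, potentially large) and the parity constraints (living in $\mathbb{F}_2^{t}$): the former stays under control because the $y$-degree is only $n\cdot\max_{e}w(e)$, still polynomial when weights are given in unary, and the latter is handled by working in the $2^{t}$-dimensional group algebra of $\mathbb{F}_2^{t}$ (equivalently, a small Fourier transform over parity classes), which is only a constant-factor overhead since $t=O(1)$ in our intended application. Everything else---the Pfaffian expansion identity, the Schwartz--Zippel bound, and the correctness of the self-reduction---is routine, and I expect no genuine difficulty there.
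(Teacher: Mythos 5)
Your main move---treating the statement as a black-box invocation of \cite[Thm.~5.6]{Geelen2017}---is exactly what the paper does: the result is imported, and the paper's only ``proof'' content is the remark that Geelen--Kapadia's algorithm \emph{minimizes} $w(M)$, so one reads off the coefficient of $z^r$ in their polynomial instead of the minimum exponent; your extraction of the coefficient of $y^{r}$ mirrors that same adaptation. So as an invocation, your proposal matches the paper.

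The completeness sketch, however, contains a claim that would fail precisely where the theorem is applied. You justify working in the group algebra $\mathbb{F}[\mathbb{F}_2^{t}]$ (dimension $2^{t}$) as a ``constant-factor overhead since $t=O(1)$ in our intended application,'' but in \Cref{thm:2local-rp} the parity dimension is $t=|\mathcal{C}|+|\mathcal{F}|$, which grows with the transcript (it can be on the order of the number of guesses, i.e.\ polynomial in $n$, not constant). With that $t$, your bookkeeping ring has exponential dimension, so your accounting does not yield the stated bound $O(n^{6}\lg^{2}n\cdot\max_{e}w(e))$, which is independent of $t$; handling the parity constraints without exponential dependence on $t$ is exactly the nontrivial content of Geelen--Kapadia's argument that your sketch elides. (Two minor points: in the self-reduction one deletes the matched endpoints rather than ``contracting'' the edge, and the field size must be chosen so that a union bound over the $O(n^{2})$ Schwartz--Zippel tests keeps the success probability $1-o(1)$; both are routine.) In short: as a citation your treatment coincides with the paper's; as a standalone proof the $t=O(1)$ assumption is a genuine gap.
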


Strictly speaking, the algorithm of Geelen and Kapadia~\cite{Geelen2017} is intended for minimizing $w(M)$, but this is only done in step 3 of their evaluation algorithm, where the minimum exponent of the variable $z$ is considered; by considering the term including $z^r$ instead, from the same polynomial, we obtain the form of~\Cref{thm:dim-matching}.



\begin{theorem}\label{thm:2local-rp}
\textnormal{2-Local-PM-SAT} is in $\mathsf{RP}$, and in particular, can be solved in time $O(n^7 \lg^2 n)$ with success probability $1-o(1)$.
\end{theorem}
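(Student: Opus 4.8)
The plan is to reduce 2-\textsf{Local-PM-SAT} to the $t$-\textsf{Dimensional Parity Perfect Matching} problem and then invoke the randomized algorithm of \Cref{thm:dim-matching}. As in \Cref{sec:adaptive}, I identify a candidate secret $\sigma^\star$ with the perfect matching $M_{\sigma^\star}$ of $K_{n,n}$ between positions and values, and I write $M_t := M_{\pi_t}$ for the matching of the $t$-th guess, so that $b(\pi_t,\sigma^\star) = |M_{\sigma^\star}\cap M_t|$. Then the given instance is a \yes-instance exactly when $K_{n,n}$ has a perfect matching $M$ with $|M\cap M_t| = b_t$ for every $t$. The whole point of $\ell_2$-locality is that these $T$ counting constraints can be replaced by a \emph{single} counting constraint together with a family of constraints that each involve only two edges -- precisely the shape that the parity-matching framework handles natively.

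First I would unpack the transposition structure. If $\omega_{t+1}$ is a transposition, then $\pi_{t+1}$ is obtained from $\pi_t$ by swapping the values at two positions $a,b$; writing $u := \pi_t(a)$ and $v := \pi_t(b)$, the only edges whose presence in $M$ can affect $|M\cap M_{t+1}| - |M\cap M_t|$ are $e_1 := \{a,u\}$ and $e_2 := \{b,v\}$ (both in $M_t$) and $e_3 := \{a,v\}$ and $e_4 := \{b,u\}$ (both in $M_{t+1}$), four distinct edges with $e_1,e_2$ vertex-disjoint and $e_3,e_4$ vertex-disjoint. A short case analysis over the possibilities for $(\sigma^\star(a),\sigma^\star(b))$ relative to $\{u,v\}$ shows that $|M\cap M_{t+1}| - |M\cap M_t| = |M\cap\{e_3,e_4\}| - |M\cap\{e_1,e_2\}|$, and that this equals the prescribed increment $\delta_t := b_{t+1} - b_t$ if and only if: when $\delta_t = 2$, both $e_3,e_4 \in M$; when $\delta_t = -2$, both $e_1,e_2 \in M$; when $\delta_t = 0$, none of $e_1,e_2,e_3,e_4$ is in $M$; when $\delta_t = 1$, exactly one of $e_3,e_4$ is in $M$ (which forces $e_1,e_2\notin M$ automatically, since $M$ is a matching); when $\delta_t = -1$, exactly one of $e_1,e_2$ is in $M$; and when $|\delta_t| \geq 3$ (or when $\omega_{t+1} = \mathrm{id}$ but $\delta_t \neq 0$) the instance is immediately a \no-instance. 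So every step contributes only constraints of three shapes: ``$e \in M$'', ``$e \notin M$'', and ``exactly one of the two vertex-disjoint edges $e,e'$ lies in $M$''.

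Next I would assemble the parity-matching instance. Since ``$|M\cap M_t| = b_t$ for all $t$'' is equivalent, by telescoping the increments, to ``$|M\cap M_1| = b_1$ and $|M\cap M_{t+1}| - |M\cap M_t| = \delta_t$ for $1 \leq t \leq T-1$'', I take $G := K_{n,n}$ and encode the anchor $|M\cap M_1| = b_1$ by the weight $w(e) := 1$ for $e \in M_1$ and $w(e) := 0$ otherwise, with target $r := b_1$ (so $w(M) = |M\cap M_1|$ and $\max_e w(e) \leq 1$). Each step-constraint becomes one coordinate of $\gamma\colon E(G) \to \mathbb F_2^{d}$: ``$e \in M$'' becomes the coordinate that is $1$ on $e$ and $0$ elsewhere, with right-hand side $1$; ``$e \notin M$'' is the same coordinate with right-hand side $0$; and ``exactly one of $e,e'$ lies in $M$'' becomes the coordinate that is $1$ on each of $e,e'$ and $0$ elsewhere, with right-hand side $1$ -- this is exactly the $\mathbb F_2$-linear system $\sum_{f \in M}\gamma(f) = \mathbf c$ read coordinatewise, and it expresses the intended condition because each edge indicator is $0/1$. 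Here $d = O(T)$. Then $K_{n,n}$ has a perfect matching $M$ with $w(M) = r$ and $\sum_{f\in M}\gamma(f) = \mathbf c$ if and only if the original instance is a \yes-instance; by \Cref{thm:dim-matching} such an $M$ is found, or shown not to exist, in randomized time $O((2n)^6 \lg^2(2n)\cdot\max_e w(e)) = O(n^6\lg^2 n)$, hence within the claimed $O(n^7\lg^2 n)$, with success probability $1 - o(1)$. This places 2-\textsf{Local-PM-SAT} in $\mathsf{RP}$.

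The step I expect to be the main obstacle is the case analysis of the previous paragraph: one must verify that each listed two-edge (or one-edge) condition is \emph{sufficient}, not merely necessary, for the increment to equal $\delta_t$ -- most delicately, that for $\delta_t = \pm 1$ the single ``exactly one of the two added (resp.\ removed) edges'' condition already pins the increment, the status of the remaining two edges being forced by $M$ being a perfect matching. A minor point to check is that contradictory step-constraints across different $t$ (say, one step demanding $e \in M$ and another demanding $e \notin M$, or the like) correctly render the constructed instance infeasible, which is automatic since no edge set can satisfy inconsistent $\mathbb F_2$-linear equations; and that the whole construction runs in polynomial time, which it plainly does -- $O(nT)$ to read off the swaps and the edges $e_1,\dots,e_4$, and $O(n^2 T)$ to record $\gamma$.
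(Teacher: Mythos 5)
Your proposal is correct and follows essentially the same route as the paper: telescope the scores into per-transposition increments, characterize each increment $\Delta_t\in\{0,\pm1,\pm2\}$ by forced/forbidden/exactly-one-of-a-pair conditions on the four edges $\{a,u\},\{b,v\},\{a,v\},\{b,u\}$, and feed the result to the Geelen--Kapadia parity perfect matching algorithm, with the anchor $|M\cap B_{\pi_1}|=b_1$ handled by the weight function. The only difference is a minor encoding choice -- you express the forbidden edges as extra parity coordinates with target $0$, whereas the paper folds them into the weights with penalty $n$ (targeting $w(M)=n-b_1$) -- which, taking the cited theorem at face value, even improves the bound to $O(n^6\lg^2 n)$ since your $\max_e w(e)=1$.
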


\begin{proof}
Let $(\pi_1,\ldots,\pi_T)$ and $(b_1,\ldots,b_T)$ be an instance of \textnormal{2-Local-PM-SAT} over $S_n$.
Then, as in~\Cref{sec:adaptive}, we identify a permutation $\sigma\in S_n$ with the perfect matching
\(
B_\sigma \;:=\; \bigl\{\{i,\sigma(i)\} : i\in[n]\bigr\}
\)
in the complete bipartite graph $K_{n,n}$, and note that
$b(\pi_t,\sigma)=|B_{\pi_t}\cap B_\sigma|$.

We can assume without loss of generality that the transcript contains no consecutive guesses being equal, i.e., $\pi_{t+1} \neq \pi_{t}$, since otherwise we can remove the consecutive duplicates from the transcript without altering satisfiability.
Because of the $\ell_2$-locality restriction, for each $t\in[T-1]$ there is a (unique) transposition $\omega_{t+1}=(a_t\,b_t)$ such that
$\pi_{t+1}=\pi_t\circ\omega_{t+1}$.
Let
\(
u_t:=\pi_t(a_t),\) and \(v_t:=\pi_t(b_t),
\)
and define the four edges
\[
e_t^1:=\{a_t,u_t\},\quad e_t^2:=\{b_t,v_t\},\quad f_t^1:=\{a_t,v_t\},\quad f_t^2:=\{b_t,u_t\}.
\]
Then $B_{\pi_t} \symdiff  B_{\pi_{t+1}}=\{e_t^1,e_t^2,f_t^1,f_t^2\}$, and only these two positions
can change their contribution to the black-peg score when passing from $\pi_t$ to $\pi_{t+1}$.
Let $\Delta_t:=b_{t+1}-b_t$. Letting $B_{\sigma^\star}$ be the matching associated to a secret permutation $\sigma^\star$ compatible with the scores (if any exists), shows:
\begin{align*}
\Delta_t=2  &\iff \{f_t^1,f_t^2\}\subseteq B_{\sigma^\star},\\
\Delta_t=-2 &\iff \{e_t^1,e_t^2\}\subseteq B_{\sigma^\star},\\
\Delta_t=1  &\iff \bigl|B_{\sigma^\star}\cap\{f_t^1,f_t^2\}\bigr|=1\ \text{ and }\ B_{\sigma^\star}\cap\{e_t^1,e_t^2\}=\emptyset,\\
\Delta_t=-1 &\iff \bigl|B_{\sigma^\star}\cap\{e_t^1,e_t^2\}\bigr|=1\ \text{ and }\ B_{\sigma^\star}\cap\{f_t^1,f_t^2\}=\emptyset,\\
\Delta_t=0  &\iff B_{\sigma^\star}\cap\{e_t^1,e_t^2,f_t^1,f_t^2\}=\emptyset.
\end{align*}
Consequently, we will construct an instance of perfect matching with the following constraint sets: a set $\mathcal{F}$ of edges that must necessarily be present in the secret perfect matching $B_{\sigma^\star}$, a set $\mathcal{N}$ of edges that must necessarily \emph{not} be present in the secret perfect matching $B_{\sigma^\star}$, and a set $\mathcal{C}$ of pairs of edges such that for every pair $C \in \mathcal{C}$, exactly one of the edges in $C$ must be present in the secret perfect matching $B_{\sigma^\star}$. These sets of constraints are constructed by~\Cref{alg:construct}.

\begin{algorithm}
\caption{Construct Constraint Sets $\mathcal{C}, \mathcal{F}, \mathcal{N}$}\label{alg:construct}
\begin{algorithmic}[1]
    \State $\mathcal{C} \gets \emptyset, \mathcal{F} \gets \emptyset, \mathcal{N} \gets \emptyset$
    
    \For{$t = 1$ \textbf{to} $T-1$}
        \If{$\Delta_t = 2$}
            \State $\mathcal{F} \gets \mathcal{F} \cup \{f_t^1, f_t^2\}$
            \State $\mathcal{N} \gets \mathcal{N} \cup \{e_t^1, e_t^2\}$
        \ElsIf{$\Delta_t = -2$}
            \State $\mathcal{F} \gets \mathcal{F} \cup \{e_t^1, e_t^2\}$
            \State $\mathcal{N} \gets \mathcal{N} \cup \{f_t^1, f_t^2\}$
        \ElsIf{$\Delta_t = 1$}
            \State $\mathcal{N} \gets \mathcal{N} \cup \{e_t^1, e_t^2\}$
            \State $\mathcal{C} \gets \mathcal{C} \cup \{\{f_t^1, f_t^2\}\}$
        \ElsIf{$\Delta_t = -1$}
            \State $\mathcal{N} \gets \mathcal{N} \cup \{f_t^1, f_t^2\}$
            \State $\mathcal{C} \gets \mathcal{C} \cup \{\{e_t^1, e_t^2\}\}$
        \ElsIf{$\Delta_t = 0$}
            \State $\mathcal{N} \gets \mathcal{N} \cup \{f_t^1, f_t^2, e_t^1, e_t^2\}$
        \EndIf
    \EndFor
\end{algorithmic}
\end{algorithm}

By the previous analysis, we have the following observation.
\begin{observation}\label{obs:pm-inters}
    There exists a secret permutation $\sigma^\star$ compatible with the  transcript $(\pi_1, \dots, \pi_T), (b_1,\dots, b_T)$ if and only if $K_{n,n}$ has a perfect matching $M^\star$ such that: 
    \begin{enumerate}
        \item $| M^\star \cap B_{\pi_1}| = b_1$,
         \item  $|M^\star \cap \mathcal{N}| = 0$,
        \item for every $e \in \mathcal{F}$, $|M^\star \cap \{e\}| = 1$,
        \item for every pair of edges $C  \in \mathcal{C}$, $|M^\star \cap C| = 1$.
    \end{enumerate}
\end{observation}

We will now show how to decide the existence of such a matching $M^\star$ by reducing to the $t$-\textsf{Dimensional Parity Perfect Matching} problem. First, we define the edge weight function $w \colon E(K_{n, n}) \to \mathbb{Z}^{\geq 0}$ as follows:
\[
w(e) := \begin{cases}
    n & \text{if } e \in \mathcal{N},\\
    1 & \text{if } e \not\in \mathcal{N} \text{ and } e \not\in B_{\pi_1},\\
    0 & \text{otherwise}.
\end{cases}
\]
Now, note that, for any perfect matching $M$ of $K_{n,n}$, we have
\[
w(M) := \sum_{e \in M} w(e) = n\cdot|M \cap \mathcal{N}| + |(M \setminus \mathcal{N}) \setminus B_{\pi_1}|,
\]
    from where (i) if $|M \cap \mathcal{N}| > 0$, then $w(M) \geq n$, and (ii) if $|M \cap \mathcal{N}| = 0$, then $w(M) = n - |M \cap B_{\pi_1}|$. 

Therefore any perfect matching $M$ of $K_{n,n}$ satisfies conditions 1 and 2 of~\Cref{obs:pm-inters} if and only if $w(M) = n - b_1$. 

Now, let $t := |\mathcal{C}| + |\mathcal{F}|$. 
 Let $C_1, \ldots, C_m$ be an arbitrary enumeration of $\mathcal{C}$, and $e_1, \ldots, e_f$ an arbitrary enumeration of $\mathcal{F}$.
For each $1 \leq i \leq t$, let $\mathbf{e}_i \in \mathbb{F}_2^t$ denote the $i$-th standard basis vector. We now define a label function $\gamma\colon E(K_{n,n})\to \mathbb{F}_2^t$ by
\begin{equation}\label{eq:gammadef}
\gamma(e) = \left(\sum_{i \text{ s.t. }\! e \in C_i} \mathbf{e}_i\right) +  \left(\sum_{i=1}^f \mathbf{e}_{i+m} \cdot \mathbbm{1}_{[e  =e_i]}\right).
\end{equation}


 We are now ready for the main claim that will prove the theorem. Let us write $\mathbf{1}$ for the vector of all ones in $\mathbb{F}_2^t$.

\begin{claim}\label{claim:rp}
    There exists a secret $\sigma^\star$ compatible with the transcript $(\pi_1, \dots, \pi_T), (b_1,\dots, b_T)$ if and only if 
$K_{n,n}$ has a perfect matching $M$ such that 
\(
\gamma(M) := \sum_{e \in M} \gamma(e) = \mathbf 1
\)
and \(w(M) = n-b_1\).
\end{claim}
\begin{claimproof}
We have shown above that a matching $M$ of $K_{n, n}$ satisfies conditions 1 and 2 of~\Cref{obs:pm-inters} if and only if $w(M) = n - b_1$.  

Now we show that $M$ satisfies conditions 3 and 4 if and only if $\gamma(M) = \mathbf{1}$.
Let us use notation $\mathbf{v}[j]$ for the $j$-th coordinate of a vector $\mathbf{v}$, and we will show that $\gamma(M)[j]=1$ for $j \in \{1, \dots, m+f\}$. We consider two cases:

\begin{itemize}
    \item \textbf{(Case 1: $j \in \{m+1, \dots, m+f\}$). } In this case, $j=i+m$ for a unique $i\in\{1,\dots,f\}$, so only the second sum of~\eqref{eq:gammadef} is non-zero, so we have
\[
\gamma(M)[j] = \sum_{e \in M} \mathbbm{1}_{[e = e_i]} = |M \cap \{ e_i \}| \bmod  2 = |M \cap \{ e_i \}|,
\]
from where $\gamma(M)[j] = 1$ if and only if $|M \cap \{e_i\}| = 1$ (i.e., condition 3. holds for $e_i$). 
    \item \textbf{(Case 2: $j \in \{1, \dots, m\}$). } In this case, only the first sum of~\eqref{eq:gammadef} is non-zero, so we have
\[
    \gamma(M)[j] = \sum_{e \in M}\; \sum_{i \text{ s.t. }\!e \in C_i} \mathbbm{1}_{i = j} = \sum_{e \in M} \mathbbm{1}_{e \in C_j} = |M \cap C_j| \bmod 2,
\] from where $\gamma(M)[j] = 1$ if and only if $|M \cap C_j|$ is odd. But since $|C_j| = 2$, the only possibility is $|M \cap C_j| = 1$. Thus, $\gamma(M)[j] = 1$ if and only if condition 4 holds for $C_j$. 
\end{itemize}
This concludes the proof.
\end{claimproof}

Applying~\Cref{thm:dim-matching} to the instance $(G,w,\gamma,\mathbf 1)$, and then using~\Cref{claim:rp}, we conclude the proof. The final runtime results from the fact that $\max_{e} w(e) = n$.
\end{proof}

\bibliography{references}


\end{document}